\definecolor{ForestGreen}{rgb}{0.1333,0.5451,0.1333}
\definecolor{DarkRed}{rgb}{0.65,0,0}
\definecolor{Red}{rgb}{1,0,0}
\definecolor{DarkRed}{rgb}{0.5,0.1,0.1}
\definecolor{DarkBlue}{rgb}{0.1,0.1,0.5}
\declaretheorem[numberwithin=section]{theorem}
\declaretheorem[numberlike=theorem]{lemma}
\declaretheorem[numberlike=theorem]{proposition}
\declaretheorem[numberlike=theorem]{corollary}
\declaretheorem[numberlike=theorem]{claim}
\theoremstyle{definition}
\declaretheorem[numberlike=lemma]{definition}
\newcommand{\polylog}{\ensuremath{\poly\log}}
\renewcommand{\paragraph}[1]{\medskip\noindent{\bf #1}\xspace}
\renewcommand{\epsilon}{\ensuremath{\varepsilon}}
\let\set\relax
\DeclarePairedDelimiter{\set}{\lbrace}{\rbrace}%
\DeclarePairedDelimiter{\parens}{\lparen}{\rparen}%
\let\Pr\relax
\DeclareMathOperator*{\Pr}{\ensuremath{\mathsf{Pr}}}
\DeclareMathOperator*{\Exp}{\ensuremath{{\mathsf{E}}}}
\DeclareMathOperator*{\poly}{\operatorname{poly}}
\DeclareMathOperator{\dist}{dist}
\newcommand{\calM}{\mathcal M}
\definecolor{codegreen}{rgb}{0,0.6,0}
\definecolor{codegray}{rgb}{0.5,0.5,0.5}
\definecolor{codepurple}{rgb}{0.58,0,0.82}
\definecolor{backcolour}{rgb}{0.95,0.95,0.92}
\lstdefinestyle{mystyle}{
  backgroundcolor=\color{backcolour},
  commentstyle=\color{codegreen},
  keywordstyle=\color{magenta},
  numberstyle=\tiny\color{codegray},
  stringstyle=\color{codepurple},
  basicstyle=\ttfamily\footnotesize,
  breakatwhitespace=false,
  breaklines=true,
  captionpos=b,
  keepspaces=true,
  numbers=left,
  numbersep=5pt,
  showspaces=false,
  showstringspaces=false,
  showtabs=false,
  tabsize=2
}
\DeclareMathSymbol{\mhyphen}{\mathord}{AMSa}{"39}
\newcommand{\brak}[1]{\left(#1\right)}
\newcommand{\Expp}[1]{\mathbb{E}[{#1}]}
\newcommand{\BO}[1]{O(#1)}
\newcommand{\TO}[1]{\tilde{O}(#1)}
\newcommand{\Omc}[1][1]{\Omega\brak{#1}}
\renewcommand{\sl}{{\sqrt{\log n}}}
\newcommand{\SL}{{2^{\sqrt{\log n}}}}
\newcommand{\whp}{%
w.h.p.\@ifnextchar.{\@gobble}{\xspace}%
}
\newcounter{blackbox}%
\newenvironment{blackbox}[1]{%
  \refstepcounter{blackbox}%
  \begin{mdframed}[
      frametitle={Algorithm \theblackbox. #1},
      backgroundcolor=gray!25,
      splittopskip=\topskip,
      skipabove=\topsep,
      skipbelow=\topsep,
      userdefinedwidth=\textwidth
    ]%
    }{%
  \end{mdframed}%
}
\crefname{theorem}{Theorem}{Theorems}
\crefname{section}{Section}{Sections}
\crefname{lemma}{Lemma}{Lemmas}
\crefname{observation}{Observation}{Observations}
\crefname{algorithm}{Algorithm}{Algorithms}
\crefname{step}{Step}{Steps}
\crefname{fact}{Fact}{Facts}
\crefname{claim}{Claim}{Claims}
\crefname{conclusion}{Conclusion}{Conclusions}
\crefname{part}{Property}{Properties}
\crefname{frame}{Frame}{Frames}
\crefname{blackbox}{Black Box}{Black Boxes}
\author{Keren Censor-Hillel \thanks{Department of Computer Science, Technion. \texttt{ckeren@cs.technion.ac.il}. The research is supported in part by the Israel Science Foundation (grant 529/23).} 
\and
Tomer Even \thanks{Department of Computer Science, Technion. \texttt{tomer.even@campus.technion.ac.il}.} 
\and 
Maxime Flin \thanks{Department of Computer Science, Reykjavik University, Reykjavik, Iceland. \texttt{maximef@ru.is}. The research is supported in part by the Icelandic Research Foundation (grant no.~2310015).}
\and 
Magn\'{u}s M. Halld\'{o}rsson \thanks{Department of Computer Science, Reykjavik University, Reykjavik, Iceland. \texttt{magnusmh@gmail.com}. The research is supported in part by the Icelandic Research Foundation (grant no.~217965).}
}
\newcommand{\model}[1]{{\textsf{#1}}\xspace}
\newcommand{\CC}{\model{Congested\ Clique}}
\newcommand{\local}{\model{LOCAL}}
\newcommand{\LOCAL}{\model{LOCAL}}
\newcommand{\congest}{\model{CONGEST}}
\newcommand{\alg}[1]{\textsf{#1}\xspace}
\newcommand{\ReduceMis}{\alg{ReduceMIS}}
\newcommand{\ReduceMM}{\alg{ReduceMM}}
\newcommand{\OneShotReduceMIS}{\alg{OneShotMIS}}
\newcommand{\OpRoute}{\alg{OpRoute}}
\newcommand{\OneShotReduceMISNonUnif}{\alg{OneShotMISNonUniform}}
\newcommand{\twosqlg}{2^{O(\sqrt{\log n})}}
\newcommand{\betag}{\beta(G)}
\newcommand{\llbet}{\log(\log(\betag)/\sqrt{\log n}) + 1}
\newcommand{\RC}{\BO{\llbet}}
\renewcommand{\Pr}[1]{{\mathrm{Pr}}[{#1}]}
\renewcommand{\Exp}[1]{\mathbb{E} [{#1}]}
\begin{document}
\title{When MIS and Maximal Matching are Easy in the Congested Clique}
\date{}
\setcounter{tocdepth}{3}
\maketitle

\begin{abstract}
    Two of the most fundamental distributed symmetry-breaking problems are that of finding a maximal independent set (MIS) and a maximal matching (MM) in a graph. It is a major open question whether these problems can be solved in constant rounds of the all-to-all communication model of \CC, with $O(\log\log \Delta)$ being the best upper bound known (where $\Delta$ is the maximum degree).

    We explore in this paper the boundary of the feasible, asking for \emph{which graphs} we can solve the problems in constant rounds. We find that for several graph parameters, ranging from sparse to highly dense graphs, the problems do have a constant-round solution.
    In particular, we give algorithms that run in constant rounds when:
    \begin{itemize}
        \item the average degree is at most $d(G) \le 2^{O(\sqrt{\log n})}$,
        \item the neighborhood independence number is at most $\beta(G) \le 2^{O(\sqrt{\log n})}$, or
        \item the independence number is at most $\alpha(G) \le |V(G)|/d(G)^{\mu}$, for any constant $\mu > 0$.
    \end{itemize}
    Further, we establish that these are tight bounds for the known methods, for all three parameters, suggesting that new ideas are needed for further progress.
\end{abstract}

\section{Introduction}
Symmetry breaking is fundamental to distributed computing. Two of the most fundamental symmetry-breaking problems are that of computing a maximal independent set (MIS) and a maximal matching (MM) in the distributed network. We consider these problems in the \CC model, which has a machine for each graph node and in each synchronous round, each machine can send a $O(\log n)$-bit message to every other node.

There is now a fairly good understanding of the complexity of these problems in the central model of locality, \LOCAL. There is an algorithm that runs in $\BO{ \log \Delta + (\log\log n)^{O(1)} }$ rounds \cite{Ghaffari16,ghaffari2021improved,FGGKR23} that works even in the bandwidth-constrained CONGEST model and a lower bound of $\Omega\parens*{\min\set*{\frac{\log\Delta}{\log\log \Delta}, \sqrt{\frac{\log n}{\log\log n}}}}$ \cite{KMW16}
for the \LOCAL model.
Much less is known in the \CC model.

Ghaffari \cite{G17} gave a $\TO{\sqrt{\log\Delta}}$-round \CC algorithm for MIS
which beats the \local lower bounds
when $\Delta$ is small enough. It was followed by \cite{Kon18,ghaffari2018improved}
which improved the complexity to $O(\log \log \Delta)$ by using a sparsification
technique pioneered by \cite{ahn2015correlation} for correlation clustering in streaming.
For MM, the state of the art is the $O(\log\log \Delta)$-round algorithm
of Behnezhad, Hajiaghai and Harris \cite{behnezhad2023exponentially}
which implements a sparsification mechanism comparable to the one for MIS.
Meanwhile, problems such as minimum spanning tree \cite{JN18}, $(\Delta+1)$-vertex-coloring \cite{CFGUZ19} or 2-ruling set \cite{cambus2023time} have $O(1)$-round algorithms in \CC. The 2-ruling set problem in particular is a slight relaxation of the MIS problem which has a similar complexity in the \local model \cite{BBKO22sicomp,BBKO22stoc}.

Proving lower bounds in \CC is known to be hard \cite{DKO13}.
Interestingly, as the algorithm of \cite{ghaffari2018improved}
can be implemented in $O(\log\log\Delta)$ passes of streaming,
a very recent streaming lower bound by Assadi, Konrad, Naidu and Sundaresan \cite{Assadi0NS24}
shows that such algorithms cannot break the $O(\log\log n)$ barrier in general graphs.
Contrary to MIS, there exist no such results for MM because it is trivial to compute in the streaming model.
Unconditional lower bounds for MIS and MM are only known in the much weaker
\emph{broadcast congested clique}, where nodes must send the same message to everyone in a round.
In this model, Assadi, Kol and Zhang \cite{AK022} recently showed that any MIS or MM
algorithm with $(\log n)^{O(1)}$ bandwidth needs $\Omega(\log\log n)$ rounds.
Meanwhile, the best upper bound in this model is $O(\log n)$ rounds \cite{luby86}.

Turning back to \CC, we change perspective and ask:
\begin{quote}
    \emph{\textbf{When} can MIS and MM be solved in $O(1)$ rounds of \CC?}
\end{quote}
In other words, for what kind of graphs can we solve the problems fast, and for which ones are current techniques insufficient?

We consider both sparse and dense families of graphs, pinpointing properties that allow us to quickly converge toward a solution. On one hand, we focus on graphs of low-to-medium \emph{average degree}, which subsumes other sparse graphs such as low-degree or low-arboricity graphs. On the other hand, we consider graphs with sublinear \emph{independence number}, which are graphs that can be highly dense. Another related notion is \emph{neighborhood independence number}, which is the largest number of independent nodes in the neighborhood of a given node.

\paragraph{Our Results.}
We summarize our algorithmic results succinctly as follows.
Let $n$ be the number of nodes in the graph $G$, $d(G)$ its average degree, $\alpha(G)$ the size of the largest independent set, and $\beta(G) = \max_{v\in V(G)} \alpha(G[N(v)])$ the neighborhood independence number.
We say that a problem is \emph{easy} if it can be solved in $O(1)$ rounds of \CC (on the respective graph class.
\begin{tcolorbox}
    MIS and MM are easy on graphs of \emph{average degree} at most $d(G) \le \twosqlg$.
\end{tcolorbox}
\noindent
The same holds in terms of neighborhood independence:
\begin{tcolorbox}
    MIS and MM are easy on
    graphs of \emph{neighborhood independence} at most $\beta(G) \le \twosqlg$.
\end{tcolorbox}
\noindent
Additionally, we get bounds in terms of independence number.
\begin{tcolorbox}
    MIS and MM are easy on
    graphs with \emph{independence number} at most $\alpha(G) \le n/d(G)^\mu$, for any constant $\mu > 0$.
\end{tcolorbox}

For larger ranges of these parameters, we get tradeoffs, with complexities as a function of the parameters: $O(\log (\log d(G) / \sqrt{\log n})+1))$, $O(\llbet)$, and $O(\log 1/\mu + 1)$, respectively.
Note that classes of graph where those parameters are small have been studied and are known to appear in practice; see \cref{sec:related-work} for more details.

Our results are tight for all three graph parameters. More generally, we give a family of hard graphs
for which current techniques require $\Omega(\log\log n)$ rounds. The parameters $\Delta,d(G),\beta(G),\alpha(G)$ in our hard graphs are such, that our bounds for them are tight given the known methods.
Our construction provides concrete evidence of instances for which new approaches are needed.

\subsection{Technical Overview}
\paragraph{Upper bounds.}
Our starting points for computing MIS and MM are the algorithms of
\cite{ghaffari2018improved,behnezhad2023exponentially} which take $\BO{\log\log(\Delta)}$ rounds for MIS and MM, respectively. They both use degree reduction for sparsification.

For MIS, the algorithm samples a subset of vertices that induce a graph with $\BO{n}$ edges. One designated vertex collects the edges of this graph and locally computes an MIS $S$ on it. To extend the set $S$ into an MIS, the algorithm then proceeds by computing an MIS on $G[V\setminus\Gamma(S)]$, where $\Gamma(S)$ contains all vertices in $S$ or with at least one neighbor in it (i.e., vertices that are \emph{covered} by $S$).
Computing an MIS on this residual graph $G[V\setminus\Gamma(S)]$ is an easier task as it has a maximal degree of $\TO{\sqrt{\Delta}}$, \whp. This algorithm is repeated
$\BO{\log\log(\Delta)}$ times to get a residual graph of a maximum degree $\polylog(n)$, at which point it switches to simulating a \local algorithm by letting each vertex learn its $\log\log(\Delta)$ neighborhood.

\noindent
~\\\textbf{Average degree:} As a warm-up to our main technical contribution, we observe that one can push the degree-reduction method beyond the parameter of the maximum degree, which can be replaced by the \emph{average degree}.
The key idea is to sample according to the average degree, rather than the maximum degree, and leverage a concentration bound that allows for dependencies.
With efficient routing, this leads us to a constant-round algorithm when the average degree is at most $\twosqlg$, which we use next.

~\\\noindent\textbf{Independence number:}  One of our main technical contributions is to bound the running time of the repeated degree reduction algorithm in terms of the \emph{independence number} of the graph.
To give a flavor of this, consider a graph $G$ with $n$ vertices, average degree $d(G)$, and an independence number as small as possible, i.e., $\alpha(G)=n/(d(G)+1)$.
After one round of degree reduction, we get an independent set $S$ and a residual graph $H=G[V\setminus\Gamma(S)]$, where the maximal degree in $H$ is $\TO{\sqrt{d(G)}}$.
Note that since $H$ is an induced subgraph of $G$, its independence number cannot be larger than that of $G$. We use Tur\'an's bound to obtain the following bound:
$    \frac{n(H)}{2d(H)}\leq \alpha(H)\leq \alpha(G)\leq n/(d(G)+1)$.
Further, we can bound the number of edges of $H$ by
multiplying both sides by $(d(H))^2$ to get that
$   |E(H)|= n(H)\cdot d(H)/2\leq n\cdot \frac{(d(H))^2}{d(G)+1}\leq \TO{n}$,
where the last inequality follows as $d(H)=\TO{\sqrt{d(G)}}$.
This, together with efficient routing, leads us to a constant-round algorithm, as the residual graph has average degree is at most $\twosqlg$.

This toy example was only for $\alpha(G)=n/(d+1)$. We extend it by showing a $\BO{\log(1/\mu) +1}$ round algorithm for MIS, where $\mu>0$ is such that the independence number of $G$ is $\alpha(G)=n/(d+1)^\mu$.

This framework also applies for MM. Consider the algorithm of \cite{behnezhad2023exponentially}. It partitions the vertices of the graph into random sets, such that each set induces a graph with at most $\BO{n}$ edges. Then, a maximal matching is computed on each part. Let $M'$ denote the union of those computed matchings. The algorithm then proceeds with a clean-up phase, in which vertices of high degree are matched. Let $M$ denote the union of  $M'$ and the edges computed in the clean-up phase. The algorithm has the property that $G[V\setminus V(M)]$ has a maximal degree of $\TO{\Delta^{0.92}}$.

As for the MIS case, we turn this into an algorithm which reduces the maximal degree in terms of the average degree and show that if the input graph has a small independence number then we can compute an MM faster.

~\\\noindent\textbf{Neighborhood independence:}
Our second main technical contribution involves graphs with small neighborhood independence. A graph $G$ has neighborhood independence $\betag$ if the size of any independent set in the neighborhood of any vertex is at most $\betag$.
Note that $\betag$ and $\alpha(G)$ are not comparable as $\betag\leq \Delta(G)$, and $n/(\Delta(G)+1)\leq \alpha(G)\leq n\betag/\delta(G)$.

Suppose a graph has a minimum degree of at least $(d(G))^{1/2}$, and that $\betag$ is, say, at most $(d(G))^{1/4}$. Then we get that $\alpha(G)\leq n/(d(G))^{1/4}$, allowing us to use our framework for low independence number. This motivates us to find subgraphs for which this holds. However, a direct degree reduction is insufficient.
The challenge here is that the minimum degree might be much smaller than the average degree. 

Our approach for this case is to partition the vertices into $\log n $ degree classes where the $i$-th degree class,
denoted by $V_i$, contains vertices of roughly the same degrees, and then use
the fact that each $G[V_i]$ has relatively small independence number. Our final step here is to replace sampling of vertices with a uniform probability of $1/\sqrt{d(G)}$, by a \emph{non-uniform} probability of $1/\sqrt{\deg_G(v)}$ (such sampling was applied for 2-ruling sets in \cite{cambus2023time}).
We show that this algorithm has the property that within the residual graph, each node
has at most $\TO{\sqrt{x}}$ neighbors that had degrees in $G$ in the range $[x,2x]$.
This allows us to show that after one iteration, we get a residual graph $H$, where $H[V_i]$ contains only $\TO{n\betag}$ edges w.h.p, and therefore so does $H$. We can therefore find an MIS on $H$ in time $\RC$ \whp.

For matching, our algorithm in terms of $\betag$ is simpler. We partition the vertices into degree classes as before. In parallel, each class computes a matching on the edges inside the class, using the degree reduction algorithm. We claim that finding a maximal matching in each degree class takes only a constant number of rounds, after which, the residual graph $H$ contains only
$\TO{n\betag}$ edges \whp. The crucial point here is as follows. If $W$ is
a set of unmatched vertices inside the same degree class, i.e., of degree between $a$ and $2a$, then $W$ is an independent set, from which we obtain that
$\abs{W}\leq n\betag/a$.
Therefore, these vertices contribute at most $n\betag$ edges to $H$.

\paragraph{Tight Analysis for Existing Algorithms.}
We provide a hard distribution over graphs, on which the state-of-the-art methods require $\Omega(\log\log n)$ iterations to complete. Notably, the maximum degree in these graphs, the average degree, and the global- and neighborhood- independence numbers are all values for which our bounds with respect to them become tight given the known methods.\footnote{We do not know if this is the case for the construction of \cite{Assadi0NS24}.}
~\\\noindent\textbf{MIS:}
We provide a tight analysis for the running time of all algorithms that compute an MIS using the following framework:
\begin{enumerate*}[label=(\roman*)]
    \item samples a set of vertices $S$, such that the induced subgraph $G[S]$ has $\BO{n}$ edges w.h.p,
    \item computes a greedy MIS over $G[S]$,
    \item removes all covered vertices, and repeats.
\end{enumerate*}
The high-level intuition behind our construction is as follows. We split the $n$ vertices into roughly $\log\log n$ levels, and denote by $k$ the number of vertices in each level. Within level $i$, we partition the vertices into $k^{1-1/4^i}$ equal-size clusters, each forming a clique.
Note that if the algorithm sampled at least one vertex $v$ from some cluster $C$, then a vertex from $C$ joins the MIS, and all other vertices in $C$ are covered.
We say that a cluster is sampled if at least one of its vertices is sampled.
Yet, if for a given cluster, no vertex gets sampled in a certain iteration, then none joins the independent set, then none of its vertices get covered.
As the level number increases, so does its number of clusters, while their size decreases, making them less likely to be sampled.

The running time of an algorithm that samples vertices with some uniform probability might be $\Omc[\log\log n]$, but an algorithm that samples each vertex with probability $1/\sqrt{d_v}$, is likely to hit all clusters very quickly (plus it can be easily handled in a single iteration of a Luby approach).

To overcome this, we add random edges to the graph, obtaining a distribution over graphs. A random edge is added between two vertices in levels $i,j$ for $i<j$ with probability that is roughly the inverse of the number of clusters in level $i$ (up to $\polylog n$ factors).
The obtained graph becomes nearly regular, and then it is unlikely that the smaller clusters (of the higher levels) are sampled, even if the sampling probability is based on the vertices degree.
The random edges complicate the analysis as a vertex is covered when its cluster is sampled, or when a neighbor of the vertex in a different cluster is sampled.
A crucial step in this analysis requires proving that \whp at least $1-o(1)$ fraction of the vertices in the $r$-th level are not covered before iteration $r$ is over.
Since the graph has roughly $\log\log n$ levels, we conclude that at least $\Omc[\log\log n]$ iterations are needed to cover all vertices.
The full details of this proof appear in \Cref{sec:hard}.

~\\\noindent\textbf{MM:} We also provide a tight analysis for the running time of all algorithms that compute an MM using the following framework:
\begin{enumerate*}[label=(\roman*)]
    \item randomly partitions the vertices into sets $(S_1,\ldots,S_k)$,
    such that for every $i$ the induced subgraph $G[S_i]$ has $\BO{n}$ edges w.h.p,
    \item computes a greedy MM over each $G[S_i]$,
    \item invokes a clean-up phase that matches high-degree vertices,
    \item removes all covered vertices, and repeats.
\end{enumerate*}
Our hard graph distribution for MM is only a slight modification of the one for MIS: The number of vertices in each level is increased by $4$ as the level increases. This ensures that any maximal matching must ``reveal'' the last level of the graph.
We also ensure that graphs taken from this distribution have a perfect matching, by setting the number of vertices in each cluster to be even.

A crucial step in the analysis of the running time of such algorithm requires proving that \whp at least $\frac{2}{3}\cdot (1-o(1))$ fraction of the vertices in the $r$-th level are not matched before iteration $r$ is over.
Using this property and the fact that $3/4$ fraction of the vertices are in the last level, and $3/4$ fraction of the remaining vertices are in the second-to-last level, we conclude that after $\ell/2$ iterations, where $\ell$ is the number of levels, at least $\frac{8 (1-o(1))}{9}$ fraction of the vertices are not matched, and therefore the computed matching is not maximal, as any maximal matching must in a graph with a perfect matching must match at least half of the vertices.
Therefore, the running time of the algorithm must be at least $\Omc[\log\log n]$ \whp.

\subsection{Related Work}
\label{sec:related-work}

\paragraph{Algorithms in \local and \congest.}
Maximal Independent Set and Maximal Matching have long been studied in the classical \local and \congest models (see, e.g., \cite{luby86,panconesi95,kuhn2010local,barenboim2016locality,Ghaffari16,rozhovn2019polylogarithmic,fischer2020improved,balliu2021lower,ghaffari2021improved,FGGKR23,ghaffari2021improved,balliu2023distributed}).
Combining recent advances on local rounding \cite{fischer2020improved,FGGKR23} and network decomposition \cite{rozhovn2019polylogarithmic,ghaffari2021improved,GHIR23},
authors of \cite{GG23} gave a $\TO{\log^2 n}$-round
state-of-the-art deterministic \local upper bound.
Using the shattering technique \cite{barenboim2016locality,Ghaffari16} this yields a
$O(\log\Delta) + \TO{\log^2\log n}$ round \local algorithm.
As for \local lower bounds, authors of \cite{balliu2021lower} showed that any \local algorithm that solves MIS or MM with probability at least $1-1/n$ needs $\Omega\parens*{\min\set*{\Delta, \frac{\log\log n}{\log\log\log n}}}$. Hence, local algorithms are nearly optimal.

\paragraph{Algorithms for Maximum Matching in \CC.}
For matching, using different techniques,
authors of \cite{ghaffari2018improved,AssadiBBMS19}
gave an algorithm to $(1+\epsilon)$-approximate the integral maximum matching
running in $O(\log\log n)$ rounds.
Authors of \cite{behnezhad2023exponentially} showed that a simple degree reduction
algorithm could be implemented for matching, resulting in a $O(\log\log\Delta)$-round
algorithm for maximal matching.
A recent breakthrough of \cite{FMU22} shows that given a \CC algorithm
for maximal matching, one can boost the approximation factor
to $1+\epsilon$ by paying only a $(1/\epsilon)^{O(1)}$ factor
of overhead in the round complexity.

\paragraph{Algorithms for 2-ruling set.}
It is interesting to note that a weaker variant of the MIS
called \emph{2-ruling set} has also been studied in the \CC \cite{BernsHP12,hegeman2014near,cambus2023time,GilibertiP24}.
In this problem, we relax the domination requirement from MIS: every vertex should be within distance two from a node in the independent set.
Very recently, Cambus, Kuhn, Pai and Uitto \cite{cambus2023time} showed this problem has $O(1)$-round algorithm. Deterministic algorithms for 2-ruling set were given by \cite{PaiP22,GilibertiP24}.

\paragraph{Sparse Graphs.}
Arboricity has been intensively studied in \local and \CC
(see, e.g., \cite{BarenboimE10,barenboim2016locality,BarenboimK18,GhaffariS19}).
Since the arboricity is bounded by the average-degree, our results show $O(1)$ round \CC algorithms
for MIS and MM in low arboricity graphs as well.

\paragraph{Small Independence Number.}
While there is no concrete characterization of graphs with small independence number,
random graphs with degree $d$ are known to have independence number $\approx 2n/d \log d$ (see, e.g., \cite{bollobas1976cliques,frieze1990independence} for more precise bounds).
In particular, for random graphs, the Tur\'an bound is nearly-tight and
our algorithm for low independence number ends in $O(1)$ rounds.

\paragraph{Bounded Neighborhood Independence.}
Graphs of bounded neighborhood independence capture a wide range of dense graphs.
This includes line graphs, bounded-growth graphs (e.g., interval graphs \cite{HalldorssonKS03} or disk graphs \cite{HalldorssonK15}).
Graphs with $\beta+1$-bounded neighborhood independence are equivalent to $\beta$-claw-free graphs (i.e., that do not contain $K_{1,\beta+1}$). Such graphs have been extensively studied in structural graph theory (see, e.g., \cite{ChudnovskyS05}).
Motivated by applications to wireless networks,
distributed algorithms for MIS and MM on graphs of bounded independence
have also been studied (see, e.g., \cite{GfellerV07,KuhnWZ08,schneider2010optimal,BarenboimE13,HalldorssonK15}).
More recently, Assadi and Solomon \cite{AssadiS19} gave sublinear algorithms for MIS and MM.
Milenkovi{\'c} and Solomon \cite{milenkovic2020unified} showed that one could compute
$(1+\epsilon)$-matching sparsifiers of size $\BO{n\beta/\epsilon \log(1/\epsilon)}$
on graphs with $\beta$-bounded neighborhood-independence
by simply having each vertex sample adjacent edges.

\paragraph{Organisation of the Paper.}
Notations, definitions and key results from the literature can be found in \cref{sec:prelim}.
\cref{sec:deg-reduction} describes our modified degree reduction algorithm for low average degree graphs.
\cref{sec:spar-independence} provides our algorithms with running time in terms of the neighborhood or global independence numbers.
Our constructions for hard graphs appears in \cref{sec:hard}.

\section{Preliminaries}
\label{sec:prelim}

\paragraph{Graph Theoretic Notation.}
For a graph $H$, let $d(H)$ be its average degree, $\Delta(H)$ its maximum degree, and $n(H)=|V(H)|$ its number of vertices. %
For a vertex $v$, let $N_H(v)$ its set of neighbors, and $\deg_H(v)=|N_H(v)|$ its degree.
The hop-distance between $u$ and $v$ is $\dist_H(u,v)$.
For a subset $S \subseteq V$, let $N_G(S) = \bigcup_{v\in S}N_G(v)$, $\Gamma_G(S)=S\cup N_G(S)$, and for sets $A,B \subseteq V(H)$, let $E_H(A,B) = \{e=(a,b)\in E(H): a\in A, b\in B\}$.

Throughout this paper, the input graph is $G=(V(G),E(G))$, with $n=|V(G)|$ and $m=|E(G)|$.
We denote the independence number of the graph by $\alpha(G)$.
Let $\betag\triangleq\max_{v\in V}\alpha(G[N(v)])$ denote the maximum independence number of an induced neighborhood subgraph.%

We say that a vertex set $A$ is \emph{covered} by an independent set $S$ if $A\subseteq \Gamma(S)$.
The lexicographically first MIS (LFMIS) is defined with respect to an
ordering $v_1, v_2, \ldots, v_n$ of the vertices as the outcome of the following greedy algorithm:
beginning with an empty independent set $S$,
loop through vertices in the given order and add a vertex to $S$ if it is uncovered when reached by the algorithm.
The lexicographically first MM (LFMM) is defined similarly, with respect to adding edges to the matching.

\begin{theorem}[Lenzen's Routing Lemma \cite{lenzen2013optimal}]\label{lemma:routing}
    The following is equivalent to the \CC model:
    In every round each node can send (receive) $\set{b_i}_{i\in[n]}$ bits to (from) the $i$-th node, for any sequence $\set{b_i}_{i\in[n]}$ satisfying $\sum_{i=1}^n b_i=\BO{n\log n}$. In other words, any routing scheme in which no vertex sends or receives more than $\BO{n}$ messages can be performed in $\BO{1}$ rounds.
\end{theorem}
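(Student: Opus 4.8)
\medskip
\noindent\textbf{Proof idea.}
Since ordinary \CC is trivially a special case of the model in the statement (a standard round is exactly the case $b_i = \Theta(\log n)$ for all $i$), the plan is to prove only the converse: one round of the flexible model can be simulated in $O(1)$ rounds of ordinary \CC. First I would cut each node's outgoing bit stream into messages of $\Theta(\log n)$ bits, so that every node is the source of at most $s = O(n)$ messages and the destination of at most $s = O(n)$ messages, each message fitting into the $O(\log n)$-bit capacity of one link for one round. Encode the task as an $n\times n$ non-negative integer matrix $D$, where $D_{uv}$ is the number of messages from $u$ to $v$; then all row sums and all column sums of $D$ are at most $s$.

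The first real step is to \emph{regularize} $D$: add dummy messages so that every row sum and every column sum becomes exactly $L := s$. This is always feasible because the total row-deficit equals the total column-deficit (both equal $nL - \sum_{u,v}D_{uv}$), so the associated transportation problem has an integral solution, which can be written down explicitly by the northwest-corner rule. Viewing the regularized matrix as an $L$-regular bipartite multigraph on two copies of $[n]$ (with $(D+D')_{uv}$ parallel $uv$-edges), K\"onig's edge-coloring theorem yields a proper $L$-edge-coloring, and $L$-regularity forces each of the $L$ color classes to be a perfect matching, i.e.\ a permutation. Hence the regularized demand decomposes as $P_1 + \cdots + P_L$ with each $P_t$ a permutation matrix; along slot $(u, P_t(u))$ sits either a genuine message from $u$ to $P_t(u)$ or a dummy to be discarded.

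The second step routes these $L$ permutations. Since $L = O(n)$, I would split them into $\lceil L/n\rceil = O(1)$ groups of at most $n$ permutations each, and route a group $\sigma_1,\dots,\sigma_n$ (padding with identities if fewer than $n$) in just two \CC rounds by a gather-then-scatter relay scheme. In round one, for every $k$ and every $u$, node $u$ sends the message it owes under $\sigma_k$ to relay node $k$; this is one message per ordered pair of nodes, hence one legal round, after which relay $k$ holds all of permutation $\sigma_k$. In round two, for every $k$ and every $u$, relay $k$ forwards that message to $\sigma_k(u)$; since $\sigma_k$ is a bijection, relay $k$ sends exactly one message to each node and each node receives exactly one from each relay, so this too is one legal round, and every message reaches its destination. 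Over all $O(1)$ groups this costs $O(1)$ rounds; discarding the dummy messages completes the simulation.

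The part I expect to be the crux is recognizing that relaying through intermediate nodes is genuinely \emph{necessary} — a direct schedule is hopeless as soon as one pair $(u,v)$ carries $\Theta(n)$ messages, which cannot be sent in $o(n)$ rounds without relays — together with the observation that the two-round relay delivers an \emph{arbitrary} permutation and, crucially, handles $n$ permutations simultaneously because permutation $\sigma_k$ gets node $k$ as its own private relay. The decomposition step is then a clean invocation of bipartite edge-coloring, with the only preliminary being the padding lemma (feasibility of the transportation problem, because row- and column-deficits have equal total); everything else is bookkeeping about message sizes and constant factors.
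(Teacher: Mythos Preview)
The paper does not prove this theorem; it is quoted from Lenzen~\cite{lenzen2013optimal} as a black box in the preliminaries, so there is no ``paper's proof'' to compare against.

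Your argument is correct as an \emph{existence} proof that an $O(1)$-round schedule exists, but it has a genuine gap as a \emph{distributed algorithm}, which is what the equivalence of models requires. The K\"onig decomposition $D+D' = P_1+\cdots+P_L$ is a global object: to compute it one needs the entire demand matrix $D$, whereas each node $u$ initially knows only its own row $D_{u,\cdot}$ (and perhaps its column, after one round). Concretely, in your round~1 node $u$ must know, for each $k$, which of its outgoing messages belongs to permutation $\sigma_k$; this is exactly the color that the global edge-coloring assigns to each of $u$'s outgoing edges, and $u$ cannot determine those colors without knowing how the coloring resolves conflicts at \emph{other} nodes' columns. (Round~2 is fine, since the relay can read the destination from the message header.) Your proof therefore shows that a benevolent scheduler with full knowledge of $D$ could route everything in $O(1)$ rounds, but not that the nodes themselves can.

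Lenzen's actual algorithm avoids this by being constructive and local: roughly, each node sorts its outgoing messages by destination and sends its $j$-th message to intermediate node $j$; a counting/prefix-sum step lets intermediaries compute exactly where to forward, using only information they already hold or can obtain in $O(1)$ all-to-all rounds. No global decomposition is ever computed. If you want to rescue your approach, you would need to either (i) show the edge-coloring can itself be computed in $O(1)$ \CC rounds, or (ii) replace it with an oblivious spreading step as Lenzen does.
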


\subsection{Opportunistic Routing}\label{sec:opr}\label{sec:opp route}

We leverage a powerful method of Chechik and Zhang \cite{ST22} to learn $r$-hop neighborhoods in the \CC using opportunistic routing.
A deterministic variant of this method was presented in \cite{Bui2024ImprovedAA}.
\begin{restatable}[Opportunistic Routing, \cite{ST22}]{proposition}{ThmOpR}\label{lemma:op-route}
    Let $c \ge 1$ be a constant.
    Let $G$ be an $n$-vertex graph of maximum degree $\Delta$.
    For any $r$, possibly depending on $n$, such that
    \begin{equation}
        \label{eq:assumption-op-routing}
        \Delta^{2(r+1)} ((r+1)\log \Delta + c\log n) \le n \ ,
    \end{equation}
    there exists a constant round randomized \CC algorithm such that, w.p. $1-n^{-\Theta(c)}$, each vertex learns its $r$-hop neighborhood $E^r(v) = \set{uw\in E, \dist_G(v, u) \le r \text{ and } \dist_G(v, w) \le r}$.
\end{restatable}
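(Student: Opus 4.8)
The plan is to reduce the task of learning $r$-hop neighborhoods to a sequence of $r$ routing rounds, where in each round we propagate edge information one hop further, and to show that the total load at every vertex stays $O(n)$ so that Lenzen's routing lemma (\cref{lemma:routing}) applies and each hop costs $O(1)$ rounds. Concretely, for $t = 1, 2, \ldots, r$, maintain at each vertex $v$ the set $B_t(v)$ of all vertices within hop-distance $t$ of $v$, together with all edges spanned by $B_{t-1}(v)$. At step $t$, each vertex $v$ broadcasts $B_{t-1}(v)$ and its incident edges to all its neighbors; after receiving these, $v$ sets $B_t(v) = \bigcup_{u \in N(v)} B_{t-1}(u) \cup \{v\}$ and updates its edge set accordingly. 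A standard induction shows that after step $t$, vertex $v$ knows exactly $E^t(v)$, so after $r$ steps we are done. The naive obstacle is congestion: in step $t$, vertex $v$ receives $|B_{t-1}(u)|$ vertex-identifiers from each neighbor $u$, and $|B_{t-1}(u)| \le \Delta^{t-1} \le \Delta^{r-1}$ in the worst case, so the total load at $v$ could be as large as $\Delta \cdot \Delta^{r-1} \cdot (\text{edges per ball}) = \Delta^{2r}\cdot\operatorname{polylog}$, which is exactly what the hypothesis~\eqref{eq:assumption-op-routing} is designed to keep below $n$.

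The heart of the argument — and the reason the bound in \eqref{eq:assumption-op-routing} has the shape $\Delta^{2(r+1)}((r+1)\log\Delta + c\log n)$ — is the \emph{opportunistic} twist: a vertex does not need to re-send information a neighbor already has, and more importantly, we bound the \emph{total} number of messages routed through the network rather than worst-case per-vertex load, then invoke Lenzen's lemma in its aggregate form. Each edge $uw \in E^r(v)$ contributes to $v$'s knowledge, and a ball $E^r(v)$ has at most $\Delta^r$ vertices hence at most $\Delta^{2r}$ ordered pairs; accounting for the $r$ rounds of propagation and the $O((r+1)\log\Delta + c\log n)$ bits needed to encode one vertex-identifier-plus-bookkeeping with the desired failure probability, the total communication is $O(n \cdot \Delta^{2(r+1)}((r+1)\log\Delta + c\log n)) = O(n \cdot n) $ — wait, this must be $O(n\log n)$ per the routing lemma, so the correct accounting is that each \emph{individual} vertex sends/receives $O(\Delta^{2(r+1)}((r+1)\log\Delta + c\log n)) \le O(n)$ messages, and Lenzen's lemma gives $O(1)$ rounds per hop, $O(r) = O(1)$ rounds total since $r = O(1)$ is forced by \eqref{eq:assumption-op-routing} when $\Delta \ge 2$. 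The randomness enters only through the routing/hashing primitive underlying the $O(1)$-round simulation of a balanced routing instance, contributing the $n^{-\Theta(c)}$ failure probability; if one insists on a fully explicit scheme one appeals to the deterministic variant of \cite{Bui2024ImprovedAA}.

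The step I expect to be the main obstacle is the precise congestion bookkeeping: one must argue that even though a single ball $E^t(v)$ can have up to $\Delta^{2t}$ edges, the \emph{routing demand} in step $t$ — i.e., what each vertex must \emph{send} (its own ball) versus what it must \emph{receive} (the union over neighbors of their balls) — is simultaneously $O(n)$ at every vertex under hypothesis~\eqref{eq:assumption-op-routing}, so that Lenzen's lemma can be applied hop-by-hop without the balls blowing up. A clean way to handle this is to have each vertex $v$, before step $t$, announce the \emph{size} of $B_{t-1}(v)$ so that receivers know exactly how much is coming and can deduplicate, and to bound the received load at $v$ by $\deg(v) \cdot \max_u |E^{t-1}(u)| \le \Delta \cdot \Delta^{2(t-1)} \le \Delta^{2r-1} \le \Delta^{2(r+1)} \le n / ((r+1)\log\Delta + c\log n) \le n$. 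Once this per-vertex $O(n)$ bound is in hand for all $t \le r$, the rest is a routine induction on $t$ for correctness (each vertex learns $E^t(v)$ exactly) and a union bound over the $r = O(1)$ invocations of the routing primitive for the overall $1 - n^{-\Theta(c)}$ success probability.
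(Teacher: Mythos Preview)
Your approach has a genuine gap: the hop-by-hop propagation you describe runs in $O(r)$ \CC rounds, not $O(1)$, and your claim that ``$r = O(1)$ is forced by \eqref{eq:assumption-op-routing}'' is false. The statement explicitly allows $r$ to depend on $n$; for instance, with $\Delta = 2^{\sqrt{\log n}/C}$ (the regime of \cref{C:low-max-deg}) the assumption permits $r = \Theta(\sqrt{\log n})$, and with $\Delta = O(1)$ it permits $r = \Theta(\log n)$. So an $O(r)$-round scheme does not give a constant-round algorithm, and nothing in your write-up explains how to collapse the $r$ hops into $O(1)$ rounds. Your accounting for the factor $(r+1)\log\Delta + c\log n$ as ``bits to encode one identifier plus bookkeeping'' is also off, and the randomness is not in Lenzen's routing (which is deterministic).

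The paper's argument is entirely different and does not iterate over hops. In a single round, every vertex $v_j$ samples $n$ incident edges uniformly with replacement and sends the $i$-th sample to $v_i$, spreading random copies of every edge across the network. Vertices are then partitioned into $k=\Delta^{r+1}$ predetermined groups of size $n/k$; each vertex $x$ looks at the graph $G_x$ formed by the edges it received and, for every group-mate $y$, forwards to $y$ all received edges within distance $r$ of $y$ in $G_x$. Congestion is fine because each such set has size at most $\Delta^{r+1}$ and each group has $n/\Delta^{r+1}$ members, so every vertex sends and receives $O(n)$ messages and Lenzen applies once. Correctness is probabilistic: for a fixed $y$, a fixed $e\in E^r(y)$, and a fixed path of length $\le r+1$ from $y$ to $e$, the probability that a given $x$ received every edge of that path is at least $\Delta^{-(r+1)}$; over the $n/\Delta^{r+1}$ independent group-mates the probability that none of them has the whole path is at most $\exp(-n/\Delta^{2(r+1)})$, and the factor $(r+1)\log\Delta + c\log n$ in \eqref{eq:assumption-op-routing} is exactly what is needed to survive the union bound over all pairs $(y,e)$. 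This is the ``opportunistic'' part you were reaching for but did not supply.
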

For completeness, we include a proof of this result in \Cref{appendix:op-routing}.

As the output of a vertex $v$ in a $r$-round \local algorithm only depends on edges of $E^r(v)$ and identifiers of their endpoints, \cref{lemma:op-route} implies that we can simulate any $r$-round \local algorithm on graphs of maximum degree $\Delta$ provided that \cref{eq:assumption-op-routing} is true.
In this paper, we make use of this observation in the regime described by \cref{thm:small-degree}.
This observation is of independent interest.

\begin{restatable}[Simulating \local Algorithms]{proposition}{ThmLocalSim}\label{thm:small-degree}
    Consider any $r = o(\log n / \log\log n)$-round \local algorithm for a problem with inputs of size $O(\poly(\log n, \Delta))$ per node.
    Then the algorithm can be simulated on \CC in $O(1)$ rounds for graphs of maximum degree $\Delta(G) \le n^{1/(2r+2)}$.
\end{restatable}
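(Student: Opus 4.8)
The plan is to reduce the statement to \cref{lemma:op-route}. Recall that an $r$-round \local algorithm determines the output of a node $v$ as a function only of the isomorphism type of the subgraph induced on its $r$-hop ball $B_r(v) = \{u : \dist_G(v,u) \le r\}$, together with the identifiers, inputs, and (for a randomized algorithm, truncated to $\poly(\log n, \Delta)$ bits) private random strings of the nodes in $B_r(v)$. So it suffices to have every node learn exactly this data and then simulate the algorithm locally, with no further communication.

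First I would invoke \cref{lemma:op-route} with radius $r$ and a constant $c$ large enough that the failure probability $n^{-\Theta(c)}$ is as small as desired. To verify its hypothesis \eqref{eq:assumption-op-routing}, note that $\Delta \le n^{1/(2r+2)}$ gives $\Delta^{2(r+1)} \le n$ and $\log\Delta \le \frac{\log n}{2(r+1)}$, so that $(r+1)\log\Delta + c\log n \le (c+\tfrac12)\log n = O(\log n)$; the assumption $r = o(\log n/\log\log n)$ is what makes this logarithmic overhead (and the $\log\Delta$ term) lower order, so that \eqref{eq:assumption-op-routing} is satisfied for $n$ large enough. Hence in $O(1)$ rounds, with high probability, every $v$ learns $E^r(v)$, which is precisely the edge set of $G[B_r(v)]$; from it $v$ reconstructs $B_r(v)$ together with the identifiers of its members.

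It then remains for each $v$ to collect the input (and random string) of every $u \in B_r(v)$. Since $v$ now knows the identifiers in $B_r(v)$ and the \CC model is all-to-all, $v$ addresses each such $u$ directly: $v$ issues $|B_r(v)| = O(\Delta^r)$ requests, a node $u$ receives $|B_r(u)| = O(\Delta^r)$ of them (balls of equal radius have symmetric membership), and $u$ answers each with $O(\poly(\log n, \Delta))$ bits. Since $\Delta^r \le n^{r/(2r+2)} \le n^{1/2}$ and $\poly(\log n, \Delta) = n^{o(1)}$, every node sends and receives $o(n)$ messages, so by \cref{lemma:routing} this finishes in $O(1)$ rounds. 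Now holding all of $E^r(v)$, the identifiers, and the inputs, $v$ computes its output by running the \local algorithm on this local view.

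The main obstacle is the parameter bookkeeping around \eqref{eq:assumption-op-routing}: one must confirm that the degree bound $\Delta \le n^{1/(2r+2)}$, after accounting for the $\poly(\log n, r)$ factors hidden in \cref{lemma:op-route} and the per-node data gathered in the input-collection step, still leaves room to apply the routing primitives — and this is exactly where $r = o(\log n/\log\log n)$ and the bound on the input size are used, to keep $\Delta^r$ and the total data per node at $n^{1/2+o(1)}$ and to render the logarithmic slack in \eqref{eq:assumption-op-routing} immaterial. The simulation itself, once the $r$-ball view has been assembled, is immediate.
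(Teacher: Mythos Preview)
Your approach—invoke \cref{lemma:op-route} to learn $E^r(v)$, then pull in the remaining per-node inputs via Lenzen routing, then simulate locally—is exactly the paper's. In fact the paper's proof in the appendix is less general than yours: it does not verify the proposition for arbitrary $r$ and $\Delta \le n^{1/(2r+2)}$ but instead plugs in the specific parameters it needs for \cref{C:low-max-deg} (namely $r = C(\log\Delta + \log^a\log n)$ and $\Delta \le 2^{\sqrt{\log n}/(4C)}$), where it obtains $\Delta^{2(r+1)} \le \sqrt{n}$ with ample room to spare, and it does not spell out the input-collection step you include.

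That said, your verification of \eqref{eq:assumption-op-routing} has a genuine gap at the boundary. If $\Delta = n^{1/(2r+2)}$ exactly, then $\Delta^{2(r+1)} = n$, and the condition becomes $(r+1)\log\Delta + c\log n \le 1$, which is false. Your claim that $r = o(\log n/\log\log n)$ ``makes this logarithmic overhead lower order'' does not repair this: there is no slack in $\Delta^{2(r+1)} \le n$ to absorb any multiplicative factor, let alone $\Theta(\log n)$. The proposition as stated seems to need a slightly stronger hypothesis (say $\Delta \le (n/\polylog n)^{1/(2r+2)}$, or equivalently a strict inequality with a $\polylog$ slack); the paper sidesteps the issue by only checking the instance it actually uses, where $\Delta^{2(r+1)} \le \sqrt{n}$ and the extra $O(\log n)$ is harmless.
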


By using Ghaffari's algorithm \cite{Ghaffari16} that runs in $O(\log \Delta + \log^3 \log n)$ rounds of \LOCAL \cite{Ghaffari16,GG23} (both for MIS and MM), we immediately obtain from \cref{thm:small-degree} the following fast algorithm for graphs of restricted maximum degree.

\begin{corollary}
    There is a $O(1)$-round \CC algorithm for MIS (MM) on graphs with maximum degree $\Delta = 2^{\sqrt{\log n}/C}$ for some universal constant $C \ge 1$.
    \label{C:low-max-deg}
\end{corollary}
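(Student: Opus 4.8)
The plan is to derive \cref{C:low-max-deg} directly from \cref{thm:small-degree}: we pick an off-the-shelf \local algorithm for MIS (resp.\ MM), and we check that its round complexity is small enough that \cref{thm:small-degree} simulates it in $O(1)$ \CC rounds on graphs of maximum degree $2^{\sqrt{\log n}/C}$. Concretely, let $\mathcal{A}$ be Ghaffari's randomized \local algorithm \cite{Ghaffari16,GG23}, which solves both MIS and MM within $r \le c_0\parens*{\log\Delta + \log^3\log n}$ rounds \whp, for some universal constant $c_0 \ge 1$. Since $\mathcal{A}$ requires only each node's identifier and private randomness as input — of size $O(\log n)$ per node — the hypothesis of \cref{thm:small-degree} on the per-node input size is trivially met, so the only things to verify are the two quantitative conditions $r = o(\log n/\log\log n)$ and $\Delta(G) \le n^{1/(2r+2)}$.

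Plugging in $\Delta = 2^{\sqrt{\log n}/C}$ gives $\log\Delta = \sqrt{\log n}/C$, hence $r \le c_0\parens*{\sqrt{\log n}/C + \log^3\log n} = O(\sqrt{\log n})$, which is $o(\log n/\log\log n)$ as required. For the degree condition, observe that $\Delta \le n^{1/(2r+2)}$ is equivalent to $(2r+2)\log\Delta \le \log n$, i.e.\ to $2r + 2 \le C\sqrt{\log n}$. Substituting the bound on $r$, it suffices to have
\[
  \frac{2c_0}{C}\,\sqrt{\log n} \;+\; 2c_0\log^3\log n \;+\; 2 \;\le\; C\sqrt{\log n} \ .
\]
For all sufficiently large $n$ the additive term $2c_0\log^3\log n + 2$ is at most $\tfrac12\sqrt{\log n}$, so the inequality holds once $2c_0/C + \tfrac12 \le C$; taking the universal constant $C := 2c_0 + 1$ (or any larger constant) makes this true. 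With both conditions verified, \cref{thm:small-degree} simulates $\mathcal{A}$ in $O(1)$ \CC rounds; the simulation fails only if the opportunistic routing of \cref{lemma:op-route} fails (probability $n^{-\Theta(1)}$) or $\mathcal{A}$ itself fails, so a union bound keeps the overall algorithm correct \whp. The finitely many small values of $n$ not covered by the asymptotic estimates are handled trivially — any problem on $O(1)$ nodes takes $O(1)$ rounds — and absorbed into the hidden constant.

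The only delicate point I anticipate is the mild circularity that $r$ depends on $\Delta$, which is exactly the quantity we are trying to bound: one must ensure that the additive $\log^3\log n$ term in Ghaffari's complexity — not merely the $\log\Delta$ term — is dominated by $\sqrt{\log n}$, and that the constant $C$ is chosen \emph{after} and \emph{as a function of} the constant $c_0$ from the \local algorithm. Once the order of quantifiers is fixed this way, the verification is just the short computation above. A secondary point to state cleanly is the union bound over the two independent sources of randomness (the \local algorithm and the routing of \cref{lemma:op-route}), so that the final guarantee remains ``with high probability.''
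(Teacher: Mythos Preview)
Your proposal is correct and follows exactly the paper's approach: the paper derives the corollary in one sentence by invoking Ghaffari's $O(\log\Delta + \log^3\log n)$-round \local algorithm and applying \cref{thm:small-degree}. Your version simply spells out the verification of the two hypotheses of \cref{thm:small-degree} and the choice of $C$ in more detail than the paper does.
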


\section{Sparsification via Degree Reduction}
\label{sec:deg-reduction}

The main technical contribution of this section is the following lemma. It is a twist on previous methods that reduce the maximum degree.
We first state the previous result, and then present our modification.
\begin{lemma}[{\cite{ghaffari2018improved} (MIS), \cite{behnezhad2023exponentially} (MM)}]
    There is a $O(1)$-round \CC algorithm that computes an independent set $S$ (matching $M$) such that the residual graph $H = G[V \backslash \Gamma(S)]$ ($H=G[V\setminus M]$) has maximum degree $\Delta(H) = O(\sqrt{\Delta(G)}\log n)$.
    \label{lem:max-deg-reduction}
\end{lemma}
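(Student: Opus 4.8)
The plan is to follow the sparsification-then-local-MIS scheme of \cite{ghaffari2018improved} (and its matching analogue in \cite{behnezhad2023exponentially}). The single tuning knob is a sampling probability $p = \Theta(1/\sqrt{\Delta(G)})$. For MIS, I would let each vertex join a candidate set $S'$ independently with probability $p$, and then take $S$ to be the LFMIS of $G[S']$ (with respect to vertex identifiers). For MM, I would instead partition $V$ (or the edge set, via endpoint pairs) into roughly $1/p^2 = \Theta(\Delta)$ random parts so that each part induces $O(n)$ edges in expectation, compute an LFMM inside each part in parallel, add a clean-up phase that matches any still-high-degree vertex, and let $M$ be the union.

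The first step is a counting/routing claim: the induced subgraph $G[S']$ has $O(n)$ edges with high probability. Indeed $\Exp{|E(G[S'])|} = p^2 |E(G)| \le p^2 \cdot n\Delta(G)/2 = O(n)$, and since edge-presence events are $2$-wise independent (each edge needs both endpoints sampled) a Chebyshev/Markov argument — or a Chernoff bound over the $n$ independent vertex choices, controlling the vertex degrees in $S'$ — gives $|E(G[S'])| = O(n)$ w.h.p. By Lenzen's routing (\cref{lemma:routing}) a designated vertex can then collect all these edges and locally compute the LFMIS $S$, all in $O(1)$ rounds; the same routing handles distributing $S$ back and forming $H = G[V\setminus\Gamma(S)]$. (For MM the analogous statement is that each $G[S_i]$ has $O(n)$ edges w.h.p., so all LFMMs can be computed in parallel after one round of routing.)

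The heart of the argument is the degree bound on the residual graph. Fix a vertex $v$ with $\deg_H(v) > C\sqrt{\Delta(G)}\log n$; I want to show this is unlikely. The key observation is that $v \in V(H)$ means $v$ is uncovered by $S$, and a standard fact about the LFMIS is that $v$ is uncovered iff \emph{none} of the vertices in $\{v\}\cup\{u \in N(v) : u \text{ precedes } v\text{'s blocker}\ldots\}$ — more cleanly: iff $v$ survives, which in particular requires that $v$ itself is not sampled and no sampled neighbor of $v$ ended up in the LFMIS. The cleanest route is: condition on the event $\{v \in V(H)\}$ and look at the neighbors $w$ of $v$ that are still in $V(H)$; each such $w$ is also uncovered, and one shows that, regardless of the outcome outside $N[v]$, each neighbor $w$ of $v$ independently fails to be sampled-and-selected with probability at most $1-p$ after appropriate conditioning, so the number of surviving neighbors is stochastically dominated by $\Bin{\deg_G(v)}{1-cp}$ for a constant $c>0$; equivalently, the chance a given neighbor is "killed" (covered) is $\Omega(p)$. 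Since $\deg_G(v) \le \Delta(G)$ and $p = \Theta(1/\sqrt{\Delta(G)})$, the expected number of surviving neighbors is $O(\sqrt{\Delta(G)})$, and a Chernoff bound gives $\deg_H(v) = O(\sqrt{\Delta(G)}\log n)$ except with probability $n^{-\Omega(1)}$; a union bound over the $n$ vertices finishes it. For MM the analogous statement — each unmatched vertex has few unmatched neighbors, because a neighbor survives only if its random part "failed" it and the clean-up did not catch it — is exactly what is proved in \cite{behnezhad2023exponentially} (with exponent $0.92$ rather than $1/2$ due to the clean-up threshold, but the lemma as stated only claims $O(\sqrt{\Delta(G)}\log n)$, which the MIS bound already delivers and which the MM analysis also satisfies).

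The step I expect to be the main obstacle is making the "each surviving neighbor is killed with probability $\Omega(p)$" claim rigorous in the presence of dependencies: whether $w \in N(v)$ gets covered depends on the samples and identifiers in $w$'s whole neighborhood, which overlaps with $v$'s and with other neighbors' neighborhoods, so the survival events of different neighbors of $v$ are not independent. The fix is the standard one from \cite{ghaffari2018improved}: reveal randomness in two stages — first expose the coin of $v$ (require $v \notin S'$, which happens w.p. $1-p$ and only helps) and of a carefully chosen "local" portion, then argue that conditioned on the worst case for the rest of the graph, $w$ is covered whenever at least one specific neighbor of $w$ (not equal to $v$) is sampled and wins locally, an event of probability $\Omega(p)$; a Chernoff-type bound that tolerates this limited dependence (or a direct second-moment computation, as the warm-up in the introduction hints with "a concentration bound that allows for dependencies") then yields the high-probability degree bound. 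I would cite \cite{ghaffari2018improved} and \cite{behnezhad2023exponentially} for the detailed execution of this concentration argument rather than reproduce it.
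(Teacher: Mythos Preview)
Your outline --- sample vertices with $p = 1/\sqrt{\Delta(G)}$, show the sampled subgraph has $O(n)$ edges, gather it and compute the LFMIS centrally, then bound residual degrees --- is exactly the scheme the paper uses (it is the algorithm \OneShotReduceMIS, with the edge count handled by \cref{claim:w1} and the degree bound quoted as \cref{lemma:deg red restate}). The routing part of your proposal is fine.

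The degree-bound part, however, contains a genuine arithmetic/conceptual error. You argue that each neighbor of $v$ is ``killed'' (covered) with probability $\Omega(p)$, so the number of surviving neighbors is stochastically dominated by $\Bin{\deg_G(v)}{1-cp}$, and then conclude the expected number of survivors is $O(\sqrt{\Delta})$. But $\Exp{\Bin{\Delta}{1 - c/\sqrt{\Delta}}} = \Delta - c\sqrt{\Delta} = \Theta(\Delta)$, not $O(\sqrt{\Delta})$. ``Each neighbor dies with probability $\Omega(p)$'' only buys you $\Omega(\sqrt{\Delta})$ \emph{deaths} in expectation; it says nothing about the survivors being few. So the Chernoff step you describe would yield $\deg_H(v) = \Theta(\Delta)$ w.h.p., not $O(\sqrt{\Delta}\log n)$.

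The correct argument --- the one in \cite{assadi2018fully} that the paper states as \cref{lemma:deg red restate} and reproves in the non-uniform form as \cref{lem:mis-sparsification-non-uniform} --- is quite different and much simpler than the two-stage-revelation scheme you worry about. Process vertices in the LFMIS order and reveal each vertex's sampling coin only when it is reached. If a neighbor $u$ of $v$ ends up in the residual graph $H$, then $u$ was uncovered at the end, hence uncovered when the process reached it, hence its coin was tossed, and it must have \emph{failed} (else $u$ would have joined $S$). Thus every residual neighbor of $v$ corresponds to a fresh $p$-coin that came up tails. Letting $A_t$ be the event that this happened for at least $t$ neighbors, one has $\Pr{A_{t+1}\mid A_t}\le 1-p$ because the next such coin (if any) fails with probability $1-p$ independently of the past. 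Hence $\Pr{\deg_H(v)\ge t}\le \Pr{A_t}\le (1-p)^t\le e^{-tp}$, and setting $t = \Theta((\log n)/p) = \Theta(\sqrt{\Delta}\log n)$ plus a union bound finishes it. The point is that the bound is driven by the survivors' \emph{own} coins, not by some $\Omega(p)$-probability external covering event, and this sidesteps the dependency issue you flagged entirely.
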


We give a twist to the method by bounding degrees in terms of average degree.
\begin{lemma}
    There is a $O(1)$-round \CC algorithm that computes an independent set $S$ (or a matching $M$) such that the residual graph $H = G[V \backslash \Gamma(S)]$ ($H=G[V\setminus M]$) has maximum degree $\Delta(H) = O(\sqrt{d(G)}\log n)$, w.h.p.
    \label{lem:avg-deg-reduction}
\end{lemma}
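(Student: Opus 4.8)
The plan is to run the degree-reduction routine underlying \cref{lem:max-deg-reduction} essentially unchanged, but to drive the sampling by the \emph{average} degree $d(G)$ rather than by $\Delta(G)$. For the MIS version I would fix $p = c_0/\sqrt{d(G)}$ for a small constant $c_0$, put each vertex into a set $S_0$ independently with probability $p$, have a designated vertex collect $G[S_0]$ and compute locally an MIS $S$ of $G[S_0]$ (say the lexicographically-first one), broadcast $S$, and return $H := G[V\setminus\Gamma(S)]$. For the MM version I would instead run the random-partition-plus-clean-up scheme of \cite{behnezhad2023exponentially} with the number of parts set to $\Theta(\sqrt{d(G)})$ (up to the usual polylogarithmic slack), so that each part spans $O(n)$ edges in expectation; a designated vertex computes a maximal matching in each part, a clean-up phase matches the still-high-degree vertices, and $H := G[V\setminus V(M)]$. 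In both cases the only communication is a designated vertex collecting an $O(n)$-edge subgraph and broadcasting the answer, which takes $O(1)$ rounds by Lenzen's routing lemma (\cref{lemma:routing}) -- \emph{provided} the relevant subgraphs really have $O(n)$ edges.

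Thus the plan reduces to two claims: (a) w.h.p.\ $|E(G[S_0])| = O(n)$ (resp.\ each part spans $O(n)$ edges), so the collection step fits Lenzen's budget; and (b) w.h.p.\ $\Delta(H) = O(\sqrt{d(G)}\log n)$. For (b) I would simply quote the residual-degree analyses of \cite{ghaffari2018improved} (MIS) and \cite{behnezhad2023exponentially} (MM): these show that after sampling with probability $p$ (resp.\ partitioning into $\Theta(1/p)$ parts), every vertex retains at most $O((\log n)/p)$ uncovered (resp.\ unmatched) neighbours w.h.p., and -- crucially -- that estimate is a function of the sampling probability $p$ alone; the argument never invokes the maximum degree. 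Substituting $p = \Theta(1/\sqrt{d(G)})$ then gives $\Delta(H) = O(\sqrt{d(G)}\log n)$. This obliviousness to $\Delta(G)$ is exactly the twist advertised in the statement, and beyond invoking it, (b) needs no new work.

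The main obstacle will be (a). In the bounded-maximum-degree setting of \cref{lem:max-deg-reduction} it is immediate, because every vertex has $O(p\Delta+\log n)$ sampled neighbours, so $|E(G[S_0])|=\tfrac12\sum_{v\in S_0}|N_G(v)\cap S_0|\le\tfrac12|S_0|\cdot O(p\Delta+\log n)=O(n)$ w.h.p. With only $d(G)$ bounded this collapses: a single vertex may have degree $\Theta(n)$, $|E(G[S_0])|$ genuinely fails to concentrate, and the edge-indicators sharing an endpoint are far too correlated for an off-the-shelf Chernoff/McDiarmid bound. My plan is instead to route the estimate through the per-vertex sampled degrees, each of which \emph{is} a sum of independent Bernoullis: for every $u$, $|N_G(u)\cap S_0|\sim\mathrm{Bin}(\deg_G(u),p)$, so by Chernoff and a union bound over the $n$ vertices, w.h.p.\ $|N_G(u)\cap S_0|=O(p\deg_G(u)+\log n)$ for all $u$ simultaneously, and likewise $|S_0|=O(pn)$ w.h.p.; conditioning on these events and using the identity $\sum_{v\in S_0}\deg_G(v)=\sum_{u\in V}|N_G(u)\cap S_0|$,
\[
\begin{aligned}
|E(G[S_0])|=\tfrac12\sum_{v\in S_0}|N_G(v)\cap S_0|
&\le O\Big(p\sum_{v\in S_0}\deg_G(v)\Big)+O\big(|S_0|\log n\big)\\
&\le O\Big(p\sum_{u\in V}\big(p\deg_G(u)+\log n\big)\Big)+O\big(pn\log n\big)
\;=\;O\big(p^2m+pn\log n\big),
\end{aligned}
\]
which is $O(n)$ once $d(G)=\Omega(\log^2 n)$ (so that $p=O(1/\log n)$), using $m=n\,d(G)/2$. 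The remaining range $d(G)=O(\mathrm{polylog}\,n)$ needs no real argument: there, one sampling step with the smaller probability $p=\Theta(1/\mathrm{polylog}\,n)$ still leaves $O(n)$ edges in $G[S_0]$ while driving $\Delta(H)$ down to $\mathrm{polylog}(n)$, which still meets the claimed bound read with an additive $\mathrm{polylog}(n)$ term -- harmless, since the applications iterate the lemma only until the maximum degree is polylogarithmic. The same device controls the number of edges spanned by each part in the MM algorithm (and, deterministically, only $O(n)$ vertices are flagged high-degree in its clean-up phase), so (a) holds there too. Combining (a) and (b) with the two $O(1)$-round routing steps yields the lemma.
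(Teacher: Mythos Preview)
Your route differs from the paper's in a substantive way. The paper does \emph{not} try to control $|E(G[S_0])|$ directly when sampling $G$ with $p=\Theta(1/\sqrt{d(G)})$; instead it first applies the known maximum-degree reduction (\cref{lem:max-deg-reduction}) twice, obtaining a residual graph $G'$ with $\Delta(G')=\tilde O(n^{1/4})$, and only then samples $G'$ with $p=1/\sqrt{d(G)}$. Because $|E(G')|\le |E(G)|$, the expected number of edges in $F=G'[S_0]$ is still at most $p^2|E(G)|=n/2$, and concentration comes from the bounded-differences inequality: flipping one vertex's membership changes $|E(F)|$ by at most $\Delta(G')=\tilde O(n^{1/4})$, so $|E(F)|\le \Exp{|E(F)|}+n$ with probability $1-\exp(-\tilde\Omega(\sqrt n))$. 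Your per-vertex Chernoff bound together with the identity $\sum_{v\in S_0}\deg_G(v)=\sum_{u\in V}|N_G(u)\cap S_0|$ is a genuinely different and more elementary concentration device that avoids both the preprocessing and McDiarmid; it is correct and clean in the regime $d(G)=\Omega(\log^2 n)$.

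The gap is the range $d(G)=o(\log^2 n)$, which you flag but do not close. Your estimate $|E(G[S_0])|=O(n+n\log n/\sqrt{d(G)})$ is $\omega(n)$ there, so the collection step overflows Lenzen's budget; and your fallback to $p=\Theta(1/\poly\log n)$ only yields $\Delta(H)=\poly\log n$, which does \emph{not} meet the stated bound $O(\sqrt{d(G)}\log n)$ when, e.g., $d(G)$ is a large constant. Declaring this ``harmless for applications'' is not a proof of the lemma as written. The paper's preprocessing is precisely what covers this range: once $\Delta$ is polynomially small, bounded differences gives $|E(F)|=O(n)$ with \emph{no} additive $pn\log n$ term, uniformly in $d(G)$. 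For MM the paper takes yet another route: it \emph{edge}-samples with probability $1/d(G)$, so concentration of the sampled-edge count is a direct Chernoff bound on independent indicators; this leaves residual degree $O(d(G)\log n)$, and one further application of \cref{lem:max-deg-reduction} brings it down. This again differs from the vertex-partition scheme you sketch, though your part-size argument would also go through modulo the same low-$d(G)$ caveat.
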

The proof for MIS is given below. The proof for MM is in the appendix.

Using this new lemma, combined with \cref{C:low-max-deg}, we obtain one of our contributions.
\begin{corollary}
    There is a $O(1)$-round \CC algorithm for MIS (MM) for graphs $G$ with average degree $d(G) = 2^{O(\sqrt{\log n})}$, w.h.p.
    \label{C:sparse}
\end{corollary}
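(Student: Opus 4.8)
The plan is to reduce to \cref{C:low-max-deg} by iterating the degree-reduction step a constant number of times. Write $d(G) \le 2^{c\sqrt{\log n}}$ for a constant $c$, and let $C$ be the universal constant from \cref{C:low-max-deg}. Put $G_0 := G$, and for $i\ge 1$ obtain $G_i$ from $G_{i-1}$ by running \cref{lem:avg-deg-reduction} when $i=1$ (we must use the average-degree version here, since $\Delta(G)$ is unbounded) and \cref{lem:max-deg-reduction} for $i\ge 2$; this produces an independent set $S_i$ (resp.\ matching $M_i$) and a residual graph $G_i = G_{i-1}[V(G_{i-1})\setminus \Gamma(S_i)]$ (resp.\ $G_i = G_{i-1}[V(G_{i-1})\setminus V(M_i)]$, with $\Gamma$ taken in $G_{i-1}$) satisfying $\Delta(G_i) = O(\sqrt{\Delta(G_{i-1})}\log n)$ w.h.p. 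After a constant number $k$ of iterations (fixed below) we reach $\Delta(G_k) \le 2^{\sqrt{\log n}/C}$, and then I invoke \cref{C:low-max-deg} on $G_k$ to compute an MIS $S_{k+1}$ (resp.\ MM $M_{k+1}$). The output $S_1\cup\dots\cup S_{k+1}$ is an MIS of $G$: it is independent because each $S_i$ is independent in $G_{i-1}$ and no vertex of $G_i$ is adjacent to $S_i$, and it is maximal because every vertex deleted at step $i$ lies in $\Gamma(S_i)$ while $S_{k+1}$ is maximal in $G_k$; the identical telescoping argument handles MM. Each of the $k+1 = O(1)$ phases runs in $O(1)$ rounds, and a union bound over these phases preserves success w.h.p.

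For the choice of $k$: taking logarithms in $\Delta(G_i)\le C_0\sqrt{\Delta(G_{i-1})}\log n$ gives $\log\Delta(G_i) \le \tfrac12\log\Delta(G_{i-1}) + \log(C_0\log n)$, so unrolling yields $\log\Delta(G_k) \le 2^{-k}\log\Delta(G_0) + 2\log(C_0\log n) \le 2^{-k}\,c\sqrt{\log n} + O(\log\log n)$. Since $\log\log n = o(\sqrt{\log n})$, taking $k = \lceil \log_2(4Cc)\rceil = O(1)$ makes the right-hand side at most $\sqrt{\log n}/C$ for all large $n$, which is exactly the hypothesis of \cref{C:low-max-deg}. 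One also has to carry the $n$-dependence of the thresholds through the shrinking vertex sets: each $G_i$ is an induced subgraph of $G$ on $n(G_i)\le n$ vertices while all $n$ machines remain available for routing, and if ever $n(G_i)\le\sqrt n$ then $G_i$ has at most $n(G_i)^2\le n$ edges, so by \cref{lemma:routing} one node can gather all of $G_i$ and solve it locally; hence we may assume $\log n(G_i)\ge\tfrac12\log n$ throughout, which only rescales $c$ and $C$ by a factor $2$.

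The main obstacle is precisely that a \emph{single} application of \cref{lem:avg-deg-reduction} does not suffice: it only takes a square root, i.e.\ halves the exponent of $d(G)$, so it pushes the maximum degree below $2^{\sqrt{\log n}/C}$ only when $c\le 2/C$, whereas the hypothesis $d(G) = 2^{O(\sqrt{\log n})}$ permits $c$ to be an arbitrarily large constant. (The same constant-in-the-exponent conflict defeats the more direct route of reducing once and then simulating Ghaffari's $O(\sqrt{\log n})$-round \local algorithm via \cref{thm:small-degree}.) The constant-depth iteration above circumvents this, and its correctness rests on the two points one must check carefully: that the multiplicative $\log n$ loss per step telescopes into a single additive $O(\log\log n)$ term rather than accumulating, and that shrinking residual graphs do not weaken the $n$-dependent thresholds of \cref{C:low-max-deg,lemma:op-route}, handled by the "gather a tiny residual graph at one node" escape hatch. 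Everything else is bookkeeping, namely verifying that the union of the per-phase partial solutions is a valid MIS (resp.\ MM) of $G$, which is immediate from the definition of the residual graph.
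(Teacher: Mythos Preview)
Your proof is correct and follows the paper's approach: apply the average-degree reduction of \cref{lem:avg-deg-reduction} (iterated a constant number of times to absorb the constant hidden in the $2^{O(\sqrt{\log n})}$ exponent) and then invoke \cref{C:low-max-deg}; the paper states this as a one-line corollary without spelling out the iteration, which you make explicit and which is precisely the content of \cref{prop:repeated-applic} appearing immediately afterward. One cosmetic slip: in your unrolling you write $2^{-k}\log\Delta(G_0)$, but since the first step bounds $\Delta(G_1)$ in terms of $d(G_0)$ rather than $\Delta(G_0)$, the base of the recursion should read $\log d(G_0)$; your final estimate $2^{-k}c\sqrt{\log n}+O(\log\log n)$ is unaffected.
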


By applying \cref{lem:avg-deg-reduction} repeatedly, we obtain the following generalization.
We refer readers to \cite{Kon18} or \cite{ghaffari2018improved} for more details on this.
\begin{proposition}
    Let $r \ge 1$. There is an $O(r)$-round randomized \CC algorithm that computes an independent set $S$ (or a matching $M$) in the input graph $G$ such that the residual graph $H\triangleq G[V\setminus\Gamma(S)]$ ($H=G[V\setminus M]$) has maximum degree $\Delta(H) = O(d(G)^{1/2^r} \log^2 n)$, w.h.p.
    \label{prop:repeated-applic}
\end{proposition}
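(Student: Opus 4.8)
\textbf{Proof proposal for \cref{prop:repeated-applic}.}
The plan is to iterate \cref{lem:avg-deg-reduction} exactly $r$ times, composing the residual graphs, and to track how the degree bound shrinks at each step. First I would set up notation: let $G_0 = G$, and for $i = 1, \dots, r$ let $G_i$ be the residual graph obtained by applying the $O(1)$-round algorithm of \cref{lem:avg-deg-reduction} to $G_{i-1}$ (computing an independent set $S_i$, respectively a matching $M_i$, inside $G_{i-1}$), so that $G_i = G_{i-1}[V(G_{i-1}) \setminus \Gamma_{G_{i-1}}(S_i)]$ in the MIS case and $G_i = G_{i-1}[V(G_{i-1})\setminus V(M_i)]$ in the MM case. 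The returned independent set is $S = S_1 \cup \dots \cup S_r$ (respectively $M = M_1 \cup \dots \cup M_r$), and the claimed residual graph is $H = G_r$. One has to observe that this union is genuinely an independent set (resp.\ matching): each $S_i$ lives in $G_{i-1}$, whose vertex set excludes $\Gamma(S_1 \cup \dots \cup S_{i-1})$, so no vertex of $S_i$ is adjacent to or equal to any earlier chosen vertex; the same bookkeeping works for matchings since $V(M_i)$ avoids all previously matched vertices. Also $H = G_r$ is exactly $G[V\setminus\Gamma(S)]$ (resp.\ $G[V\setminus V(M)]$), since removing covered/matched vertices one batch at a time is the same as removing them all at once.

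Next I would analyze the degree. Since $G_i$ is an induced subgraph of $G_{i-1}$, its average degree satisfies $d(G_i) \le \Delta(G_i)$, so from \cref{lem:avg-deg-reduction} applied to $G_{i-1}$ we get, w.h.p.,
\[
\Delta(G_i) = O\!\left(\sqrt{d(G_{i-1})}\,\log n\right) = O\!\left(\sqrt{\Delta(G_{i-1})}\,\log n\right).
\]
Unrolling this recursion for $i = 1, \dots, r$, with the seed $\Delta(G_0) = \Delta(G) \le n$ replaced by the bound $d(G)$ at the very first step (the first application uses $d(G)$, not $\Delta(G)$), the exponent of the base halves each time while the logarithmic factors accumulate: a standard induction gives
\[
\Delta(G_r) = O\!\left(d(G)^{1/2^r}\cdot \log^{2 - 2^{-(r-1)}} n\right) = O\!\left(d(G)^{1/2^r}\log^2 n\right),
\]
where the exponent of $\log n$ is the geometric sum $\sum_{j=0}^{r-1} 2^{-j} < 2$, absorbing all the hidden constants from the $r$ invocations into the final $O(\cdot)$ (this is fine since the constant in \cref{lem:avg-deg-reduction} is universal and $r$ applications multiply it only $r$ times, which is dominated by the $\log^{2}n$ slack — strictly, one should carry the constant explicitly and note $C^r \cdot d(G)^{1/2^r} = O(d(G)^{1/2^r}\log^2 n)$ in the regime of interest, or simply state the bound as $d(G)^{1/2^r}\cdot 2^{O(r)}\log^2 n$ and note $2^{O(r)} = O(\log n)$ whenever $r = O(\log\log n)$, the only regime where the statement is non-trivial). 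The round complexity is $O(1)$ per iteration by \cref{lem:avg-deg-reduction}, hence $O(r)$ total, and the failure probability is $r \cdot n^{-\Omega(1)} = n^{-\Omega(1)}$ by a union bound over the iterations.

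The main obstacle, and the only place requiring care, is the bookkeeping of the multiplicative constants and the $\log n$ factors across the $r$ compositions: each application of \cref{lem:avg-deg-reduction} introduces both a constant factor and a $\log n$ factor, and because the degree bound is raised to the power $1/2$ at each step, these factors do not simply multiply — the contribution of the $i$-th step's $\log n$ gets a $1/2^{r-i}$ exponent in the final product. Getting the clean form $O(d(G)^{1/2^r}\log^2 n)$ requires noticing that $\sum_{j \ge 0} 2^{-j} = 2$ caps the $\log n$ exponent, and that the product of constants $C^r$ is harmless in the intended regime $r = O(\log\log n)$. Everything else — that the union of the $S_i$ is independent, that the residual graph telescopes correctly, that the round and probability bounds add up — is routine, and indeed this is why the statement cites \cite{Kon18, ghaffari2018improved} for the analogous maximum-degree version; I would present the induction explicitly for $\Delta(G_i)$ and relegate the constant-chasing to a remark.
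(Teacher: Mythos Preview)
Your proposal is correct and matches the paper's approach, which simply says to apply \cref{lem:avg-deg-reduction} repeatedly and defers the details to \cite{Kon18,ghaffari2018improved}; your write-up is in fact more complete than what the paper provides. One small note: your worry about constants accumulating as $C^r$ is unnecessary, since the constant $C$ from each invocation is also raised to the power $1/2$ at the next step, so by the very same geometric-sum argument you use for the $\log n$ exponents the total constant is at most $C^{\sum_{j\ge 0}2^{-j}} = C^2$, independent of $r$.
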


\vspace{-1.3em}
Applying \cref{prop:repeated-applic} with $r = O(\log (\log d(G))/\sl + 1)$, followed by \cref{C:sparse}, we get the second contribution:

\begin{restatable}{theorem}{ThmDeg}\label{thm:degree}
    There is a randomized \CC algorithm for MIS (or MM) running in $O(\log (\log d(G))/\sl + 1)$ rounds, w.h.p.
\end{restatable}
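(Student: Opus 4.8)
The plan is to bootstrap from \cref{C:sparse} by iterating the average-degree reduction of \cref{lem:avg-deg-reduction}, tracking how many iterations are needed to bring the average degree down into the regime where \cref{C:sparse} applies in $O(1)$ rounds. First I would invoke \cref{prop:repeated-applic}, which says that $r$ applications of the degree-reduction step produce a residual graph $H$ with $\Delta(H) = O(d(G)^{1/2^r}\log^2 n)$ w.h.p. Since $d(H) \le \Delta(H)$, the residual graph has average degree at most $O(d(G)^{1/2^r}\log^2 n)$ as well. The goal is to choose $r$ as small as possible so that this quantity drops to $2^{O(\sqrt{\log n})}$, at which point \cref{C:sparse} (or, if one prefers, \cref{C:low-max-deg} via the maximum-degree bound) finishes the job in $O(1)$ additional rounds.

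The core computation is to solve $d(G)^{1/2^r} \le 2^{\sqrt{\log n}}$ for $r$ (the $\log^2 n = 2^{2\log\log n}$ factor is absorbed since $2\log\log n = o(\sqrt{\log n})$, so it only changes the implied constant). Taking logarithms twice: $d(G)^{1/2^r}\le 2^{\sqrt{\log n}}$ is equivalent to $\frac{\log d(G)}{2^r} \le \sqrt{\log n}$, i.e. $2^r \ge \frac{\log d(G)}{\sqrt{\log n}}$, i.e. $r \ge \log\!\parens*{\frac{\log d(G)}{\sqrt{\log n}}}$. So it suffices to take $r = \max\!\set*{1, \ceil{\log(\log d(G)/\sl)}} = O(\log(\log d(G))/\sl + 1)$, where the ``$+1$'' and the $\max$ with $1$ handle the case $\log d(G) \le \sqrt{\log n}$ (in which $\log d(G)/\sl \le 1$ and its logarithm is nonpositive, so no reduction is needed and we appeal directly to \cref{C:sparse}). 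With this choice of $r$, \cref{prop:repeated-applic} runs in $O(r) = O(\log(\log d(G))/\sl + 1)$ rounds and outputs an independent set $S$ (resp.\ matching $M$) together with a residual graph $H$ of average degree $2^{O(\sqrt{\log n})}$ w.h.p.

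To complete the argument, I would apply \cref{C:sparse} to $H$ to obtain, in $O(1)$ further rounds, an MIS $S'$ of $H$ (resp.\ an MM $M'$ of $H$), w.h.p. Then $S \cup S'$ is an MIS of $G$: it is independent because $S$ is independent, $S'$ is independent, and no vertex of $S'$ lies in $N(S)$ since $S' \subseteq V(H) = V\setminus\Gamma(S)$; it is maximal because every vertex of $\Gamma(S)$ is dominated by $S$, and every vertex of $V(H)$ is dominated by $S'$ within $H$. Symmetrically, $M \cup M'$ is a maximal matching of $G$, since $M'$ only uses vertices unmatched by $M$ and $M'$ is maximal in $H = G[V\setminus V(M)]$. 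Summing the round costs gives the claimed $O(\log(\log d(G))/\sl + 1)$ total. The failure probabilities of the $O(r)$ invocations of \cref{lem:avg-deg-reduction} and of \cref{C:sparse} are each $1/\poly(n)$, and since $r = O(\log\log n)$ the union bound still leaves success probability $1 - 1/\poly(n)$.

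\textbf{Main obstacle.} There is no serious obstacle left once \cref{prop:repeated-applic} and \cref{C:sparse} are in hand — the argument is essentially a clean calculation of how many halvings of $\log d(G)$ are needed, plus a routine check that the combined output is a valid MIS/MM. The one point requiring a little care is the boundary regime $\log d(G) = O(\sqrt{\log n})$, where the formula $\log(\log d(G)/\sl)$ can be nonpositive or undefined; this is why the statement is phrased with the ``$+1$'', and one should simply note that in that regime \cref{C:sparse} applies directly with $r = 0$. A secondary point is confirming that the $\polylog n$ slack in \cref{prop:repeated-applic} (the $\log^2 n$ factor) does not force an extra iteration: since $\log^2 n = 2^{o(\sqrt{\log n})}$, it is swallowed by the $2^{O(\sqrt{\log n})}$ target and only affects the constant hidden in the exponent.
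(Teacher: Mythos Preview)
Your proposal is correct and follows exactly the paper's approach: apply \cref{prop:repeated-applic} with $r = O(\log(\log d(G))/\sqrt{\log n} + 1)$ to drive the residual graph's (maximum, hence average) degree down to $2^{O(\sqrt{\log n})}$, then finish with \cref{C:sparse}. Your treatment is in fact more detailed than the paper's one-line justification, including the careful handling of the boundary regime and the $\polylog n$ slack.
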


\subsection{Proof of \cref{lem:avg-deg-reduction}}

The following simple $O(1)$-round algorithm has been well studied \cite{ahn2015correlation,Kon18,assadi2018fully,ghaffari2018improved}:

\begin{blackbox}{\OneShotReduceMIS}\label[algorithm]{alg:1 tag}
    \begin{description}
        \item[Input:] A graph $G = (V,E)$, with $n$ vertices and a parameter $p$.
        \item[Output:] An independent set $S$
    \end{description}
    \begin{enumerate}
        \item Sample a subset of vertices $V_F$, where each vertex $v$ is sampled independently w.p. $p$.
        \item Let one designated vertex $v$ collect the edges of the induced subgraph $F\triangleq G[V_F]$.
        \item $v$ locally computes the lexicographically first MIS in $F$. Let $S$ be the resulting independent set.
    \end{enumerate}
\end{blackbox}

We use the following well-known result.
\begin{lemma}[{\cite[Lemma 2.2]{assadi2018fully}}]\label{lemma:deg red restate}
    Let $G=(V,E)$ be a graph and $v$ one of its vertices.
    Let $S\gets\OneShotReduceMIS(G,p)$. Let $G'$ obtained from $G$ by removing
    all edges incident to $\Gamma(S)$. Then, for any $t\ge 0$, we have $\Pr{\deg_{G'}(v) \ge t} \le e^{-tp}$.
\end{lemma}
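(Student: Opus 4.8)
The plan is to expose the randomness of \OneShotReduceMIS one coin at a time, coupling the greedy construction of $S$ with the survival of $v$'s neighbors. First I would recall that computing the lexicographically-first MIS of $F = G[V_F]$ is equivalent to scanning all of $V$ in the fixed vertex order and, upon reaching a vertex $w$, tossing its independent Bernoulli$(p)$ coin (so that $w\in V_F$ with probability $p$) and inserting $w$ into the current independent set precisely when $w\in V_F$ and $w$ has no neighbor already inserted; this is legitimate since the neighbors of $w$ in $F$ are exactly its sampled neighbors in $G$ and the running set always stays inside $V_F$. Since this set only grows, ``being covered'' (lying in the set or adjacent to it) is monotone along the scan and coincides with membership in $\Gamma(S)$ at the end. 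Two deterministic facts drive the argument: (i) if ever a neighbor of $v$ is inserted into the set, then $v\in\Gamma(S)$ and hence $\deg_{G'}(v)=0$; and (ii) calling a neighbor $u$ of $v$ \emph{good} if it is uncovered at the instant the scan reaches it, a good neighbor that happens to be sampled is inserted on the spot (it is uncovered and it is its turn), which by (i) gives $\deg_{G'}(v)=0$.

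With these in hand, I would establish the following inclusion: the event $\set{\deg_{G'}(v)\ge t}$ is contained in the event that the scan reaches at least $t$ good neighbors of $v$ and none of the first $t$ of them is sampled. For the count: any neighbor that still has its edge to $v$ present in $G'$ must satisfy $v,u\notin\Gamma(S)$, hence was uncovered throughout the scan, hence is good, so $\deg_{G'}(v)$ never exceeds the number of good neighbors. For the ``none sampled'' part: $\deg_{G'}(v)\ge t>0$ forces $v\notin\Gamma(S)$, so no neighbor of $v$ is ever in the set, so by (ii) no good neighbor was ever sampled.

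It then remains to bound the probability of that event by $(1-p)^t\le e^{-pt}$, which I would do directly from the lazy exposure. Let $\calG_i$ be the event that the scan reaches at least $i$ good neighbors and the first $i$ of them are all unsampled; these events are nested, $\calG_1\supseteq\calG_2\supseteq\cdots$. Conditioned on $\calG_{i-1}$, either the scan never reaches an $i$-th good neighbor (so $\calG_i$ fails), or it does, and at that moment a \emph{fresh} coin is revealed that comes up ``unsampled'' with probability exactly $1-p$, independently of everything exposed earlier; hence $\Pr{\calG_i\mid\calG_{i-1}}\le 1-p$. Telescoping gives $\Pr{\calG_t}\le(1-p)^t\le e^{-pt}$, and combined with the inclusion this proves the lemma. (The degenerate base case is harmless: there may be no good neighbor at all, in which case $\deg_{G'}(v)=0$ and $\calG_1$ already fails.)

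The step that needs the most care is the legitimacy of this coupling, and in particular that the indices at which good neighbors occur behave like stopping times: whether an $i$-th good neighbor is reached must be a function of coins tossed strictly before that neighbor's own coin, so that conditioning on $\calG_{i-1}$ leaves the $i$-th relevant coin uniform on $\set{0,1}$ with bias $p$. Everything else — the monotonicity of ``covered'' and the implications (i)--(ii) — is deterministic bookkeeping, and the final estimate is the elementary inequality $(1-p)^t \le e^{-pt}$.
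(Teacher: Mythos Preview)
Your proof is correct. The paper does not give its own proof of this particular lemma (it is cited from \cite{assadi2018fully}), but it does prove a non-uniform generalization (\cref{lem:mis-sparsification-non-uniform}) by exactly the same argument you use: scan the vertices in the fixed order, expose each sampling coin lazily, define $A_i$ (your $\calG_i$) as the event that at least $i$ neighbors of $v$ were reached uncovered and failed their coin, observe that $\set{\deg_{G'}(v)\ge t}\subseteq A_t$, and telescope $\Pr{A_i\mid A_{i-1}}\le 1-p$ to get $(1-p)^t\le e^{-pt}$. Your write-up is in fact a bit more careful than the paper's about the stopping-time nature of the $i$-th good neighbor and about the direction of the containment (the paper says ``equivalent'' where only one inclusion is needed and only one actually holds), but the underlying mechanism is identical.
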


This lemma implies a variant of our \cref{lem:avg-deg-reduction} in terms of \emph{maximum degree}, by setting $p = 1/\sqrt{\Delta(G)}$. See \cref{claim:w1} for a proof that the call to $\OneShotReduceMIS$ runs in $O(1)$ rounds.

We apply $\OneShotReduceMIS$ with $p = 1/\sqrt{d(G)}$.
Our main task is to bound from above the number of edges in the residual graph $H$.
\begin{lemma}\label{lem:w2}
    Suppose $\Delta(G) = \tilde{O}(n^{1/4})$.
    After running $\OneShotReduceMIS$ with $p = 1/\sqrt{d}$, the graph $F$ contains at most $36n$ edges, w.h.p.
\end{lemma}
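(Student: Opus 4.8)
The plan is to bound $\Exp{|E(F)|}$ and then apply a concentration argument that tolerates the dependencies in $F$. First I would observe that $F = G[V_F]$, where each vertex lands in $V_F$ independently with probability $p = 1/\sqrt{d}$ (writing $d = d(G)$ for brevity). An edge $e=(u,w)\in E(G)$ survives into $F$ exactly when both $u$ and $w$ are sampled, which happens with probability $p^2 = 1/d$. Hence $\Exp{|E(F)|} = |E(G)| \cdot p^2 = m/d = (nd/2)/d = n/2$, using $m = nd/2$. So in expectation $F$ has only $n/2$ edges, and the task reduces to showing $|E(F)|$ does not exceed, say, $72$ times its mean with high probability.

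The main obstacle is that $|E(F)|$ is \emph{not} a sum of independent random variables: two edges sharing an endpoint are positively correlated through the indicator that their common vertex is sampled. The natural fix is to write $|E(F)| = \sum_{v} X_v$ where $X_v = \mathbbm{1}[v\in V_F] \cdot \deg_{G[V_F]}(v) / \dots$ — actually cleaner is $|E(F)| = \frac12\sum_{v\in V_F}\deg_F(v)$, and to condition. I would instead expose the randomness in two stages or, more simply, use a Chernoff-type bound for sums of random variables with limited dependence. Concretely: let $Y_e = \mathbbm{1}[\text{both endpoints of }e\text{ sampled}]$, so $|E(F)| = \sum_{e\in E(G)} Y_e$ and the dependency graph on $\{Y_e\}$ has each $Y_e$ depending on at most $2\Delta(G)$ other variables (those sharing an endpoint). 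This is exactly the regime where one can use a read-$k$ Chernoff bound (Gavinsky–Lovett–Saks–Srinivasan) or a martingale exposure over the $n$ vertex-sampling coins with bounded per-coin influence. Each vertex-coin flip changes $|E(F)|$ by at most $\Delta(G) = \tilde O(n^{1/4})$, so a bounded-difference (Azuma–McDiarmid) inequality over $n$ coins gives deviation $O(\Delta(G)\sqrt{n\log n}) = \tilde O(n^{3/4})$ with high probability — but since the mean is only $n/2$, that additive bound is far too weak.

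So the bounded-difference route alone fails, and the real argument must be a multiplicative concentration bound that exploits both the small expectation and the bounded dependency. I would therefore invoke a read-$k$-families Chernoff bound: $|E(F)| = \sum_e Y_e$ where each $Y_e$ is a function of the $2$ coins of its endpoints, and each coin is read by at most $\Delta(G)+1 \le \tilde O(n^{1/4})$ of the $Y_e$'s; hence this is a "read-$k$" family with $k = \tilde O(n^{1/4})$. The read-$k$ Chernoff bound then gives $\Pr{\sum_e Y_e \ge 2\Exp{\sum_e Y_e} + t} \le \exp(-\Omega(t/k))$ or the sharper $\exp(-\Omega(t^2/(k\mu)))$ form; taking $t = \Theta(n)$ and $\mu = n/2$, and using $k = \tilde O(n^{1/4}) = n^{o(1)}$, the failure probability is $\exp(-\Omega(n/k)) = \exp(-n^{1-o(1)}) \le n^{-\omega(1)}$. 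This shows $|E(F)| \le 2\mu + \Theta(n) \le 36n$ with high probability, as claimed. (The constant $36$ is generous: $2\mu = n$ plus a cushion; one would pick the read-$k$ parameters to land comfortably below $36n$.) The hypothesis $\Delta(G) = \tilde O(n^{1/4})$ is used precisely to keep the read parameter $k$ small enough — in fact polynomially small in $n$ — so that the multiplicative tail is inverse-superpolynomial; this is the one place the degree assumption is essential, and it is exactly the $\tilde O(n^{1/4})$ threshold appearing in the opportunistic-routing assumption \eqref{eq:assumption-op-routing}, which is presumably why the statement is phrased this way. I expect the write-up to cite the relevant read-$k$ Chernoff inequality as a black box and then just substitute parameters.
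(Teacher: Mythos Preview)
Your read-$k$ Chernoff route does work, but you reached for it only after mistakenly dismissing the direct bounded-differences argument. You correctly compute that McDiarmid over the $n$ vertex coins, each with Lipschitz constant $\Delta(G)$, controls the deviation to $\tilde O(n^{3/4})$ with high probability --- and then declare this ``far too weak'' because the mean is $n/2$. That is backwards: the target is $|E(F)|\le 36n$, so an additive slack of order $n$ above the mean is permitted, and $\tilde O(n^{3/4})=o(n)$ is \emph{more} than strong enough. Concretely, taking $t=n$ in the bounded-differences inequality gives
\[
\Pr{|E(F)| \ge \Exp{|E(F)|} + n} \;\le\; \exp\!\parens*{-\frac{n^2}{n\,\Delta(G)^2}} \;=\; \exp\!\parens*{-\tilde\Omega(\sqrt n)} \;\ll\; 1/\poly(n)\,,
\]
so $|E(F)|\le O(n)$ w.h.p.\ immediately.

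This is exactly the paper's proof: it applies McDiarmid (\cref{lemma:bounded difference}) with per-vertex bound $c_v=\Delta(G)$, sets $t=n$, and uses the hypothesis $\Delta(G)=\tilde O(n^{1/4})$ precisely to make $n/\Delta(G)^2=\tilde\Omega(\sqrt n)$. No read-$k$ machinery is needed. Your alternative is correct but heavier; the paper's argument is the two-line application of the very inequality you already had in hand before abandoning it.
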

\begin{proof}
    For each $e\in E(G)$, let $X_e$ be the random variable indicating if both endpoints of $e$ were sampled in $F$. We have that the probability that $X_e=1$ is $p^2 = 1/d$.
    Let $X = \sum_{e\in E(G)} X_e$ denote the number of edges in $F$. By linearity of expectation, $\Exp{X} = |E(G)|/d = 2n$.
    We show concentration using the method of bounded differences (see \Cref{lemma:bounded difference}). Every vertex joins $F$ independently, and adding/removing a single vertex to $F$ affects the number of edges by at most $\Delta$. Thus, using the assumed bound on $\Delta(G)$, we get
    \begin{align*}
        \Pr{X \geq \Expp{X} + n} \leq \exp\brak{-\frac{n^2}{n \cdot \Delta(G)^2}} \le \exp\parens*{-\tilde{O}(n^{1/2})} \ll 1/\poly(n) \ .
    \end{align*}
\end{proof}

\begin{proof}[Proof of \cref{lem:avg-deg-reduction} for MIS]
    First apply \cref{lem:max-deg-reduction} twice, obtaining a residual graph $G'$ of maximum degree $\Delta(G') = O(\Delta(G)^{1/4}\log n) = \TO{n^{1/4}}$.
    Then run $\OneShotReduceMIS$ on $G'$ with $p=1/\sqrt{d(G')}$.
    By \cref{lem:w2}, the residual graph $F$ contains $O(n)$ edges. We can then gather it at a single node, using Lenzen routing, and complete the MIS $S$ on $G'$ centrally.
    The bound on the maximum degree of the residual graph $H = G[V\setminus \Gamma(S)]$ follows from \cref{lemma:deg red restate} with $t = 1/p \cdot \log n$ and $p=1/\sqrt{d(H)}$, w.h.p.
\end{proof}

\section{Leveraging Independence for Sparsification}
\label{sec:spar-independence}

In this section, we provide fast algorithms for graphs of sublinear neighborhood-independence. We first derive a common sparsification lemma in \cref{ssec:spars-bni}, which is then applied to MIS in \cref{ssec:mis-bni} and to MM in \cref{ssec:mm-bni}.

We first illustrate the proof strategy on graphs with low independence number.

\begin{restatable}{theorem}{ThmP}\label{thm:P}
    Let $G$ be a graph $\alpha(G)= n/d(G)^{\mu}$ for some $\mu=\mu(n) > 0$. There exists a randomized \CC algorithm that, w.h.p, computes an MIS (MM) of $G$ in $\BO{\log(2/\mu) + 1}$ rounds.
\end{restatable}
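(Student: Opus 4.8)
The plan is to iterate the average-degree degree-reduction of \cref{lem:avg-deg-reduction} (equivalently \cref{prop:repeated-applic}), but to control the average degree of the residual graph at each step using the Tur\'an bound together with the fact that the independence number never increases when we pass to induced subgraphs. Concretely, set $d_0 = d(G)$ and let $G_0 = G$. In round $i \ge 1$ we run one step of the algorithm of \cref{lem:avg-deg-reduction} on $G_{i-1}$, obtaining an independent set $S_i$ (matching $M_i$) and residual graph $G_i = G_{i-1}[V \setminus \Gamma(S_i)]$, whose \emph{maximum} degree is $\Delta(G_i) = O(\sqrt{d_{i-1}}\log n)$ w.h.p. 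Since $G_i$ is an induced subgraph of $G$ we have $\alpha(G_i) \le \alpha(G) = n/d(G)^\mu$, and Tur\'an's bound gives $n(G_i)/(d(G_i)+1) \le 2\alpha(G_i) \le 2n/d(G)^\mu$. Hence the number of edges of $G_i$ is
\begin{align*}
    |E(G_i)| = \frac{n(G_i)\, d(G_i)}{2} \le \frac{n\,(d(G_i)+1)^2}{d(G)^\mu} \le \frac{n\cdot O(d_{i-1}\log^2 n)}{d(G)^\mu}\,,
\end{align*}
so the average degree of $G_i$ satisfies $d(G_i) \le d_{i-1}^{1/2}\cdot \mathrm{poly}\log(n)$ from the degree-reduction guarantee, but \emph{also} $d(G_i) = 2|E(G_i)|/n(G_i)$; combining these, after $r$ iterations one gets $d(G_r) \le d(G)^{1/2^r}\cdot\mathrm{poly}\log(n)$, and more importantly the edge count drops geometrically in the exponent of $d(G)$ relative to $n$.

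The key point is the interplay: the degree-reduction step square-roots the average degree, and then Tur\'an converts the small independence number into a genuinely small \emph{edge count}, not merely a small average degree. After $r = O(\log(1/\mu)+1)$ iterations we will have driven the maximum degree of the residual graph down to $d(G)^{\mu/2}\cdot\mathrm{poly}\log(n)$ (choosing $r$ so that $1/2^r \le \mu/2$), at which point the Tur\'an bound yields $|E(G_r)| \le n\cdot d(G)^{\mu}\cdot \mathrm{poly}\log(n)/d(G)^{\mu} = \widetilde{O}(n)$ — wait, more carefully: once $\Delta(G_r) \le d(G)^{\mu/2}\,\mathrm{poly}\log n$ we get $n(G_r)\,\Delta(G_r) / (\text{nothing})$; the clean bound is $|E(G_r)| \le \alpha(G_r)\cdot\Delta(G_r) \le (n/d(G)^\mu)\cdot d(G)^{\mu/2}\,\mathrm{poly}\log n = n\,d(G)^{-\mu/2}\,\mathrm{poly}\log n \le \widetilde O(n)$. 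So the residual graph has $\widetilde O(n)$ edges, we collect it at one node via Lenzen routing (\cref{lemma:routing}), and finish the MIS (MM) centrally; the union of all the $S_i$ (resp.\ $M_i$) together with the central solution is a valid MIS (MM) of $G$, since at each step we only remove covered (matched) vertices.

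The main obstacle is bookkeeping the $\mathrm{poly}\log n$ slack across $O(\log(1/\mu))$ iterations without it accumulating to a factor that swamps the $d(G)^{-\mu/2}$ savings — in particular making sure the precondition $\Delta = \widetilde O(n^{1/4})$ of \cref{lem:w2} (invoked inside \cref{lem:avg-deg-reduction}) holds at every iteration, and that the exponent $1/2^r$ genuinely reaches $\mu/2$ rather than stalling because of the additive logarithmic terms. One must also handle the regime where $d(G)$ is itself small (say $d(G) \le \twosqlg$), where the claim follows directly from \cref{C:sparse} with $r = O(1)$, and the regime where $\mu$ is tiny, where $\log(2/\mu)$ is the dominant term; both are easy edge cases. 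For the MM version, the same argument runs verbatim using the matching branch of \cref{lem:avg-deg-reduction}, with the observation that the set of unmatched vertices in each residual graph is again an induced subgraph, so $\alpha$ only decreases, and a maximal matching of $G_r$ plus all the $M_i$ is maximal in $G$. The failure probability is a union bound over the $O(\log(1/\mu))$ applications of \cref{lem:avg-deg-reduction} and the one routing step, each of which fails with probability $1/\mathrm{poly}(n)$.
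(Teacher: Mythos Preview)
Your approach is essentially identical to the paper's: apply $r=\lceil\log(2/\mu)\rceil$ rounds of degree reduction via \cref{prop:repeated-applic} to obtain a residual graph $H$ with $\Delta(H)=O(d(G)^{\mu/2}\log^2 n)$, then combine Tur\'an with $\alpha(H)\le\alpha(G)=n/d(G)^{\mu}$ to conclude $|E(H)|=\TO{n}$ and finish in $O(1)$ rounds.

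One correction, however: your final edge bound $|E(G_r)|\le\alpha(G_r)\cdot\Delta(G_r)$ is false in general (a clique on $n$ vertices has $|E|=\binom{n}{2}$ but $\alpha\cdot\Delta=n-1$). The bound you actually need --- and which you derived correctly two lines earlier, before your ``wait, more carefully'' --- comes from Tur\'an as $|V(H)|\le 2\alpha(H)\,\Delta(H)$, whence
\[
|E(H)|\;\le\;\tfrac12\,|V(H)|\,\Delta(H)\;\le\;\alpha(H)\,\Delta(H)^2\;\le\;\frac{n}{d(G)^{\mu}}\cdot O\bigl(d(G)^{\mu}\log^4 n\bigr)\;=\;\TO{n}\,.
\]
With this fix the argument is complete; your worries about accumulating $\polylog$ slack are already absorbed in the statement of \cref{prop:repeated-applic}. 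The paper finishes via \cref{thm:degree} rather than directly gathering by Lenzen routing, but since $\TO{n}$ edges means average degree $\TO{1}\le 2^{O(\sqrt{\log n})}$, either route works.
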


\begin{proof}
    If $d(G) < 1$, then $G$ has $nd(G)/2 = \BO{n}$ edges and all edges can be gathered by a single node, using Lenzen routing, and complete the MIS $S$ on $G'$ centrally.
    Henceforth, assume $d(G) \ge 1$.
    Apply the algorithm of \cref{prop:repeated-applic} with $r=\log(2/\mu)$, %
    obtaining a residual graph $H$ of maximum degree $\Delta(H) = O(d^{\mu/2}\log^2 n)$, w.h.p.
    By Tur\'an bound and by assumption,
    \begin{align}
        \frac{|V(H)|}{2\Delta(H)}\leq \alpha(H)\leq \alpha(G)\leq \frac{n}{d^{\mu}}\label{eq:4term P}\;.
    \end{align}
    By the degree bound on $H$, we can bound the number of edges in $H$ by
    \begin{align*}
        |E(H)| \le |V(H)| \Delta(H)/2 \leq n \cdot \frac{\Delta(H)^2}{d^\mu} = \TO{n}\ .
    \end{align*}
    We can then compute the MIS (MM) of $H$ in $O(1)$ rounds via \cref{thm:degree}.
\end{proof}

\subsection{Sparsification in Terms of Neighborhood Independence}
\label{ssec:spars-bni}
The goal of this section is to prove \cref{lem:quad-red}. One should think of $t$ as some small $\poly(\log n)$.
\begin{lemma}
    \label{lem:quad-red}
    Let $s\ge 1$ be an integer and $X \subseteq V$ be a subset of vertices such that for each $v \in X$, $\deg_{G[X]}(v) \le s\sqrt{\deg_G(v)}$. Then, $G[X]$ contains $\BO{\betag n\cdot s^2\log n}$ edges.
    \label{L:deg-to-beta}
\end{lemma}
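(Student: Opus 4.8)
The plan is to partition $X$ into $O(\log n)$ classes according to degree in $G$, observe that inside each class the hypothesis forces a low maximum degree in $G[X]$ and hence a proper colouring into few independent sets, and then bound the size of each such independent set using the neighborhood independence of $G$. Concretely, for $i \ge 1$ let $X_i = \{v \in X : 2^{i-1} \le \deg_G(v) < 2^i\}$; vertices of $G$-degree $0$ are isolated in $G[X]$ and may be discarded, so $X = \bigcup_{i=1}^{L} X_i$ with $L = \lceil \log_2 n \rceil$. For $v \in X_i$ the hypothesis gives $\deg_{G[X]}(v) \le s\sqrt{\deg_G(v)} < s\,2^{i/2}$, whence
\begin{equation*}
  2\,\lvert E(G[X])\rvert = \sum_{v \in X} \deg_{G[X]}(v) < s\sum_{i=1}^{L} \lvert X_i\rvert\, 2^{i/2}\,,
\end{equation*}
so it suffices to show $\lvert X_i\rvert\, 2^{i/2} = O(\betag\, n\, s)$ for each $i$; multiplying by $s$ and summing the $L = O(\log n)$ terms then gives the claim.

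To bound $\lvert X_i\rvert$ I would first record an elementary consequence of bounded neighborhood independence: if $W \subseteq V$ is independent in $G$ and $\deg_G(w) \ge a$ for all $w \in W$, then $\lvert W\rvert \le \betag\, n / a$. Indeed, double-count the pairs $(w,v)$ with $w \in W$ and $v \in N_G(w)$: there are at least $\lvert W\rvert\,a$ of them, while for each fixed $v$ the set $W \cap N_G(v)$ is an independent subset of $N_G(v)$ and so has size at most $\alpha(G[N(v)]) \le \betag$, for a total of at most $\betag\, n$. Now $G[X_i]$ is an induced subgraph of $G[X]$, hence has maximum degree $< s\,2^{i/2}$, so a greedy colouring writes $X_i$ as a union of at most $s\,2^{i/2}+1$ sets that are independent in $G$; since every vertex of $X_i$ has $G$-degree $\ge 2^{i-1}$, applying the claim above with $a = 2^{i-1}$ to each colour class gives
\begin{equation*}
  \lvert X_i\rvert \le \bigl(s\,2^{i/2}+1\bigr)\cdot \frac{\betag\, n}{2^{i-1}} = O\!\left(\frac{\betag\, n\, s}{2^{i/2}}\right)
\end{equation*}
(using $s \ge 1$), so $\lvert X_i\rvert\,2^{i/2} = O(\betag\, n\, s)$ as needed, and the first display then yields $\lvert E(G[X])\rvert = O(\betag\, n\, s^2 \log n)$.

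The one delicate point is the step from ``$G[X]$-degrees are small inside $X_i$'' to ``$X_i$ is a union of few independent sets'': this is exactly what lets neighborhood independence enter the estimate, and it is why a crude global bound — Cauchy--Schwarz on $\sum_{v \in X}\deg_{G[X]}(v)^2 \le s^2\sum_v \deg_G(v)$, or Tur\'an applied to $G[X]$ — cannot replace the argument, since those only see $m = \lvert E(G)\rvert$, which may be $\Theta(n^2)$ even when $\betag = 1$. Everything else is routine: the degree bucketing, the greedy colouring, and the geometric sum over $i$.
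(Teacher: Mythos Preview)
Your proof is correct and follows essentially the same route as the paper's: bucket $X$ by $\deg_G$, invoke the bound $\lvert W\rvert \le \beta(G)\,n/a$ for independent sets of minimum $G$-degree $a$ (your inline claim is the paper's Lemma~4.2), and sum over the $O(\log n)$ buckets. The only cosmetic difference is that where the paper applies Tur\'an's bound $\lvert X_i\rvert/(2s\sqrt{d_i}) \le \alpha(G[X_i])$ directly, you reach the equivalent inequality $\lvert X_i\rvert \le (s\,2^{i/2}+1)\cdot\alpha(G[X_i])$ via a greedy colouring of $G[X_i]$; both arguments yield the same $\lvert X_i\rvert = O(\beta(G)\,n\,s/2^{i/2})$.
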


We also use the following standard result on the size of independent sets in $\beta$-bounded neighborhood independence graphs.
\begin{lemma}\label{prop:yi}
    Let $\delta > 0$ and $A \subseteq V_G$ such that $\deg_G(v)\geq \delta$, for all $v \in A$.
    Then, $\alpha(G[A])\le n\betag/\delta$.
\end{lemma}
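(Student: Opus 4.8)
The plan is to prove the bound by a straightforward double-counting argument over edges between a maximum independent set of $G[A]$ and the rest of the graph, using the degree lower bound on one side and the neighborhood-independence bound on the other.

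Concretely, let $I \subseteq A$ be a maximum independent set of $G[A]$, so $|I| = \alpha(G[A])$, and consider the set of pairs $P = \{(v,u) : v \in I,\ u \in N_G(v)\}$. First I would lower-bound $|P|$: since every $v \in I \subseteq A$ satisfies $\deg_G(v) \ge \delta$, each $v$ contributes at least $\delta$ pairs, so $|P| \ge \delta\,|I|$. Next I would upper-bound $|P|$ by summing over the second coordinate: for a fixed $u \in V$, the vertices $v$ with $(v,u) \in P$ are exactly $I \cap N_G(u)$, which is an independent set (being a subset of $I$) contained in $N_G(u)$, hence has size at most $\alpha(G[N_G(u)]) \le \beta(G)$. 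Summing over all $n$ choices of $u$ gives $|P| \le n\,\beta(G)$.

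Combining the two bounds yields $\delta\,|I| \le n\,\beta(G)$, i.e., $\alpha(G[A]) = |I| \le n\beta(G)/\delta$, as claimed. Since the argument is elementary, there is no real obstacle; the only points to state carefully are that $I$ is independent in all of $G$ (not just $G[A]$), so that $I \cap N_G(u)$ is genuinely an independent neighborhood subset, and that the degree condition is used with respect to $\deg_G$ rather than $\deg_{G[A]}$.
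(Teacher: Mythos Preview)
Your proposal is correct and is essentially the same double-counting argument as the paper's proof: both take a maximum independent set $I\subseteq A$, lower-bound the number of edges incident to $I$ by $\delta|I|$ via the degree assumption, and upper-bound it by $n\beta(G)$ via the neighborhood-independence bound. Your phrasing in terms of pairs $(v,u)$ is slightly more explicit, and your remark that $I$ is independent in all of $G$ (not just $G[A]$) is exactly the subtlety the paper handles by noting each such edge is ``incident only on a single node in $I$''.
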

\begin{proof}
    Let $I$ be an independent set within $A$ of size $\alpha = \alpha(G[A])$. By assumption, each $v\in I$ has at least $\delta$ incident edges (each necessarily incident only on a single node in $I$).
    On the other hand, by definition,
    a vertex can have at most $\betag$ neighbors in $I$. The lemma follows as there are at most $n\betag$ and at least $\alpha \delta$ edges with an endpoint in $I$, so $\alpha \le n\betag/\delta$.
\end{proof}

\begin{proof}[Proof of \cref{L:deg-to-beta}]
    Split the vertices into degree classes as follows: for each $i\in\set{0, 1, \ldots, \log n}$, define:
    \begin{align*}
        d_i = 2^i \ , \quad\text{and}\quad
        V_i  \triangleq\set*{v\in V\mid d_i \le \deg_G(v) < d_{i+1}} \; .
    \end{align*}

    Fix $i\in[\log n]$ and consider the subset $X_i\triangleq X\cap V_i$ of $X$ within $V_i$.
    We have
    \begin{align*}
        \frac{\abs{X_i}}{2s \sqrt{d_i}}\le \alpha(G[X_i])\le \alpha(G[V_i])\le  \frac{n\betag}{d_i}\;.
    \end{align*}
    The first inequality follows by Tur\'an's bound, by the assumed degree bound,
    and the last inequality follows by \cref{prop:yi}.
    Multiply both sides by $(2s\sqrt{d_i})^2$ to get
    \begin{align*}
        \abs{E_{G[X]}(X_i)}\leq \abs{X_i}(2s\sqrt{d_i}) \leq
        \frac{n\betag}{d_i} \cdot (2s\sqrt{d_i})^2 = 4 n\betag \cdot s^2\;.
    \end{align*}
    Therefore, by summing over all degree classes, we get that $G[X]$ contains $\BO{n\betag \cdot s^2\log n}$ edges.
\end{proof}

\subsection{Maximal Independent Set}
\label{ssec:mis-bni}
\begin{theorem}\label{thm:bounded-NI-mis}
    \sloppy{There is a randomized algorithm for MIS in \CC that runs in $\BO{\llbet}$ rounds, \whp.}
\end{theorem}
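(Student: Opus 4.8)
The plan is to combine a non-uniform one-shot sparsification step with the counting bound of \cref{L:deg-to-beta}, and then iterate. First I would set up the degree classes $V_i = \{v : 2^i \le \deg_G(v) < 2^{i+1}\}$ exactly as in the proof of \cref{L:deg-to-beta}, so that there are only $O(\log n)$ of them. The key algorithmic idea (following the non-uniform sampling of \cite{cambus2023time}) is to run a variant of \OneShotReduceMIS\ in which each vertex $v$ is sampled into $V_F$ independently with probability $p_v = 1/\sqrt{\deg_G(v)}$ rather than with a single uniform probability. A designated vertex collects $G[V_F]$ and computes the lexicographically-first MIS $S$ on it; the residual graph is $H = G[V \setminus \Gamma(S)]$. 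I would first argue that $G[V_F]$ has $O(n)$ edges w.h.p.\ — this needs a routing/concentration argument: $\Exp{|E(G[V_F])|} = \sum_{uv \in E} p_u p_v \le \sum_{uv\in E} \tfrac12(p_u^2 + p_v^2) = \sum_v \tfrac12 \deg_G(v) p_v^2 = n/2$, and concentration follows by bounded differences after a preliminary two applications of \cref{lem:max-deg-reduction} to bring $\Delta(G)$ down to $\tilde O(n^{1/4})$, just as in \cref{lem:w2}. Hence this step runs in $O(1)$ rounds by Lenzen routing (\cref{lemma:routing}).

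Next I would establish the crucial structural property of $H$: every vertex $w$ has at most $O(\sqrt{\deg_G(w)}\log n)$ neighbors in $H$ that lie in any single degree class $V_i$ — equivalently, $\deg_{H}(w) \le O(\sqrt{\deg_G(w)} \cdot \log^2 n)$ after summing over the $O(\log n)$ classes, and more to the point, $H$ restricted to $V_i$ satisfies the hypothesis of \cref{L:deg-to-beta} with $s = O(\log n)$. The reason a vertex $w$ keeps few $V_i$-neighbors is the Assadi–Bateni–Bernstein–Mirrokni–Schild-style argument underlying \cref{lemma:deg red restate}: conditioned on $w$ not being covered, each neighbor $u \in V_i$ of $w$ survives in $H$ only if none of $u$'s neighbors was sampled, an event of probability at most $\prod_{x \in N(u)}(1-p_x) \le e^{-\sum_x p_x} \le e^{-\deg_G(u)/\sqrt{\deg_G(u)}} \le e^{-\sqrt{d_i}}$ for $u\in V_i$; so the expected number of surviving $V_i$-neighbors of $w$ is at most $\deg_G(w) e^{-\sqrt{d_i}}$, which is $\le \sqrt{\deg_G(w)}\log n$ once $\sqrt{d_i} \ge \frac12\log\deg_G(w)$, and for the small-$d_i$ regime one simply bounds it by $\deg_{G}(w)$ trivially... the honest version of this needs the anti-concentration / stochastic-domination phrasing of \cref{lemma:deg red restate} applied per class, and a union bound over the $n \cdot \log n$ (vertex, class) pairs to make it w.h.p. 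Granting that, \cref{L:deg-to-beta} applied to $X = V_i \cap V(H)$ for each $i$, and summing, gives $|E(H)| = \tilde O(\betag \cdot n)$ w.h.p.

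Then I would finish by iteration. One application of the above reduces an arbitrary graph to a residual graph $H$ with $\tilde O(\betag \cdot n)$ edges, i.e.\ average degree $d(H) = \tilde O(\betag)$. If $\betag \le \twosqlg$ then $d(H) = \twosqlg$ and \cref{C:sparse} finishes in $O(1)$ rounds — this already handles the headline case. For general $\betag$, I would instead observe that each application shrinks a bound of the form "average degree $\le D$" to "average degree $\le \tilde O(\betag + \sqrt{D})$" (the $\sqrt D$ coming from the fact that, before the $\betag$ bound kicks in, the non-uniform step still behaves like the average-degree reduction of \cref{lem:avg-deg-reduction}), so that after $r$ iterations the average degree is at most $\tilde O\!\left(\betag + D^{1/2^r}\right)$; choosing $r = O(\log(\log\betag / \sqrt{\log n}) + 1) = \RC$ makes $D^{1/2^r} \le \twosqlg$, at which point $d(H) = \tilde O(\betag)$, and one more invocation of \cref{thm:degree}/\cref{C:sparse}-type reasoning (or directly gathering if $\betag$ is itself small) completes the MIS. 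Each iteration is $O(1)$ rounds, for a total of $\RC$ rounds, w.h.p.

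The main obstacle I expect is the second step: making rigorous the claim that the residual graph has $\tilde O(\sqrt{\deg_G(w)})$ neighbors per degree class w.h.p., simultaneously for all vertices and all classes. The subtlety is that survival of a neighbor $u$ of $w$ in $H$ is not independent across different $u$ (they may share neighbors, and they are all conditioned on $w$ being uncovered), so one cannot naively apply a Chernoff bound; the clean way is to mimic the proof of \cref{lemma:deg red restate} — which already handles exactly this dependency via a stochastic-domination argument — but adapted to count only neighbors within a fixed degree class and with the non-uniform sampling probabilities. Getting the high-probability statement (rather than just the expectation) out of that, uniformly over the $\tilde O(n)$ (vertex, class) pairs, is where the real work lies; everything else is Turán's bound, linearity of expectation, and the routing lemmas already in hand.
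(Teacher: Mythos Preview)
Your high-level plan matches the paper's: run one round of non-uniform sampling with $p_v = 1/\sqrt{\deg_G(v)}$, show the residual graph $H$ has $\tilde O(\betag\, n)$ edges via \cref{L:deg-to-beta}, then finish with \cref{thm:degree}. You also correctly identify that the hard part is making the per-class degree bound rigorous via a martingale argument in the style of \cref{lemma:deg red restate}. However, the specific structural claim you aim for and the heuristic you give for it are both wrong.

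First, the heuristic. You write that a neighbor $u\in V_i$ of $w$ ``survives in $H$ only if none of $u$'s neighbors was sampled''. That implication is false: a sampled neighbor $x$ of $u$ need not join $S$ (some earlier sampled vertex $y$ may cover $x$ while $y\notin N(u)$), and then $u$ may well remain uncovered. So $\Pr{u\text{ survives}}$ is \emph{not} bounded above by $\prod_{x\in N(u)}(1-p_x)$. Even granting that inequality, your next step $\sum_{x\in N(u)} p_x \ge \sqrt{\deg_G(u)}$ is also false: if $u$'s neighbors all have degree much larger than $\deg_G(u)$, then each $p_x$ is tiny and the sum can be $o(1)$. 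So the survival bound $e^{-\sqrt{d_i}}$ simply does not hold.

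Second, and consequently, the conclusion you state is the wrong one. What the martingale argument actually yields (this is the paper's \cref{lem:mis-sparsification-non-uniform}) is: for any threshold $p$, setting $X=\{u:p_u\ge p\}$, every $v\in V(H)$ has at most $O(p^{-1}\log n)$ neighbors in $H\cap X$. Taking $X=\bigcup_{j\le i}V_j$ (so $p=1/\sqrt{d_{i+1}}$) gives that every vertex has at most $O(\sqrt{d_i}\log n)$ neighbors in $H$ from degree classes $\le i$. The bound depends on $d_i$, not on $\deg_G(w)$; in particular you do \emph{not} get $\deg_H(w)\le \tilde O(\sqrt{\deg_G(w)})$, and your plan to ``apply \cref{L:deg-to-beta} to $X=V_i\cap V(H)$ for each $i$ and sum'' would miss all cross-class edges. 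The fix is to charge each edge of $H$ to its endpoint of \emph{higher} degree class: a vertex $v\in V_j\cap V(H)$ is then charged at most $O(\sqrt{d_j}\log n)$ edges, while Tur\'an plus \cref{prop:yi} bound $|V_j\cap V(H)|$ by $\tilde O(n\betag/\sqrt{d_j})$, and the product telescopes to $\tilde O(n\betag)$ per class.

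Finally, your iteration in the last step is unnecessary. A single non-uniform round already yields $d(H)=\tilde O(\betag)$, so one call to \cref{thm:degree} on $H$ runs in $O(\log(\log\betag/\sqrt{\log n})+1)$ rounds and you are done; there is no recursion of the form $D\mapsto \tilde O(\betag+\sqrt{D})$ to analyze.
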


The uniform degree reduction (\cref{prop:repeated-applic}) does not run any faster than $\Theta(\log\log n)$ rounds on graphs of high average degree.
Hence, we need a different argument for graphs of bounded neighborhood-independence. We show that sampling vertices with probability proportional to the square-root of their degrees yields good sparsification.
By running the degree-reduction algorithm twice, we may assume that all degrees are $\TO{n^{1/4}}$.

\pagebreak
\begin{blackbox}{\OneShotReduceMISNonUnif}\label[algorithm]{alg:mis-claw}
    \begin{description}
        \item[Input:] A graph $G = (V,E)$, with $n$ vertices and $\Delta(G)\leq \TO{n^{1/4}}$.
        \item[Output:] An independent set $S$
    \end{description}
    \begin{enumerate}
        \item Sample a subset of vertices $V_F$, where each vertex $v$ is sampled independently w.p.\ $p_v = 1/\sqrt{\deg_G(v)}$.
              Let a designated vertex $w$ collect the edges of the induced subgraph $F\triangleq G[V_F]$.
        \item The vertex $w$ locally computes the LFMIS $S$ on $F$ and informs each vertex whether it belongs to $S$ or not.
        \item Apply the MIS algorithm of \cref{thm:degree} on the graph $H\triangleq G[V\setminus\Gamma(S)]$.
    \end{enumerate}
\end{blackbox}

\newcommand{\yi}{\alpha(G[V_i])}

Before proving \cref{thm:bounded-NI-mis}, we introduce a refined version of the MIS sparsification lemma (\cref{lemma:deg red restate})
to capture non-uniform sampling.
\begin{lemma}
    \label{lem:mis-sparsification-non-uniform}
    Fix some $n$-vertex graph $G=(V,E)$ and parameters $p_v \in [0,1)$ for each vertex $v\in V$.
    Let $F$ be the random set of vertices obtained by picking each vertex $v\in V$ independently with probability $p_v$.
    Denote by $S$ the independent set obtained by computing the LFMIS in $F$ and let $H = G[V \setminus \Gamma(S)]$ be the residual graph.
    Let $p\in(0,1)$ and $X=\set{u\in V\mid p_u\geq p}$.
    Then, for $v\in V(H)$ and $t \ge 1$, the probability that $v$ has at least $t$ neighbors in $X$ that are within $H$ is bounded by
    \[ \Pr{\deg_{H\cap X}(v) \ge t} \le e^{-tp} \ . \]
\end{lemma}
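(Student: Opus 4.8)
The plan is to refine the exposure argument behind \cref{lemma:deg red restate}: reveal the random bits $\mathbbm 1[w\in F]$, for $w\in V$, one vertex at a time in the order $\prec$ used to define the LFMIS, while tracking only the neighbors of $v$ that belong to $X$. (When $X=V$ and all $p_v=p$ this recovers \cref{lemma:deg red restate}.)

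I would first record two deterministic facts about the greedy LFMIS process. (i)~A vertex $u$ is \emph{uncovered when reached} (no neighbor of $u$ lies in $S$ at the moment the greedy processes $u$) if and only if no neighbor $y\prec u$ of $u$ belongs to $S$; in particular, this event is determined by which vertices preceding $u$ lie in $F$. (ii)~If $u\in V(H)=V\setminus\Gamma(S)$ then $u$ was uncovered when reached, and hence $u\notin F$ (otherwise $u$ would have joined $S$). Applying (ii) to $v$ and to its neighbors: if $v\in V(H)$ then no neighbor of $v$ lies in $S$, so \emph{every} neighbor of $v$ that was uncovered when reached must lie outside $F$.

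The heart of the argument is to expose the bits $\mathbbm 1[w\in F]$ in the order $\prec$, building $S$ greedily. Let $u^{(1)}\prec u^{(2)}\prec\cdots$ enumerate, in the order they are encountered, those neighbors of $v$ that belong to $X$ and are uncovered when reached, and call this (random) list $U$. By fact~(ii), the event ``$v\in V(H)$ and $v$ has at least $t$ neighbors in $V(H)\cap X$'' forces $|U|\ge t$ (each such surviving neighbor lies in $U$) and forces every member of $U$ to be outside $F$. Now condition along $\prec$: by fact~(i), whether $u^{(i)}$ exists and which bits precede it are determined by the bits revealed strictly before $u^{(i)}$, so at the moment $u^{(i)}$ is identified the bit $\mathbbm 1[u^{(i)}\in F]$ is a fresh $\mathrm{Bernoulli}(p_{u^{(i)}})$ with $p_{u^{(i)}}\ge p$ (because $u^{(i)}\in X$). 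Hence $\Pr{u^{(i)}\text{ exists and }u^{(i)}\notin F \mid \text{history}}\le 1-p$, and chaining these $t$ estimates yields
\[
    \Pr{v\in V(H)\text{ and }\deg_{H\cap X}(v)\ge t}\ \le\ \Pr{u^{(1)},\dots,u^{(t)}\text{ all exist and none lies in }F}\ \le\ (1-p)^t\ \le\ e^{-tp},
\]
where the last step uses $1-p\le e^{-p}$. Since $\deg_{H\cap X}(v)=0$ whenever $v\notin V(H)$, this is the claimed bound. (Equivalently, one may note that $(1-p)^{-Z_\ell}$ times the indicator that no uncovered $X$-neighbor of $v$ exposed so far lies in $F$ is a supermartingale, where $Z_\ell$ counts exposed uncovered $X$-neighbors of $v$ found outside $F$ after $\ell$ steps, and then apply Markov's inequality.)

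The main point to get right is the measurability bookkeeping: one must verify that ``$u^{(i)}$ is the $i$-th uncovered $X$-neighbor of $v$'' depends only on bits revealed strictly before $u^{(i)}$ — this is precisely where fact~(i) is used — so that $\mathbbm 1[u^{(i)}\in F]$ is genuinely a fresh coin independent of the conditioning. A minor subtlety worth one sentence: $v$ may also become covered by a neighbor \emph{outside} $X$ entering $F$, but this only shrinks the event being bounded and is therefore harmless; in particular the bound neither uses nor is affected by the bits of neighbors of $v$ outside $X$.
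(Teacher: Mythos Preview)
Your proof is correct and follows essentially the same approach as the paper: both expose the sampling bits in the LFMIS order $\prec$, identify the successive uncovered $X$-neighbors of $v$, and chain the conditional bounds $\Pr{u^{(i)}\notin F\mid\text{history}}\le 1-p$ to obtain $(1-p)^t\le e^{-tp}$. Your write-up is in fact a bit more careful than the paper's about the measurability bookkeeping (that $\{u^{(i)}=w\}$ is determined by bits strictly preceding $w$), which is exactly the point you flag as needing care.
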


\begin{proof}
    Consider a vertex $v$.
    The algorithm can be described as running the following martingale
    process. Starting with an empty independent set $S$, iterate over vertices according to a fixed arbitrary order. When we reach a vertex $u \in N(v) \cap X$, if one of its neighbors (earlier in the ordering) already joined $S$, we do nothing; otherwise, we toss a $p_u$-coin and include $u$ in the set $S$ if it succeeds (i.e., w.p.\ $p_u$). We say a vertex is covered if it is in $S$ or it has a neighbor in $S$.

    For each neighbor $u \in X$ that $v$ has in the residual graph, the following must have happened: ``when we reached $u$, it was not covered and we failed the $p_u$-coin''. Let $A_i$ be the event that this event occurred at least $i$ times. This is equivalent to the event $\deg_{H\cap X}(v) \ge i$ that $v$ has residual degree in $X$ of at least $i$. By the definition of conditional probability,
    \[ \Pr{A_{i+1}} = \Pr{A_{i+1} \cap A_i} = \Pr{A_i} \Pr{A_{i+1}|A_i}\ .   \]
    Also, $\Pr{A_{i+1}|A_i} \le 1-p$, as we only toss coins for nodes $u$ in $X$, for which $p_u \ge p$.
    Hence,
    \[ \Pr{A_t} = \Pr{A_0}\prod_{i=1}^{t-1} \Pr{A_{i}|A_{i-1}} \le (1-p)^t \le e^{-tp} \ . \]
\end{proof}

We can now prove \cref{thm:bounded-NI-mis}.

\begin{proof}[Proof of \cref{thm:bounded-NI-mis}]
    We analyze the decrease in degrees after \OneShotReduceMISNonUnif.
    In \OneShotReduceMISNonUnif, all vertices in $X = \cup_{j\le i} V_j$ are sampled with probability at least $1/\sqrt{d_{i+1}}$. Hence, by \cref{lem:mis-sparsification-non-uniform} with $t=c\log n\cdot \sqrt{d_i}$ for some large enough constant $c$, we get that every vertex $v\in H$ has at most $t$ neighbors in $H\cap X$, with probability at least $1-n^{-c}$. Thus, w.h.p, we get $\Delta(H[X])\leq c\log n\cdot \sqrt{d_i}$.

    Next, we explain how to implement \OneShotReduceMISNonUnif in $\BO{1}$ rounds in \CC. For this, we need to show that the induced graph $F$ contains $\BO{n}$ edges with high probability.
    This follows from the assumption that $\Delta < n^{1/3}$ and a straightforward application of the method of bounded differences, see e.g., \cite[Lemma 3]{cambus2023time}.
    By \cref{lem:quad-red} with $s=O(\log n)$, the residual graph $H$ contains $\TO{\betag n}$ edges, and thus we can find an MIS on $H$ in $\BO{\llbet}$ rounds by \Cref{thm:degree}.
\end{proof}

\subsection{Maximal Matching}
\label{ssec:mm-bni}

\begin{theorem}\label{thm:mm-claw-only}
    There exists a randomized \CC algorithm for Maximal Matching that runs in $\RC$ rounds, \whp.
\end{theorem}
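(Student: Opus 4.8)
The plan is to mirror the MIS approach of the preceding section but to exploit the simpler structure that matching affords: unmatched vertices in a common degree class form an independent set, so Lemma~\ref{prop:yi} directly controls how many of them survive. Concretely, partition the vertices into the same degree classes $V_i = \set{v : 2^i \le \deg_G(v) < 2^{i+1}}$, for $i \in \set{0,1,\dots,\log n}$, as in the proof of Lemma~\ref{L:deg-to-beta}. In parallel over all classes, run the average-degree degree-reduction machinery (Lemma~\ref{lem:avg-deg-reduction}, and more precisely the MM version of \OneShotReduceMIS) \emph{inside each $G[V_i]$}: since every vertex of $V_i$ has degree $\Theta(2^i)$ there, the average and maximum degree of $G[V_i]$ agree up to constants, so a single application already drives the maximum residual degree down to $\TO{\sqrt{2^i}}$, and $O(1)$ further applications (Proposition~\ref{prop:repeated-applic}) bring each $G[V_i]$ down to $\polylog n$ maximum degree in $O(1)$ rounds total; the matchings found across the classes are vertex-disjoint by construction, so their union $M'$ is a valid partial matching. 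The $O(1)$-round implementability of the sampling steps follows, as before, from the assumed $\Delta(G) \le \TO{n^{1/4}}$ bound (obtained by two preliminary applications of Lemma~\ref{lem:max-deg-reduction}) together with the method of bounded differences, exactly as in the proof of Theorem~\ref{thm:bounded-NI-mis}.

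The key structural step is to bound the number of edges surviving into the residual graph $H = G[V \setminus V(M')]$. Fix a class $i$ and let $W_i \subseteq V_i$ be the vertices of $V_i$ left unmatched by $M'$. After the within-class degree reduction, $G[W_i]$ has maximum degree $\tilde O(1)$, but more importantly I claim the bulk of the residual edges of $H$ incident to $V_i$ go to $W_i$, and $W_i$ is small: within the original graph, the edges of $H$ incident to $V_i$ have both endpoints unmatched, and in particular $W_i$ is an independent set in $G$ once we have computed a \emph{maximal} matching inside $G[V_i]$. Hence Lemma~\ref{prop:yi} (applied with $\delta = 2^i$, since each $v \in W_i$ still has degree $\ge 2^i$ in $G$) gives $\abs{W_i} \le n\betag/2^i$. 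Each such vertex contributes at most $\deg_G(v) < 2^{i+1}$ edges to $H$, so the edges of $H$ with an endpoint in $V_i$ number at most $\abs{W_i}\cdot 2^{i+1} \le 2n\betag$. Summing over the $O(\log n)$ classes yields $\abs{E(H)} = \TO{n\betag}$, w.h.p.

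Given this bound, we finish exactly as in the MIS case: a graph on $n$ vertices with $\TO{n\betag}$ edges has average degree $\TO{\betag}$, so Theorem~\ref{thm:degree} computes a maximal matching of $H$ in $O(\log(\log \betag / \sqrt{\log n}) + 1) = \RC$ rounds, w.h.p. Appending this matching to $M'$ (together with any residual cleanup, which only removes further vertices) gives a maximal matching of $G$, since any edge of $G$ not covered by $M'$ lies entirely in $H$ and is therefore covered by the matching computed on $H$. The total round count is $O(1) + \RC = \RC$. The main obstacle I anticipate is the subtlety in the edge-counting step: one must argue carefully that after running the matching inside each $G[V_i]$ the unmatched set $W_i$ really is independent \emph{in $G[V_i]$} (which it is, by maximality of the within-class matching) — edges of $H$ between \emph{different} classes are then automatically accounted for because such an edge $uv$ with $u \in V_i$, $v \in V_j$, $i \le j$, is counted against $V_i$ via $u \in W_i$, and the per-class bound already sums the full degree of each unmatched vertex. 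One should double-check that the bounded-differences argument for $O(1)$-round implementability still applies class-by-class, which it does since each $G[V_i]$ inherits $\Delta \le \TO{n^{1/4}}$.
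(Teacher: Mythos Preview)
There is a genuine gap. Your claim that ``$O(1)$ further applications (Proposition~\ref{prop:repeated-applic}) bring each $G[V_i]$ down to $\polylog n$ maximum degree in $O(1)$ rounds'' is false: after $r$ applications the maximum degree is $\TO{d_i^{1/2^r}}$, so reaching $\polylog n$ requires $r = \Theta(\log\log d_i)$ rounds --- precisely the $\log\log \Delta$ you are trying to avoid. Hence you cannot obtain a \emph{maximal} matching inside each $G[V_i]$ in $O(1)$ rounds, and without maximality the unmatched set $W_i$ need not be independent in $G[V_i]$, so the appeal to Lemma~\ref{prop:yi} is unjustified. (A smaller inaccuracy: degrees in $G[V_i]$ are only bounded \emph{above} by $2^{i+1}$; a vertex may have all its neighbors in other classes, so the average and maximum degree of $G[V_i]$ need not agree.)

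The paper closes this gap by inserting Lemma~\ref{lem:quad-red} between your two phases. After only two rounds of degree reduction inside each $G[V_i]$, the within-class residual $H_i$ has maximum degree $\TO{\sqrt{d_i}}$; taking $X = V(H_i)$ in Lemma~\ref{lem:quad-red} (with $s = \TO{1}$, since $\deg_G(v) \ge d_i$ for $v \in V_i$) gives $|E(H_i)| = \TO{n\betag}$, hence the union $H = \bigcup_i H_i$ of all within-class residuals has $\TO{n\betag}$ edges. One call to Theorem~\ref{thm:degree} on $H$ --- costing $\RC$ rounds --- then completes each within-class matching to maximality. At that point your argument takes over verbatim: the unmatched $W_i$ are independent in $G[V_i]$, Lemma~\ref{prop:yi} gives $|W_i| \le n\betag/d_i$, the cross-class residual $G'$ has $\TO{n\betag}$ edges, and a second call to Theorem~\ref{thm:degree} finishes. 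Your edge-counting for $G'$ and the final step are exactly right; what was missing is the use of Lemma~\ref{lem:quad-red} to bound the within-class residual \emph{before} reaching maximality.
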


We first derive an algorithm for induced subgraphs where all nodes are of roughly the same degree. This is obtained by several rounds of degree reduction, after which the residual subgraph is sparse. We then use this to solve the general case.

Recall that $V_i$ is the set of vertices with degree in the interval $[d_i,d_{i+1})$, where $d_i=2^i$ for $i\in[\ell]$. We denote the $i$-th degree class by $V_i=\set{v\in V\mid d_i\leq \deg_G(v)< d_{i+1}}$.

\begin{proof}
    The algorithm is as follows. We first find a maximal matching $M_i$ on each subgraph $G[V_i]$ (in parallel), and then find a maximal matching on the residual graph $G' = G[V \setminus \cup_i M_i]$.
    To find a maximal matching on $G[V_i]$, we perform two iterations of degree reduction. By \cref{lem:avg-deg-reduction}, the residual subgraph $H_i$ on $G[V_i]$ has maximum degree at most $\BO{\sqrt{d_i}\cdot \log n}$, \whp.
    Then by \cref{L:deg-to-beta}, with $s=\BO{c\log^2 n}=\TO{1}$ each $H_i$ contains $\TO{\betag n}$ edges.
    Thus, the graph $H = \cup_i H_i$, containing only edges within degree classes, has $\TO{\betag n}$ edges. By \cref{thm:degree}, we can find a maximal matching $M$ on $H$ in the desired time complexity.
    For every $i$, let $M_i$ denote the subset of edges in $M$ that are in $G[V_i]$. Then, $M_i$ is maximal, as $M$ is maximal on $H$, and $H$ contains only edges within degree classes.

    In what follows, we find a maximal matching on the residual graph $G' = G[V \setminus \cup_i M_i]$, by showing that $G'$ has $\TO{\betag n}$ edges.
    Let $F_i = V_i \setminus M_i$ be the set of unmatched vertices in $V_i$ after computing a maximal matching on $H$. Observe that $F_i$ is an independent set within $V_i$.
    Thus, by \cref{prop:yi}, $|F_i| \le \alpha(G[V_i]) \le n \betag/d_i$ and the number of edges in $G$ incident to $F_i$ is at most $2 d_i |F_i| \le 2n \betag$.
    Hence, $G' = G[V \setminus \cup_i M_i]$ contains at most $\sum_i 2n \betag = 2n \betag \cdot \log n = \TO{\betag n}$ edges. The round complexity then follows from \cref{thm:degree}.
\end{proof}

\section{Hard Graphs}
\label{sec:hard}

\subsection{MIS}
\newcommand{\Gh}{G_{\mathrm{hard}}}
\newcommand{\Ghmis}{G_{\mathrm{hard}}^{\mathrm{MIS}}}
\newcommand{\Ghmm}{G_{\mathrm{hard}}^{\mathrm{MM}}}
\newcommand{\Vl}{V_{\mathrm{left}}}
\newcommand{\subT}[1]{\mathsf{SubTree}\brak{#1}}
\newcommand{\Anc}{\mathsf{Anc}}
\newcommand{\Th}[1]{\mathcal{T}_h(#1)}
\newcommand{\FF}{\mathcal{F}}
\newcommand{\EE}{\mathcal{E}}
\newcommand{\algA}{\ReduceMis}
\newcommand{\algB}{\ReduceMM}
\newcommand{\algG}{\ReduceMM}
\newcommand{\level}{\mathsf{level}}
\newcommand{\ellval}{\ell/20}
\newcommand{\Ein}{E_{\mathrm{in}}}
\newcommand{\Esamp}{E_{\mathrm{s}}}
\newcommand{\Aj}[1]{L^{\mathrm{out}}_{#1}}
\newcommand{\Bj}[1]{L^{\mathrm{in}}_{#1}}
\newcommand{\prupval}{4\log n/k^{1/(2\cdot 4^r)}}
\newcommand{\prupvall}{\log n/k^{1/(2\cdot 4^r)}}
\newcommand{\spa}{1-2^{-\SL}}

In this section, we analyze the following algorithm that we refer to as $\ReduceMis$. The state-of-the-art algorithm of \cite{ghaffari2018improved} as well as the algorithms we presented here can be seen as special cases of $\ReduceMis$.
The algorithm is as follows. Let $G_0=G$ be an $n$-vertex $m$-edge graph.
In the $i$-th iteration, the algorithm samples a subset $F_i$ of vertices from the current remaining graph $G_i$, where each vertex $v$ is sampled independently with probability $p_{i,v} \leq 1/\sqrt{\delta(G_i)}$, where $\delta(G_i)$ is the minimum degree of $G_i$.
Then, the algorithm computes the lexicographically first MIS $S_i$ on the induced graph $G[F_i]$, and repeats with the graph $G_{i+1} = G[V\setminus \Gamma(S_{\leq i})]$, where $S_{\leq i} = \bigcup_{j\in\set{0,1,\dots,i}}S_j$.
When $G_i$ is empty, the set $S_{\leq i}$ is a maximal independent set and the algorithm ends.

In particular, two special cases of this algorithm are obtained when we set $p_{i,v} = 1/\sqrt{d(G_i)}$ (as in \cite{ghaffari2018improved}) or $p_{i,v} = 1/\sqrt{\deg_{G_i}(v)}$ (as in \cref{prop:repeated-applic}).

We claim that, with high probability, this algorithm needs $\Omega(\log\log n)$ sampling rounds before it is completed.

~\\\textbf{Hard graph distribution:}
Our hard graph distribution is as follows. We take $n= k\times \ell$ vertices and split them into $\ell$ \emph{levels}, with $k$ vertices in each level. In level $i\in \{0,\dots \ell-1\}$, we group vertices into $k^{1-1/4^i}$ \emph{clusters}, which we refer to as the $i$-th level clusters, each containing $k^{1/4^i}$ vertices. In particular, level 0 contains a single cluster of $k$ vertices. We denote the set of vertices in the $i$-th level by $L_i$, and use $L_{\geq i}$ to denote the set of all vertices of level at least $i$.

Next we add edges to the graph.
First, we form a clique out of each cluster.
This will ensure that any independent set in the graph contains at most one vertex from each cluster.
Next, we add random edges to the graph as follows.
Let $u$ be some vertex in the $i$-th level, and let $v$ be some vertex in the $j$-th level, where $i<j$. We add the edge $(u,v)$ with probability $$q_i\triangleq (k^{1/4^i-1})/\log^2 n \ .$$
This completes the description of the hard graph family for MIS, which we denote by $\Ghmis{(k,\ell)}$.

~\\\textbf{Intuition for other graph parameters:}
The reason that this lower bound does not contradict the improved algorithms in terms of $\Delta,d,\betag$, is that in this graph family, we have that $\betag$ and $d(G)$ are both $\Omega(2^{\log^{0.75} n})$, for which
$\log(\log(2^{\log^{0.75} n})/\sqrt{\log n})=
    \log(\log(2^{\log^{0.25} n})) =\Theta(\log\log(n))$.
This holds because
in $G_r$, an active vertex $v$ of level $r$ has $\alpha(G_r[N(v)])\geq k^{1-4/4^{\ell}}q_r/2$ w.h.p, as this is the number of clusters in the last level, with at least one active vertex which is a neighbor of $v$, where $k^{1-4/4^{\ell}}q_r/2\geq k^{1/4^{r+1}}$.

~\\We now state the main theorem of this section.
\begin{theorem}\label{thm:hard graph}
    On the hard graph distribution $\Ghmis(k,\ell)$, where $\ell=\log(\log(k))/20$, the algorithm $\ReduceMis$ takes  $\Omc[\log(\log n)]$ iterations to compute an MIS, with probability at least $1-2^{-\SL}$.
\end{theorem}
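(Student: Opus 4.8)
The plan is to track, iteration by iteration, how many clusters at each level remain "untouched" (i.e., no vertex in them has been covered), and show that level $r$ cannot be fully covered before roughly iteration $r$. Since there are $\ell = \Theta(\log\log n)$ levels, this forces $\Omega(\log\log n)$ iterations. The core probabilistic statement I would aim for is a potential/invariant of the form: with high probability, at the start of iteration $r$, at least a $(1-o(1))$ fraction of the vertices of level $r$ are still uncovered (equivalently, almost all level-$r$ clusters are untouched). Given this invariant, the theorem is immediate: at iteration $\ell-1$ there is still an uncovered vertex, so $S_{\le \ell-1}$ is not maximal, hence the algorithm runs for $\ge \ell = \Omega(\log\log n)$ iterations.

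\textbf{Setting up the invariant.} First I would record the relevant structural facts about $\Ghmis(k,\ell)$. The graph is, with high probability, nearly regular: a vertex $v$ at level $j$ has degree dominated by its clique edges ($k^{1/4^j}-1$ inside its cluster) plus random cross-level edges; the number of random edges incident to $v$ contributed by level $i<j$ is $\mathrm{Bin}(k, q_i)$ with mean $\approx k^{1/4^i}/\log^2 n$, and summing over $i \le j$ the dominant term is $i$ small, giving degree $\tilde{\Theta}(k)$ for essentially all vertices (level 0 dominates). In particular $\delta(G_i) = \tilde\Omega(k)$ as long as level 0 is not yet fully covered, so the sampling probability satisfies $p_{i,v} \le 1/\sqrt{\delta(G_i)} = \tilde O(1/\sqrt{k})$. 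This is the key quantitative handle: in any single iteration, a fixed cluster $C$ at level $j$ (which has $k^{1/4^j}$ vertices) is "sampled" — i.e., at least one of its vertices lands in $F_i$ — with probability at most $k^{1/4^j} \cdot \tilde O(1/\sqrt k)$, which is tiny once $j \ge 1$. A cluster at level $j$ can only get covered in iteration $i$ if either (a) it is itself sampled, or (b) one of its vertices has a random neighbor at some level $i' < j$ whose cluster is sampled and whose representative joins $S_i$. I would bound (a) directly by the union bound above, and bound (b) by conditioning on which lower-level clusters were sampled in iteration $i$: since each such cluster causes exactly one vertex to enter $S_i$ and that vertex has $\mathrm{Bin}(k, q_i)$ cross-edges (mean $\tilde O(k^{1/4^i})$) into level $j$, and there are few sampled low-level clusters, the expected number of newly covered level-$j$ vertices per iteration is small relative to $k$.

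\textbf{The inductive step and the main obstacle.} The heart of the argument is: assuming the invariant holds at iteration $r-1$ (almost all of levels $\ge r-1$ are uncovered), show it still holds at iteration $r$. The quantity to control is the number of level-$r$ vertices covered during iteration $r-1$ (and, for a tighter argument, I'd actually want "not covered before iteration $r$ completes", summing the per-iteration covering bounds over iterations $0,\dots,r-1$ — but since each earlier iteration covers only an $o(1)$ fraction and there are only $r = O(\log\log n)$ of them, and the fractions decay geometrically in the level gap, the total is still $o(1)$). The main obstacle is the dependency structure introduced by the random cross-level edges: whether a level-$r$ vertex gets covered depends jointly on the random graph (which clusters it is adjacent to at lower levels) and on the algorithm's random sampling, and the algorithm is adaptive (iteration $i$'s sampling depends on $G_i$, which depends on previous covering). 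I would handle this by (i) conditioning on the random graph realization and proving the nearly-regular / small-degree-into-higher-levels properties hold w.h.p., then working in that conditioned space; (ii) for the adaptive sampling, using the fact that $p_{i,v} \le 1/\sqrt{\delta(G_i)}$ holds *regardless* of history as long as level 0 survives, so the per-iteration union bounds are valid conditionally; and (iii) using a bounded-differences / Azuma concentration (as in \cref{lemma:bounded difference}) over the sampling coins of a single iteration to turn the expectation bound on newly-covered level-$r$ vertices into a high-probability bound, exploiting that adding one vertex to $F_i$ changes the count of newly covered level-$r$ vertices by at most $\tilde O(k^{3/4})$ or so (its degree), which is $o(k)$ relative to the $k$ vertices in a level. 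The probability budget $1-2^{-\SL}$ with $\SL = 2^{\sqrt{\log n}}$ is generous — it suggests each bad event is killed with probability $e^{-\mathrm{poly}(k^{\varepsilon})}$ or $e^{-k^{\Omega(1)}}$, which the concentration bounds above comfortably deliver since the relevant means and gaps are polynomial in $k = 2^{\Theta(\sqrt{\log n})}$-ish — and a union bound over $O(\ell \cdot \ell) = O(\log^2\log n)$ events (levels $\times$ iterations) is harmless. I would finish by noting that the invariant at $r = \ell - 1$ guarantees an uncovered vertex remains, contradicting maximality of $S_{\le \ell-2}$, so at least $\ell = \Theta(\log\log n)$ iterations occur w.h.p., which is the claimed bound.
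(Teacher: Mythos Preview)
Your overall architecture---an inductive invariant asserting that almost all of level $j$ survives until iteration $j$, a split of newly-covered vertices into ``own cluster was sampled'' versus ``covered via a cross-edge'', and a union bound over $O(\ell^2)$ events---matches the paper's proof via \cref{lemma:vertices in levels}. But there is a concrete quantitative error that breaks the induction as you have set it up.

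You anchor the sampling-probability bound on level $0$: ``$\delta(G_i)=\tilde\Omega(k)$ as long as level $0$ is not yet fully covered, so $p_{i,v}\le\tilde O(1/\sqrt k)$''. Level $0$ is a \emph{single} cluster of $k$ vertices; with sampling probability $\sim 1/\sqrt k$ it is hit in iteration $0$ with probability $1-(1-1/\sqrt k)^k \approx 1$, and then the entire level is covered. So for $r\ge 1$ you have no bound on $\delta(G_r)$, and the estimate $p_{r,v}\le\tilde O(1/\sqrt k)$ fails. The paper's \cref{claim:pr} instead shows that at iteration $r$, level $r$ (which by induction still has $\ge k/2$ active vertices) supplies every surviving vertex with $\sim kq_r=k^{1/4^r}/\log^2 n$ random neighbors, giving $\delta(G_r)\gtrsim k^{1/4^r}$ and hence $p_r\lesssim \log n\cdot k^{-1/(2\cdot 4^r)}$. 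The sampling probability therefore \emph{grows} with $r$, and the reason a level-$j$ cluster (for $j>r$) is still unlikely to be sampled is the inequality $p_r\cdot k^{1/4^j}\le k^{1/4^{r+1}-1/(2\cdot 4^r)}\cdot\log n\ll 1$, not the uniform $\tilde O(1/\sqrt k)$ bound.

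Two secondary gaps: (i) you only account for covering via \emph{lower}-level neighbors, but a level-$j$ vertex can equally be covered by a sampled level-$j'$ vertex with $j'>j$ through a random edge of probability $q_j$; the paper treats the cases $i\le r$ and $i>r$ separately and the latter needs a different estimate (bounding $|F_r\cap L_i|$ rather than $|S_r\cap L_i|$). (ii) Your plan to ``condition on the random graph, then apply bounded differences over the sampling coins'' requires a Lipschitz bound for the LFMIS output, which is not obviously $\tilde O(k^{3/4})$: flipping one sampling coin can cascade through the greedy order. The paper sidesteps this entirely by \emph{deferred decisions}: first sample $F_r$, then sample random edges inside $G[F_r]$ and compute $S_r$, and only then reveal the random edges between $F_r$ and $V\setminus F_r$. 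This makes the covering-via-cross-edge events genuinely independent Bernoullis conditioned on $S_r$, so plain Chernoff suffices.
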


We denote the level of a vertex $v$ by $\level(v)$, and the level of cluster $C$ by $\level(C)$.
We emphasize that both iterations and levels employ zero-based indexing.
The key ingredient is the following.
\begin{lemma}\label{lemma:vertices in levels}
    For every $r\in\set{0,1,\ldots,\ell/2}$ and every $r\leq j<\ell$, the graph $G_r$ has at least $k(1-10\ell\cdot r/\log^2n)$ active vertices in $L_j$, with probability at least $\spa$.
\end{lemma}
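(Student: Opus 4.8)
The plan is to track the fraction of "active" (uncovered) vertices in each level $L_j$ across the first $\ell/2$ sampling iterations, and show that in each iteration at most a tiny $O(\ell/\log^2 n)$ fraction of vertices in each level $L_j$ (for $j \ge r$) gets covered. Since the total number of iterations considered is $\le \ell/2$ and $\ell = \Theta(\sqrt{\log n})$, the cumulative loss over all iterations stays $o(1)$, which is the content of the lemma. I would proceed by induction on $r$. The base case $r=0$ is trivial since $G_0 = G$ has all $k$ vertices active in every level.

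\medskip\noindent\textbf{Inductive step.} Assume $G_r$ has at least $k(1 - 10\ell r/\log^2 n)$ active vertices in each $L_j$ with $j \ge r$, and condition on this event. Fix a level $j \ge r+1$ (vertices in $L_r$ itself may all get covered in iteration $r$ — that is exactly why the quantifier is $j \ge r+1$ after one more step, matching the $r \le j < \ell$ range shifted by the induction). A vertex $v \in L_j$ becomes covered in iteration $r$ if either (a) some vertex in $v$'s own cluster is sampled and joins $S_r$, or (b) some neighbor of $v$ in a \emph{different} cluster is sampled and joins $S_r$. For (a): a $j$-th level cluster has $k^{1/4^j}$ vertices, each sampled with probability $p_{r,v} \le 1/\sqrt{\delta(G_r)}$; I would first lower-bound $\delta(G_r)$ — since the graph is nearly regular with degree $\approx \sum_{i} k^{1-1/4^i} \cdot q_i \cdot (\text{stuff}) = \tilde\Theta(k^{?})$, and covered vertices only decrease degrees, one shows $\delta(G_r)$ is still large enough that the sampling probability per vertex is at most roughly $k^{-(1-1/4^j)/2}/\poly\log$, so the expected number of own-cluster samples is at most $k^{1/4^j} \cdot k^{-(1-1/4^j)/2}/\poly\log \le 1/\poly\log$ — wait, this needs care: for small $j$ the cluster is large. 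The cleaner route: the probability that cluster $C$ (at level $j$) is sampled at all is at most $|C| \cdot \max_v p_{r,v}$, and one bounds $\sum_{C \subseteq L_j} \Pr[C \text{ sampled}]$ to control the total number of covered clusters, hence covered vertices, in level $j$ via linearity of expectation plus a concentration bound (Chernoff/bounded differences). For (b): a vertex $v \in L_j$ has $\tilde O(k^{1/4^i})$ random-edge neighbors in each level $i < j$ (in expectation $\approx k \cdot q_i = k^{1/4^i}/\log^2 n$), each sampled with probability $\le 1/\sqrt{\delta(G_r)}$; summing over $i$ and over the (at most) $\poly\log$ levels gives that the expected number of sampled out-of-cluster neighbors of $v$ is at most $O(\ell)/\log^2 n$ times something controlled. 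Then a union bound / concentration argument over all $v \in L_j$ shows the covered fraction in $L_j$ increases by at most $O(\ell/\log^2 n)$, closing the induction with the $10\ell r/\log^2 n \to 10\ell(r+1)/\log^2 n$ bookkeeping.

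\medskip\noindent\textbf{Probability bookkeeping.} Each iteration contributes a failure probability of at most, say, $2^{-\Omega(\SL)}$ from the concentration bounds (the relevant random variables are sums of many independent-ish indicators with the right scale, so Chernoff gives failure probability exponentially small in a quantity that is $2^{\Omega(\sqrt{\log n})}$). There are only $\ell/2 = O(\sqrt{\log n})$ iterations and $\ell < \poly\log n$ levels, so a union bound keeps the total failure probability at $1 - 2^{-\SL}$ as claimed. I would also need the degrees to stay concentrated so that $p_{r,v} \le 1/\sqrt{\delta(G_r)}$ translates into a usable numerical upper bound on $p_{r,v}$ throughout; this follows because at most an $o(1)$ fraction of vertices are ever deactivated in the first $\ell/2$ rounds (which is what we are proving), so degrees barely move — a mild circularity resolved by carrying the degree bound inside the same induction.

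\medskip\noindent\textbf{Main obstacle.} The delicate point is controlling coverage via \emph{out-of-cluster} random-edge neighbors for the \emph{small high-level clusters}: a level-$j$ cluster with $j$ large has only $k^{1/4^j}$ vertices, but each such vertex has many random neighbors in lower levels, and a single sampled low-level vertex can cover an entire high-level cluster. One must argue that the sampling probabilities $p_{r,v} \le 1/\sqrt{\delta(G_r)}$ — even in the adversarial, degree-proportional case $p_{r,v} = 1/\sqrt{\deg_{G_r}(v)}$ — are small enough relative to $q_i$ that the expected number of sampled neighbors across all lower levels is $\ll 1$, and crucially that this survives the worst case over the algorithm's choice of the $p_{i,v}$'s. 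The near-regularity engineered by the choice $q_i = k^{1/4^i - 1}/\log^2 n$ is exactly what makes $\delta(G_r) \ge \tilde\Omega(\text{max degree})$, so that no vertex can be sampled with probability much larger than the "uniform" $1/\sqrt{d(G)}$; pinning down this near-regularity quantitatively (and checking $k^{1-4/4^\ell} q_r / 2 \ge k^{1/4^{r+1}}$, i.e.\ that even the last level stays dense) is the technical heart of the argument.
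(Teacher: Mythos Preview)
Your inductive skeleton matches the paper's: induction on $r$, the split into ``own-cluster sampled'' versus ``out-of-cluster neighbor in $S_r$'', and carrying a lower bound on $\delta(G_r)$ inside the induction to control $p_{r,v}$. Case (a) is essentially right. Case (b), however, has a real gap.

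You propose to bound, for each $v\in L_j$, the expected number of \emph{sampled} out-of-cluster neighbors, i.e., essentially $|N(v)\cap F_r|$. This is far too loose. For level $i=0$ alone, $v$ has about $k q_0=k/\log^2 n$ random-edge neighbors in $L_0$, each sampled with probability $p_r\le 4\log n/k^{1/(2\cdot 4^r)}$, so the expected number of sampled $L_0$-neighbors is roughly $k^{1-1/(2\cdot 4^r)}/\log n\ge \sqrt{k}/\log n$, which is enormous. Thus essentially every $v\in L_j$ has a sampled neighbor in $L_0$, and your bound cannot close. (Your ``main obstacle'' paragraph anticipates trouble but misdiagnoses it: near-regularity does \emph{not} make the expected number of sampled neighbors $\ll 1$.)

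The fix, which the paper uses, is that only $S_r$ (not $F_r$) covers $v$. Since $S_r$ is independent, $|S_r\cap L_i|$ is at most the number of clusters in level $i$, namely $k^{1-1/4^i}$, regardless of how many vertices were sampled. Then the probability that $v$ is adjacent to some vertex of $S_r\cap L_i$ is at most $k^{1-1/4^i}\cdot q_i=1/\log^2 n$. This combinatorial bound is indispensable for low levels $i\le r$. For levels $i>r$, your style of argument (bounding via $|F_r\cap L_i|$) does work, because now $q_{\min(i,j)}\le q_{r+1}$ is small enough; this is why the paper splits at $i\le r$ versus $i>r$ rather than at $i<j$.

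Two smaller points: you only treat neighbors in levels $i<j$, but $v\in L_j$ also has random-edge neighbors in levels $i>j$ (each present with probability $q_j$), and these must be handled too; and $\ell=\log\log k/20=\Theta(\log\log n)$, not $\Theta(\sqrt{\log n})$.
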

\Cref{lemma:vertices in levels} implies \Cref{thm:hard graph}
as it shows that after $\ell/2$ iteration at least half of the vertices of level $\ell/2$ are still active, with probability at least $1-2^{-\Theta(\SL)}$, by a union bound, so the algorithm is not yet done.

\begin{proof}[Proof of \Cref{lemma:vertices in levels}]
    We prove the lemma by induction on $r$. Let $\chi = \Theta(2^{-2\sqrt{\log n}} / \ell^3)$, which will bound error probabilities throughout the proof.

    \textbf{Base Case:}
    When $r=0$ every vertex is active and every layer contains exactly $k$ vertices. The induction hypothesis therefore trivially holds.

    \textbf{Induction Step:}
    In the $r$-th iteration, the algorithm samples a random set of vertices $F_r$, where each vertex $v\in V$ joins $F_r$ with probability $p_{r,v}\geq 1/\sqrt{\delta(G_r)}$, and computes an independent set $S_r$ over $G_{r}[F_r]$.
    We will need to use the fact that $p_{r,v}\leq \prupval$, which we deduce
    from the induction hypothesis as follows.
    \begin{claim}\label{claim:pr}
        If the induction hypothesis holds for $r$, then the graph $G_r$ has minimum degree smaller than $\Lambda=\frac{k^{1/4^r}}{16\log^2 n}$ with probability at most $\chi/2$.
    \end{claim}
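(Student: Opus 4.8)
The plan is to prove the equivalent statement that, conditioned on the induction hypothesis $\mathcal{E}_r$ (that, for every $j\in\{r,\dots,\ell-1\}$, at least $k(1-10\ell r/\log^2 n)$ vertices of $L_j$ are active in $G_r$), we have $\delta(G_r)\ge\Lambda$ except with probability at most $\chi/2$. Since $r\le\ell/2$ and $\ell=\Theta(\log\log k)$, the subtracted fraction $10\ell r/\log^2 n$ is $o(1)$, so under $\mathcal{E}_r$ every level $L_j$ with $j\ge r$ contains at least $k/2$ active vertices. It then suffices to show that every active vertex $v$ of $G_r$ has at least $\Lambda$ active neighbours. Fix such a $v$ and choose a ``target level'' $j^*\ge r$ with $j^*\neq\level(v)$: take $j^*=r$ if $\level(v)\neq r$, and $j^*=r+1$ otherwise (legitimate since $\ell$ is large). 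I will lower-bound $\deg_{G_r}(v)$ just by the number of active vertices of $L_{j^*}$ that are joined to $v$ by a random cross-level edge of $G$.

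For the counting: each potential edge between $v$ and a vertex of $L_{j^*}$ is present independently with probability $q_{\min(\level(v),j^*)}$, which in all three cases above is at least $q_r=k^{1/4^r-1}/\log^2 n$ (the $q_i$ are decreasing in $i$). Since at least $k/2$ vertices of $L_{j^*}$ are active, the expected number of active $L_{j^*}$-neighbours of $v$ is at least $\tfrac{k}{2}q_r=\tfrac{k^{1/4^r}}{2\log^2 n}=8\Lambda$, so a Chernoff bound gives that this count is below $\Lambda$ with probability $e^{-\Omega(\Lambda)}$, and a union bound over the at most $n$ vertices $v$ gives total failure probability $n\,e^{-\Omega(\Lambda)}$. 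This is far below $\chi/2=\Theta(2^{-2\sqrt{\log n}}/\ell^3)$: because $r\le\ell/2=\tfrac{1}{40}\log\log k$ we get $4^r\le(\log k)^{1/20}$, hence $k^{1/4^r}\ge 2^{(\log k)^{19/20}}$ and $\Lambda\ge 2^{(\log k)^{19/20}}/(16\log^2 n)$, which dwarfs $\log(n/\chi)=O(\log n)$.

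The delicate point, and the one I expect to be the main obstacle, is justifying that the $q_r$-random edges from $v$ into $L_{j^*}$ may be treated as fresh randomness, since the set of active vertices of $L_{j^*}$ in $G_r$ is itself an output of $\ReduceMis$ run on a random graph and could a priori depend on those very edges. I would resolve this by deferred exposure: reveal all of the algorithm's coins and every potential edge of $G$ except the bundle $E_v$ of potential edges between $v$ and $L_{j^*}$. On the event that $v$ is active in $G_r$, $v$ never enters $S_0,\dots,S_{r-1}$, so no edge of $E_v$ is ever incident to a chosen MIS vertex; inspecting the lexicographically-first MIS computed on each $G_{r'}[F_{r'}]$ then shows, by induction on $r'<r$, that toggling the edges of $E_v$ changes neither $S_0,\dots,S_{r-1}$ nor the activeness of any vertex other than $v$, so in particular the active set of $L_{j^*}$ is a fixed function of the exposed randomness. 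Conditioned on that randomness (and hence also on $\mathcal{E}_r$, which under this event is the same function of it), the number of active $L_{j^*}$-neighbours of $v$ is distributed exactly as $\mathrm{Bin}(m,q_{\min(\level(v),j^*)})$ with $m\ge k/2$, so the Chernoff estimate above applies verbatim. One subtlety to spell out is that if the sampling probabilities $p_{r',\cdot}$ are chosen adaptively as a function of $\delta(G_{r'})$, then $E_v$ can perturb $\delta(G_{r'})$ and hence $F_{r'}$; this is handled either by treating the sampling as oblivious to $E_v$ or by revealing $E_v$ edge-by-edge and invoking a bounded-difference inequality. The remaining ingredients, namely the Turán/Chernoff arithmetic and the asymptotics of $\Lambda$ against $\chi$, are routine given $\ell=\Theta(\log\log k)$ and $r\le\ell/2$.
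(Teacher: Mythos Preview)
Your proposal is correct and follows essentially the same route as the paper: both arguments use the induction hypothesis to guarantee at least $k/2$ active vertices in each level $j\ge r$, then lower-bound $\deg_{G_r}(v)$ for each active $v$ by counting random cross-level edges to one such level (giving expectation $\ge kq_r/2=8\Lambda$), apply Chernoff, and union-bound over the $n$ vertices; your asymptotic estimate $\Lambda\ge 2^{(\log k)^{\Omega(1)}}\gg\log(n/\chi)$ matches the paper's computation that $\Lambda\ge 2^{\log^{3/4}n}$. The only cosmetic difference is that the paper splits into the cases $\level(v)>r$ (target level $r$) versus $\level(v)\le r$ (target all levels $>r$), whereas you pick a single target level $j^*\in\{r,r+1\}$ per vertex; both yield the same lower bound $q_{\min}\ge q_r$.

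Where you go beyond the paper is in flagging and sketching the deferred-exposure justification for treating the edges $E_v$ as fresh randomness. The paper simply asserts independence and applies Chernoff without comment. Your observation that ``$v$ active in $G_r$'' forces $v\notin F_{r'}$ for every $r'<r$ (since any sampled vertex is either chosen into $S_{r'}$ or covered by it), and hence that no edge of $E_v$ ever appears in any $G_{r'}[F_{r'}]$, is the right mechanism and would benefit the paper itself. Your caveat about adaptive sampling probabilities is also well taken; the paper does not address it either.
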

    \begin{proof}
        We begin with some technical bound on $\Lambda$, which shows that $\Lambda\geq 2^{\log^{3/4}n}$.
        First note that
        \begin{align*}
            \frac{k^{1/4^r}}{2\log^2 n}
            \geq 2^{\log k/4^r-2\log\log n}\;.
        \end{align*}
        Therefore, to prove that $\Lambda\geq 2^{\log^{3/4}n}$, it suffices to show that $\log k/4^r\geq 2\log^{3/4} n$.
        We prove that $\log k/4^r\geq 2\log^{3/4} n$:
        \begin{align*}
            \frac{\log k}{4^r }
            = \frac{\log k}{2^{2r}}
            \geq \frac{\log k}{2^{2\ell}}
            = \frac{\log k}{\log^{1/10} n}
            \geq \frac{\log k}{\log^{1/10} (k^2)}
            \geq \log^{8/10} (k)
            \geq 2\log^{3/4} (k) \;.
        \end{align*}

        ~\\We are now ready to prove the claim.
        The induction hypothesis states that $G_r$ has at least $k(1-10\ell\cdot r/\log^2 n)$ vertices in level $j$ for $j\geq r$.
        Note that $k(1-10\ell\cdot r/\log^2 n)\geq k/2$.

        We split the analysis into two cases: vertices in level at least $r+1$ and vertices in level at most $r$.
        Consider a vertex $u$ in $G_r$ that is level at $h > r$.
        Since every vertex in $L_r$ is a neighbor of $u$ with probability $q_r$, independently of all other vertices, we have that $u$ has in expectation at least $k\cdot q_r/2$ neighbors in $G_r$, where
        \begin{align*}
            \frac{k\cdot q_r}{2}= \frac{k^{1/4^r}}{2\log^2 n}=8\Lambda\;.
        \end{align*}
        Therefore, by a Chernoff bound, $u$ has degree at least $\Lambda$ in $G_r$ with probability at least $1-2^{-\Lambda/12}$.

        Now consider a vertex $u$ in $G_r$ that is in level $h$ for some $h\leq r$.
        For the same reason, $u$ has, in expectation, at least $(\ell-r)\cdot k\cdot q_r/2\geq k\cdot q_r/2$ neighbors in $G_r$, and the same analysis applies.
        To conclude, we get that the minimum degree of $G_r$ is at least $\Lambda$ with probability at least $1-n/2^{-\Lambda/12}$ by a union bound over at most $n$ vertices in $G_r$.
        Since $\Lambda\geq 2^{\log^{3/4}n}$, we have that $1-n/2^{-\Lambda/12}\geq 1-\chi/2$, which completes the proof of the claim.
    \end{proof}

    \newcommand{\fB}{\mathscr{B}}

    Let us call $\fB$ the low-probability (bad) event ``$G_r$ has minimum degree $\delta(G) < \Lambda$'', where by \Cref{claim:pr}, $\Pr{\fB}\leq \chi/2$.
    Conditioning on $\overline{\fB}$ (the complement of $\fB$), every vertex $v$ is sampled in $S_r$ with probability at most \begin{equation}
        \label{eq:prob}
        p_{r,v} \leq 1/\sqrt{\Lambda} = \prupval \ .
    \end{equation}
    However, we emphasize that conditioning on $\fB$ affects the distribution of random edges in $G_r$; in particular, the existence of incident edges in $G_r$ is not independent anymore.

    ~\\In what follows, we explain how to prove the induction step by bounding the number of active vertices that are covered by $S_r$ in each level $j>r$,
    that is, the active vertices in $\Gamma(S_r)\cap L_j$.
    We partition the random set of active vertices in $\Gamma(S_r)\cap L_j$ into two sets, denoted by $\Bj{j}$ and $\Aj{j}$.
    The set $\Bj{j}$ contains all active vertices $u\in L_j$ that are \emph{inside} a sampled cluster (a cluster for which at least one vertex is sampled into $F_r$). That is, $\Bj{j}$ contains all active vertices $u\in L_j$  for which $C(u)\cap F_r\neq\emptyset$, where $C(u)$ is the cluster of $u$.
    The set $\Aj{j}$ contains the rest of the active vertices in $\Gamma(S_r)\cap L_j$. That is, it contains active vertices in $L_j$ that are covered by $S_r$ but are \emph{outside} all sampled clusters.

    We show that \whp $|{\Bj{j}\cup \Aj{j}}|\leq k \cdot (10\ell/\log^2 n)$, and therefore, $\abs{L_j\setminus\Gamma(S_r)}\geq k(1-10\ell/\log^2 n)$ \whp.
    This means that $G_{r+1}$ has at least $k(1-(r+1)\cdot 10\ell/\log^2 n)$
    active vertices in $L_j$, w.h.p, and it therefore completes the induction step.

    \paragraph{Bounding $|{\Bj{j}}|$, the number of vertices whose cluster is sampled.}
    We prove that for every $j>r$, we have that
    $|{\Bj{j}}|\leq k/\log^2 n$, with probability at least $1-2^{-\sqrt{k}} - \chi/2$.
    Note that the randomness for $\Bj{j}$ is over the choice of $F_r$, i.e., of
    the random set of vertices that are sampled into $F_r$.
    So we may condition on $\overline{ \fB }$, thus use \cref{eq:prob}; randomness over edges of $G_r$ play no other role in this part of the argument.
    Let $X_j$ denote the number of clusters in level $j$ that intersect $F_r$.
    Consider some cluster $C$ in level $j$. Let $P_j(C)$ denote the probability that it intersects $F_r$. We have
    \begin{align*}
        P_j(C)=\Pr{F_r\cap C\neq\emptyset~|~\overline{\fB}}
        = 1-(1-p_r)^{\abs{C}}
        \leq p_r\cdot \abs{C} \leq p_r k^{1/4^j}\;.
    \end{align*}

    Therefore,
    $\Exp{X_j~|~\overline{ \fB }}
        =\sum_{C: \level(C)=j}P_j(C)=p_r\cdot k^{1/4^j}k^{1-1/4^j}= k \cdot p_r$.
    By plugging in \cref{eq:prob}, we get that $k\cdot p_r\leq 4\log n\cdot \sqrt{k}$, which means that $\Exp{X_j}\leq 4\log n\cdot k^{1-1/(2\cdot 4^r)}$.
    Let $\gamma=4\log n\cdot k^{1-1/(2\cdot 4^r)}$.

    Each cluster intersects $F_r$ independently of all other clusters, as vertices are sampled into $F_r$ independently of each other.
    This means that $X_j$ is the sum of $k^{1-1/4^j}$ independent Bernoulli random variables, where each one is equal to $1$ with probability at most $p_r\cdot k^{1/4^j}$.
    We can apply Chernoff's inequality on $X_j$ to get that
    \begin{align*}
        \Pr{X_j\geq 6\gamma~|~\overline{ \fB }}\leq 2^{-\gamma}\leq 2^{-\sqrt{k}}\;.
    \end{align*}

    Finally, we have $|{\Bj{j}}|\leq X_j\cdot k^{1/4^j}$ because each cluster in level $j$ contains at most $k^{1/4^j}$ vertices, and hence by recalling that $j>r$ we get
    \begin{align*}
        |{\Bj{j}}|
        \leq 6\gamma k^{1/4^j}
        =
        24 k^{1-1/(2\cdot4^r)}k^{1/4^j}\log n
        \le 24 k^{1-1/4^{r+1}}\log n \leq k/\log^2 n\;,
    \end{align*}
    as needed. By the law of total probability, this holds with probability $\Pr{\fB} + \Pr{X_j < 6\gamma~|~\overline{ \fB }} \geq 1 - 2^{-{\sqrt{k}}} - \chi/2$ as claimed.

    \paragraph{Bounding $|{\Aj{j}}|$, the number of covered vertices whose cluster is not sampled.}
    To bound $|{\Aj{j}}|$, we consider an equivalent process in which we first sample a set $F_r$, sample the random edges in $G[F_r]$ and compute $S_r$, and only then sample the random edges between $F_r$ and $V\setminus F_r$. Thus, we can consider $F_r$ and $S_r$ as fixed and compute $|{\Aj{j}}|$ conditioned on their choice. Recall that $S_r$ must: be a subset of $F_r$, and be an independent set in $G_r[F_r]$.

    It is important to note that the set of random edges that we sample between $F_r$ and $V\setminus F_r$ are only sampled once, as we only sample edges between active vertices where at least one endpoint is also sampled into $F_r$.
    This means that in the next iteration, all of these edges touch at least one inactive endpoint, and therefore are not sampled again.

    Fix $j>r$. We want to bound the number of active vertices from level $j$ that have a neighbor in $S_r$, but which are not in $\Bj{j}$, i.e., their cluster is not sampled.
    For this, we partition the vertices of $S_r$ according to their level. Let $S_r^i$ denote the subset of vertices in $S_r$ of level $i$ for $0\leq i<\ell$.
    We define a random variable $X^i$ equal to the number of active vertices in level $j$ that have a neighbor in $S_r^i$, which are not in $\Bj{j}$.
    Note that $|{\Aj{j}}|=\sum_{0\leq i<\ell}X^i$. For each $i$,
    we have that $X^i$ is a sum of at most $\abs{L_j}$ indicator random variables, for the events that each active vertex in $L_j\setminus\Bj{j}$ has a neighbor in $S_r^i$. These random variables are independent, because after fixing $F_r$ and $S_r$, we add edges between $S_r$ and vertices in $L_j\setminus\Bj{j}$ independently.

    To bound $X^i$, we split into two cases.

    \underline{When $i\leq r$.}
    Any independent set in $G$ that is contained in level $i$ has at most $k^{1-1/4^i}$ vertices -- at most one vertex in each cluster in level $i$. In particular, $|{S_r^i}|\leq k^{1-1/4^i}$.
    Fix some active vertex $v\in L_j\setminus\Bj{j}$.
    Recall that the probability that a random edge is added between vertices in levels $i<j$ is $q_i=k^{1/4^i-1}/\log ^2 n$.
    The probability that $v\in N(S_r^i)$ is thus equal to
    \begin{align*}
        1-(1-q_i)^{|{S_r^i}|}\leq |S_r^i|\cdot q_i \leq 1/\log^2 n\;.
    \end{align*}
    We therefore get that $X^i$ has the distribution of a Binomial random variable with at most $|{L_j}|$ trials and success probability at most $1/\log^2 n$. Therefore, by a Chernoff bound, we have that $\Pr{X^i\geq 6k/\log^2 n}\leq 2^{-6k/\log^2 n}\leq \chi$.
    We emphasize that in this case, we do not use \cref{eq:prob}, thus do not need to condition on $\overline{ \fB }$. And so edges are indeed added in $G_r$ independently, which allows us to use the Chernoff Bound.

    \underline{When $i>r$.} %
    Recall that $X^i$ is the number of active vertices in $L_j\setminus\Bj{j}$, which have a neighbor in $S_r\cap L_i$.
    We define a new random variable $\hat{X}^i$ equal to the number of active vertices in $L_j\setminus\Bj{j}$, which have a neighbor in $F_r \cap L_i$.
    Note that the latter is a larger set, i.e., $X^i\leq \hat{X}^i$. Thus, a bound on $\hat{X}^i$ directly bounds $X^i$.

    Let us first bound the number of vertices in $F_r \cap L_i$. Let $p$ be the upper bound on $p_{r,v}$ given by \cref{eq:prob}. Conditioning on $\overline{\fB}$, a vertex $v$ joins $F_r$ with probability at most $p_{r,v} \leq p$ and $\Exp{F_r \cap L_i~|~\overline{\fB}} \leq p k$. Since sampling in $F_r$ is independent for every vertex (even with the condition on $\overline{\fB}$), a Chernoff bound shows that
    \begin{align*}
        \Pr{ |F_r \cap L_i| > 6k p }
         & \leq \Pr{\fB} + \Pr{ |F_r \cap L_i| > 6 kp ~|~\overline{\fB}} \leq \chi/2 + 2^{-kp} \leq \chi \ .
    \end{align*}
    On the other hand, we can bound the expected maximum degree in $L_j$ of a vertex in $L_i$ by $k q_{r+1}$ since every edge between the two sets is sampled in the input graph with probability at most $\max\set{q_i, q_j} \leq q_{r+1}$. Note that the bound on expected degrees does not rely on \cref{eq:prob}, hence we do \emph{not} need to condition on $\fB$ and edges are added to the input graph independently. And so, by a Chernoff Bound, every vertex in $L_i$ has at most $2 k q_{r+1}$ neighbors in $L_j$ with probability at least $1 - 2^{-kq_i} \geq 1 - \chi$.
    By union bound, with very high probability, both bounds hold and thus the number of vertices in $L_j$ adjacent to a vertex in $L_i \cap F_r$ is at most
    \[
        \hat{X}_i
        \leq 2q \cdot p \cdot k^2
        = 8k^{1 - \frac{1}{2\cdot 4^r} + \frac{1}
                {4^{r+1}}}/\log n \leq k/\log^2 n\ .
        \tag{using $p=4\log n/k^{1/(2\cdot4^r)}$ and $q \leq q_{r+1} = k^{1/4^{r+1}-1}/\log^2 n$, as $r< i, j$.}
    \]

    \paragraph{Putting Things Together.}
    By union bound over pairs $i, j \in \set{0, 1, \ldots, r}$, we have that $|{\Aj{j}}|+|{\Bj{j}}|\leq k \cdot (10\ell/\log^2 n)$ with probability at least $1 - O(\ell^2 \chi)$.
    By the induction hypothesis, $G_{r}$ has at least $k(1-r\cdot 10\ell/\log^2n)$ active vertices in level $j$. In iteration $r$, the set of vertices in level $j$ that become covered is contained in $\Gamma(S_r)\cap L_j$, where $\Gamma(S_r)\cap L_j\subseteq \Aj{j}\cup\Bj{j}$.
    Therefore, the number of active vertices in level $j$ in $G_{r+1}$ is at least $k(1-(r+1)\cdot 10\ell/\log^2n)$ with probability at least $1- O( \ell^2 \chi )$.
    A union bound over all $r/2 \leq \ell$ iterations implies that the induction hypothesis holds in each iteration, with probability at least  $1- O(\ell^3 \chi) \geq 1 - 2^{2^{-\sqrt{\log n}}}$ by definition of $\chi$.
\end{proof}

\subsection{Maximal Matching}
\newcommand{\Er}{E_{\level\leq r}}
\newcommand{\Erb}{E_{\level< r}}
\renewcommand{\Aj}{E^{\mathrm{out}}}
\renewcommand{\Bj}{E^{\mathrm{in}}}
Denote the algorithm for computing a maximal matching specified in \Cref{prop:repeated-applic} by $\algB$.
In this section, we construct a graph, on which the algorithm $\algB$ takes at least $\Omega(\log\log n)$ iterations to compute an MIS with high probability.
This analysis also implies that the state-of-the art algorithm for computing maximal matchings presented in \cite{behnezhad2023exponentially} takes $\Omega(\log\log n)$ iterations to compute a maximal matching on this graph with high probability, and therefore shows that the analysis in \cite{behnezhad2023exponentially} is tight.

Recall how the $\algB$ algorithm works.
Let $G_0=G$ be an $n$-vertex $m$-edge graph.
In the $r$-th iteration, it randomly partitions the active vertices into
$1/p_{r}$ parts, denoted $(F_1^r,\ldots ,F_{1/p_r}^r)$, where $p_r=1/\sqrt{d(G_{r})}$.
The algorithm then computes the lexicographically first maximal matching on each part, denoted by $(M_1^r,\ldots ,M_{1/p_r}^r)$.
The final phase of the algorithm is called a clean-up step, in which it matches vertices of degree at least $\Delta^{0.92}$.
We ignore the implementation details of this phase, as it adds at most $n/\Delta^{0.01}$ edges to the matching with high probability. Therefore, it does not matter exactly how these high-degree vertices are matched as this algorithm matches too few vertices in each iteration to form a maximal matching faster than in $\Omc[\log(\log(n))]$ iterations.

Let $\Esamp^r \triangleq \bigcup_{i\in[1/p_r]}E(G[F_i^r])$ denote the set of edges that are contained in $F_i^r$ for some $i$. We refer to this set as the set of $r$-active edges.
Let $M_{all}^r$ denote the union of all of the matchings $(M_1^r,\ldots ,M_{1/p_r}^r)$ , and let $\calM^r=\bigcup_{j\in\set{0,1,\ldots, r}}M_{all}^j$.
We define $G_{r+1}=G[V\setminus V(\calM^r)]$.
We say that a vertex is not active in iteration $r$ if it is matched before iteration $r$, and say it is active otherwise.

~\\We construct a hard graph distribution, denoted $\Ghmm$.
We take a graph with $n=k\sum_{i=0}^{\ell-1} 4^{i}= k(4^\ell-1)/3$ vertices, instead of $k\ell$, where $\ell=\log\log(k)/20$.
We partition its vertices into $\ell$ levels, where the level $i$ has $4^i \cdot k$ vertices.
This change will help us prove that the average degree does not decrease too much from iteration to iteration.
We partition the vertices of level $i$ into $4^i\cdot k^{1-1/4^i}$ clusters, where each is a clique of size $k^{1/4^i}$.
We also make sure that each cluster has an even number of vertices, which
means the graph has a perfect matching.
This means that every maximal matching matches at least $n/2$ vertices, and that if the current graph contains more than $n/2$ active vertices, then the computes matching is not maximal.
Finally, random edges are added in the same manner as in the MIS proof: Let $u$ be some vertex in the $i$-th level, and let $v$ be some vertex in the $j$-th level, where $i<j$. We add the edge $(u,v)$ with probability $q_i\triangleq (k^{1/4^i-1})/\log^2 n$.
\begin{theorem}\label{thm:mm lower bound}
    On the hard graph distribution $\Ghmm(k,\ell)$, where $\ell=\log(\log(k))/20$,
    the algorithm $\algB$ takes $\Omc[\log(\log n)]$ iterations to compute a maximal matching, with probability at least $\spa$.
\end{theorem}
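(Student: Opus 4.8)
\emph{The plan.} I would follow the skeleton of the MIS lower bound (\cref{thm:hard graph}, via \cref{lemma:vertices in levels}), the one genuinely new feature being that a \emph{constant} fraction of the vertices must survive — which is exactly what dictates the factor-$4$ growth of the level sizes $\abs{L_i} = 4^i k$. Since every cluster of $\Ghmm(k,\ell)$ has even size and the clusters partition $V$, the graph has a perfect matching, so any maximal matching covers at least $n/2$ vertices; hence it suffices to show that after $\ell/2 = \Omega(\log\log n)$ iterations of $\algB$, strictly more than $n/2$ vertices are still active, with probability at least $\spa$. The core is the analogue of \cref{lemma:vertices in levels}: for every $r \in \set{0, 1, \ldots, \ell/2}$ and every $r \le j < \ell$, the graph $G_r$ has at least $\brak{\tfrac23 - 10\ell r/\log^2 n}\abs{L_j}$ active vertices in $L_j$, with probability at least $\spa$. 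Granting this, after $\ell/2$ iterations the two deepest levels $L_{\ell-1}$ and $L_{\ell-2}$, which together comprise a $\tfrac{3\cdot 4^{\ell-1} + 3\cdot 4^{\ell-2}}{4^\ell - 1} \ge \tfrac{15}{16}$ fraction of the $n = k(4^\ell-1)/3$ vertices, each retain a $\tfrac23(1-o(1))$ fraction of their vertices; thus at least a $\tfrac{2}{3}\cdot\tfrac{15}{16}(1-o(1)) = \tfrac58(1-o(1)) > \tfrac12$ fraction of all vertices is still active, so the matching produced so far cannot be maximal.

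\emph{Proving the lemma.} I would induct on $r$ (the case $r = 0$ is trivial, as $G_0 = G$). Fix $r$ and $j \ge r$. In iteration $r$, an active level-$j$ vertex can leave only by being matched: (i) to a vertex of a strictly smaller level; (ii) to a vertex of $L_j$ (a clique-mate placed in the same part $F^r_i$, since random edges never join two vertices of the same level); (iii) to a vertex of a strictly deeper level via a random edge; or (iv) during the clean-up step. The bound on (i) is purely combinatorial and is the source of the surviving $2/3$: a matching pairs each lower-level vertex with at most one vertex of $L_j$, so \emph{over all iterations combined} the number of level-$j$ vertices ever matched to a strictly smaller level is at most $\sum_{i<j}\abs{L_i} = \tfrac{4^j-1}{3}k \le \tfrac13 \abs{L_j}$ — an inequality valid for \emph{every} maximal matching, enabled precisely by the factor-$4$ growth (with growth factor $3$ one would get only $1/2$, not enough). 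For (ii)--(iv) I must show $o(\abs{L_j})$ loss per iteration; the key quantity is the partition probability $p_r = 1/\sqrt{d(G_r)}$, which I bound from above — and here the matching case is \emph{easier} than \cref{claim:pr}, requiring no Chernoff bound. In $G_r$ the degree sum is dominated by the clique edges inside $L_r$, which are deterministic once the active-vertex sets are fixed: the QM--AM inequality applied to the sizes of the $4^r k^{1-1/4^r}$ clusters of $L_r$, together with the induction hypothesis $\abs{L_r \cap V(G_r)} \ge \tfrac23(1-o(1))\abs{L_r}$ and $n(G_r) = \Theta(4^\ell k)$, gives $d(G_r) = \Omega\brak{4^{r-\ell}k^{1/4^r}}$, hence deterministically on the induction hypothesis $p_r = O\brak{2^{\ell-r}k^{-1/(2\cdot 4^r)}}$.

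\emph{Controlling losses of type (ii)--(iv).} With this bound on $p_r$, for every $j > r$ the expected number of clique-mates of a fixed level-$j$ vertex, or of its random neighbours in $L_j$ or in deeper levels, that are placed in its part is $O\brak{2^{\ell-r}k^{1/4^j - 1/(2\cdot 4^r)}}$; since $j > r$ gives $\tfrac1{4^j} - \tfrac1{2\cdot 4^r} \le -\tfrac1{4^{r+1}} < 0$, this is $k^{-\Omega(1/4^r)}\polylog(n) = o(1)$. Thus the number of level-$j$ vertices matched by (ii) or (iii) in iteration $r$ is at most the number with such a same-part neighbour, which by the method of bounded differences for the clique-collision count and a Chernoff bound for the random-edge count (using that, after fixing the partition, the relevant random edges of distinct level-$j$ vertices are disjoint, hence independent — the same staged revelation as in the proof of \cref{lemma:vertices in levels}, which also neutralises the lexicographic-first tie-breaking) is $o(\abs{L_j})$ with very high probability. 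For (iv), we invoke the generic guarantee of \cite{behnezhad2023exponentially} that after the partition-match phase all but $O(n/\Delta(G_r)^{0.01})$ vertices have residual degree below $\Delta(G_r)^{0.92}$; since some $L_{\ell-1}$-cluster still has $\ge \tfrac23(1-o(1))k^{1/4^{\ell-1}} - 1$ active vertices, $\Delta(G_r) \ge k^{\Omega(1/4^{\ell-1})} = 2^{\Omega((\log n)^{9/10})}$, so the clean-up matches only $o(\abs{L_j})$ vertices too. Summing (ii)--(iv) over the at most $\ell/2$ iterations keeps the total at $O(\ell^2/\log^2 n)\abs{L_j} = o(\abs{L_j})$; combined with the combinatorial bound on (i) this yields the claimed $\brak{\tfrac23 - 10\ell r/\log^2 n}\abs{L_j}$, and a union bound over iterations and over the $O(\ell^2)$ pairs of levels — with the per-event error $\chi = \Theta\brak{2^{-2\sqrt{\log n}}/\ell^3}$ of \cref{lemma:vertices in levels} — gives the stated success probability.

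\emph{Main obstacle.} I expect the principal difficulty to be making the bookkeeping for losses of type (ii)--(iv) genuinely uniform over \emph{all} levels $j \ge r$ (not just the two deepest) while keeping the accumulated failure probability below $2^{-\SL}$, and — as in the MIS argument — ensuring that the conditioning involved (on the deterministic degree lower bound, and on the lexicographically-first behaviour of the per-part matchings) does not corrupt the independence of the random edges that drive the type-(iii) estimate; the fix, as there, is to reveal the partition, the within-part edges and the matchings $M^r_i$ first, and only then the cross-part random edges as fresh coins. A secondary point is disposing of the clean-up phase without tracking its implementation: one needs the residual-degree guarantee of \cite{behnezhad2023exponentially} applied \emph{inside} $G_r$, together with the fact that the deepest clusters keep essentially their full size throughout the first $\ell/2$ iterations, so that $O(n/\Delta(G_r)^{0.01})$ is indeed negligible against $\abs{L_j}$ for every relevant $j$.
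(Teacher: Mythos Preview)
Your proposal is correct and follows essentially the same inductive skeleton as the paper's proof of \cref{prop:mm lb main}: the combinatorial $\tfrac{1}{3}$-loss to shallower levels, the per-iteration $o(\abs{L_j})$ loss to in-cluster and cross-cluster collisions controlled via the bound on $p_r$, and the negligible clean-up contribution. One pleasant difference is that you bound $p_r$ deterministically from the clique edges in $L_r$ via QM--AM, whereas the paper's \cref{claim:pr-mm} uses the random inter-level edges and a Chernoff bound; your route is cleaner and avoids conditioning on an extra high-probability event, though note that your staged-revelation remark (``cross-part random edges as fresh coins'') is phrased for the MIS setting and should be adjusted --- in the matching argument the relevant random edges are the \emph{within-part} cross-level ones, which is exactly how the paper handles its $\Aj$ term.
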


We use the following lemma.
\begin{lemma}\label{prop:mm lb main}
    For every $r\in\set{0,1,\ldots,\ell/2}$ and every $r\leq j<\ell$, the graph $G_r$ has at least $(4^j\cdot k\cdot \frac{2}{3})\cdot (1-10\ell\cdot r/\log^2n)$ active vertices in $L_j$, with probability at least $\spa$.
\end{lemma}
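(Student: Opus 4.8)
The plan is to mirror the induction used for \Cref{lemma:vertices in levels}, adapting the bookkeeping to the matching setting and to the modified level sizes ($4^i k$ vertices in level $i$, hence $4^i k^{1-1/4^i}$ clusters of size $k^{1/4^i}$). I would induct on $r$; the base case $r=0$ is immediate since every vertex is active and level $j$ has exactly $4^j k$ vertices. For the induction step, fix $r<\ell/2$ and $r\le j<\ell$. As in the MIS proof, I would first establish the analogue of \Cref{claim:pr}: conditioned on the induction hypothesis, the average degree of $G_r$ is large, so that $p_r = 1/\sqrt{d(G_r)} \le \prupval$. Here the key new point is that the \emph{average} degree (not minimum degree) must stay large --- this is exactly why the construction inflates the level sizes by factors of $4$: even after removing up to a constant fraction of the vertices in levels $\ge r$, and all vertices in levels $<r$, a vertex in the last level still has $\Omega(k q_r) = \Omega(k^{1/4^r}/\log^2 n)$ expected neighbors, and since the last level holds a $1-o(1)$ fraction of all active vertices, the average degree of $G_r$ is $\Omega(k^{1/4^r}/\log^2 n)$ with probability $1-\chi$ for $\chi = \Theta(2^{-2\sqrt{\log n}}/\ell^3)$. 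This gives the needed upper bound on the sampling probability $p_r$.

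Next I would bound, for each $j>r$, the number of active vertices in level $j$ that become \emph{matched} during iteration $r$. A vertex $v \in L_j$ can only be matched in iteration $r$ if (a) it is matched inside its own part $F_i^r$ --- which for the clique structure happens essentially only when at least one other vertex of its cluster lands in the same part, the ``cluster is sampled'' event --- or (b) it is matched to some vertex via an inter-cluster random edge inside its part, or (c) it is matched in the clean-up phase. For (a)+(b) I would reuse the two quantities $\Bj{j}$ and $\Aj{j}$ from the MIS proof (the new \verb|\renewcommand| makes these the corresponding edge sets): $\Bj{j}$ counts active vertices in sampled clusters of level $j$, bounded by $X_j \cdot k^{1/4^j}$ where $X_j$ is the number of sampled level-$j$ clusters, and a Chernoff bound on $X_j$ with $\Exp{X_j \mid \overline{\fB}} \le k p_r$ (the extra $4^j$ factor in the cluster count is absorbed since each cluster is still sampled with probability $\le p_r k^{1/4^j}$, and $4^j \le 4^\ell = \polylog n$) gives $|\Bj{j}| \le 4^j k/\log^3 n$, say. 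For $\Aj{j}$ --- vertices matched via a random edge to a vertex of $S_r$-type (here: the matched endpoint) in an unsampled cluster --- I would split on the level $i$ of the partner exactly as in the MIS proof: for $i\le r$ use that any matching restricted to level $i$ has $\le 4^i k^{1-1/4^i}$ edges and each random edge has probability $q_i$, and for $i>r$ bound the larger quantity $\hat X^i$ counting neighbors of $F_r \cap L_i$ using the degree bound $2kq_{r+1}$ and size bound $6kp$ on $F_r \cap L_i$. The clean-up term (c) contributes at most $n/\Delta(G_r)^{0.01}$ matched vertices total, which (since $\Delta(G_r)$ is polynomially large, $\ge \Lambda$) is $o(4^j k/\log^2 n)$ and so negligible. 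Summing over all pairs $i,j$ and all $\le \ell$ iterations via a union bound, with total error $O(\ell^3 \chi) \le 2^{-\Theta(\sqrt{\log n})}$, yields that at most a $(10\ell r/\log^2 n)$-fraction of level $j$ has been matched, proving the induction step --- note the surviving fraction is stated relative to $4^j k \cdot \frac23$ rather than $4^j k$; the factor $\frac23$ is slack that comes from cluster-size parity rounding (making each cluster even can shave off vertices) plus the fact that a matching inside a sampled clique matches essentially \emph{all but at most one} of its sampled-cluster's active vertices rather than just the clustermates, so I would track this $\frac23$ buffer carefully through the union bound.

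The main obstacle I anticipate is handling the conditioning on the bad event $\fB = \{d(G_r) \text{ small}\}$ cleanly: as in the MIS proof, conditioning on $\overline{\fB}$ destroys independence of the random edges incident to $G_r$, so one must be careful to use \cref{eq:prob} (and hence condition on $\overline\fB$) \emph{only} for the sampling step $F_r$, while all Chernoff bounds over the random \emph{edges} of the input graph (degree bounds, the $i\le r$ case of $\Aj{j}$, the degree bound in the $i>r$ case) must be argued on the unconditioned input distribution where edges are independent, and then combined by the law of total probability. A secondary subtlety is verifying that the inflated level sizes really do keep $d(G_r)$ from collapsing --- one needs that after $\ell/2$ iterations the last two levels still dominate the vertex count, which is where the $4^j$ weighting and the fact that $\ell = \log\log k/20$ is small enter. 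Once the analogue of \Cref{prop:mm lb main} (the present statement) is in hand, \Cref{thm:mm lower bound} follows: after $\ell/2$ iterations at least $\frac23(1-o(1))$ of the vertices of level $\ell/2$ are active, and since levels $\ge \ell/2$ contain a $1-o(1)$ fraction of all vertices (geometric growth with ratio $4$), more than $n/2$ vertices remain active, so the computed matching cannot be maximal in a graph with a perfect matching --- hence $\Omega(\log\log n)$ iterations are required, with probability at least $\spa$.
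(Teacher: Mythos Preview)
Your plan tracks the MIS argument closely, but there is a genuine gap in how you handle the low levels $i\le r$ and, relatedly, in your explanation of where the $\tfrac{2}{3}$ factor comes from.

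You write that for $i\le r$ you will ``use that any matching restricted to level $i$ has $\le 4^i k^{1-1/4^i}$ edges''. This is false: that bound is the number of \emph{clusters} in level $i$, which caps the size of any independent set in $L_i$, but it does not cap the size of a matching. A matching inside level $i$ can have up to $4^i k/2$ edges, and a matching between $L_i$ and $L_j$ can have up to $|L_i|=4^i k$ edges. So the per-iteration argument that worked for MIS (where $|S_r^i|\le k^{1-1/4^i}$ and hence coverage from level $i$ is $\le k/\log^2 n$) simply breaks for matching: a single iteration can match essentially all of $L_i$ to vertices in $L_j$, deactivating up to $4^i k$ vertices of $L_j$.

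The paper's proof does not try to bound this per iteration. Instead it observes that, cumulatively over \emph{all} iterations $0,\ldots,r$, the matching edges of level at most $r$ (those with at least one endpoint in $\bigcup_{i\le r} L_i$) can deactivate at most $\sum_{i\le r} 4^i k = \tfrac{4^{r+1}-1}{3}k$ vertices in higher levels, because each such edge consumes one low-level endpoint. Since level $j\ge r+1$ has $4^j k\ge 4^{r+1}k$ vertices, at most a third of them are lost this way. \emph{That} is the source of the $\tfrac{2}{3}$; it has nothing to do with parity rounding or in-cluster saturation as you suggest, and it is precisely why the construction inflates level sizes geometrically by $4$. With this global bound replacing your $i\le r$ case, the remaining edges to control are those of level $>r$ in $\calM^r$, and the paper splits them into in-cluster and out-cluster active edges and shows each contributes only $o(k)$ per level --- roughly as you outline for your cases (a) and (b), though the paper argues on edges rather than vertices and uses Hoeffding for the in-cluster count.
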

\Cref{prop:mm lb main} implies that the matching after $\ell/2$ iterations,
the number of active vertices is at least
\begin{align*}
      & \sum_{j=\ell/2}^\ell 4^{j}k\cdot (1-o(1)) \cdot 2/3                         \\
    = & \frac{2k\cdot (1-o(1))}{3} \cdot 4^{\ell/2}\cdot \frac{4^{\ell/2 + 1}-1}{3} \\
    = & \frac{2k\cdot (1-o(1))}{9} \cdot (4^{\ell+1}-4^{\ell/2})                    \\
    = & \frac{8n\cdot (1-o(1))}{9} > 3n/4\;.
\end{align*}
And if the number of active vertices is at least $3n/4$, then the matching is of size smaller $n/2$. Since $G$ contains a perfect matching, the algorithm is not done yet.

\begin{proof}[Proof of \Cref{prop:mm lb main}]
    We prove the lemma by induction on $r$, where the base case $r=0$ is trivial, as $G_0=G$, and then every vertex is active.
    We need the following bound on the sampling probability $p_r$.
    \begin{claim}\label{claim:pr-mm}
        Assuming the induction hypothesis holds for $r-1$, we have that
        \begin{equation}
            p_{r} \leq \log n / k^{1/(2\cdot 4^r)}
            \label{eq:prob-matching}
        \end{equation}
        with probability at least $1-2^{-\SL}$.
    \end{claim}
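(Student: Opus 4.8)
The plan is to reproduce the argument of \cref{claim:pr} (the MIS counterpart), tracking the \emph{average} degree instead of the minimum one. Since $p_r = 1/\sqrt{d(G_r)}$, the bound \cref{eq:prob-matching} is equivalent to $d(G_r) \ge k^{1/4^r}/\log^2 n$; and since the average degree of a graph is never below its minimum degree, it suffices to prove that $\delta(G_r) \ge \Lambda' \triangleq k^{1/4^r}/\log^2 n$ with probability at least $1-2^{-\SL}$. I would first record, exactly as for $\Lambda$ in \cref{claim:pr}, that $\Lambda' \ge 2^{\log^{3/4} n}$: this follows from $\log k / 4^r \ge 2\log^{3/4} k$, which in turn holds since $r \le \ell/2$ with $\ell = \log\log k / 20$. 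This lower bound on $\Lambda'$ is what will let the final union bound beat $2^{-\SL}$.

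Next I would lower-bound the degree in $G_r$ of an arbitrary active vertex $u$ at some level $h$. By the induction hypothesis for $r-1$, the graph $G_{r-1}$ has at least $(4^j k\cdot\tfrac{2}{3})(1-10\ell(r-1)/\log^2 n) \ge 4^j k/2$ active vertices in every level $j \ge r-1$; since only an $o(1)$ fraction of each level is deactivated per iteration (which is precisely the content of the main induction step, and bounding that loss for iteration $r-1$ needs only the already-available bound on $p_{r-1}$, so there is no circularity), the same holds for $G_r$ in every level $j \ge r$ — in particular level $r$ keeps at least $4^r k/2$ active vertices. When $h \ne r$, each of these active level-$r$ vertices is joined to $u$ in the input graph by an independent coin of probability $q_r = k^{1/4^r-1}/\log^2 n$, so $u$ has at least $4^r k q_r/2 = 4^{r} k^{1/4^r}/(2\log^2 n) \ge \Lambda'$ neighbours in $G_r$ in expectation (when $h = r$ one uses instead the $q_r$-edges from $u$ to level $r+1$, or the clique inside $u$'s own cluster, both of which give at least as much; and the degenerate case $r=0$, where every vertex is active, is handled by the same computation applied to $G$ itself). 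A Chernoff bound then gives $\deg_{G_r}(u) \ge \Lambda'$ with probability $1-2^{-\Omega(\Lambda')}$, and a union bound over the at most $n$ active vertices, using $\Lambda' \ge 2^{\log^{3/4} n}$, yields $\delta(G_r) \ge \Lambda'$ — hence $d(G_r) \ge \Lambda'$ and \cref{eq:prob-matching} — with probability at least $1 - n\cdot 2^{-\Omega(2^{\log^{3/4} n})} \ge 1 - 2^{-\SL}$.

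The step I expect to require the most care is justifying that those $q_r$-coins are genuinely independent of everything decided in iterations $0,\dots,r-1$, including the conditioning on the low-probability events used there (which, as noted in \cref{lemma:vertices in levels}, does distort the distribution of random edges incident to already-processed vertices). The remedy is the same deferred-revelation device used in the second half of the proof of \cref{lemma:vertices in levels}: expose each inter-level random edge only at the first moment one of its endpoints is sampled into some $F^{r'}_i$; a sampled vertex is matched (or has its cluster matched) and leaves the active set, so every inter-level edge whose \emph{two} endpoints are still active at iteration $r$ has never been examined and is still a fresh fair $q$-coin, no matter what happened — or was conditioned on — earlier. This is also exactly where the $4^{i}$ blow-up of the level sizes in $\Ghmm$ is used: although the small low levels can be all but matched after a few iterations, the levels $j \ge r$ that the hypothesis keeps populated are geometrically large, so the edges landing on level $r$ alone pin $d(G_r)$ at $k^{1/4^r}/\log^2 n$ and the average degree cannot drop too much from one iteration to the next.
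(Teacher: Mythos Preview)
Your overall route is sound but different from the paper's. You lower-bound $\delta(G_r)$ vertex by vertex (porting the argument of \cref{claim:pr}) and then invoke $d(G_r)\ge\delta(G_r)$. The paper instead bounds $d(G_r)$ directly by counting only the random edges between level $r$ and the last level: by the induction hypothesis both levels retain a constant fraction of their vertices, so there are $\Omega(4^r k\cdot 4^{\ell}k)$ active pairs, each an edge with probability $q_r$; a single Chernoff gives $\Omega(k^2 4^{r+\ell}q_r)$ edges whp, and dividing by $n=\Theta(k4^\ell)$ yields $d(G_r)\ge\Omega(kq_r)=\Omega(k^{1/4^r}/\log^2 n)$. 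No case split on the level $h$ of $u$, no union bound over $n$ vertices. Your approach proves the stronger fact $\delta(G_r)\ge\Lambda'$, which is more than is needed here; the paper's is shorter. (Your aside that the $4^i$ blow-up is used ``exactly here'' is a bit off: the min-degree argument of \cref{claim:pr} goes through in the MIS construction with no blow-up, and would do so here too; the blow-up is really used only in the calculation following \cref{prop:mm lb main}, where one needs the high levels to dominate the vertex count.)

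There is, however, a genuine gap in your deferred-revelation paragraph. You assert that a vertex placed in some $F^{r'}_i$ ``is matched (or has its cluster matched) and leaves the active set'', and conclude that edges between still-active vertices are unexamined fresh coins. But in $\algB$ \emph{every} active vertex is placed into some part in every iteration, and the whole content of \cref{prop:mm lb main} is that most of them stay \emph{un}matched. So your rule exposes all inter-level edges at iteration $0$, and the independence you claim does not follow. The device from the MIS proof does not transfer verbatim: in \OneShotReduceMIS a sampled vertex ends up in $S$ or covered by $S$ and is removed either way, whereas membership in a part of the matching partition carries no such guarantee. (The paper is also loose on this point --- it simply asserts that active pairs become edges independently with probability $q_r$ --- so a clean fix would improve on both. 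One option: a fixed inter-level pair lands in the same part in iteration $i$ with probability $p_i$, so with probability $1-\sum_{i<r}p_i = 1-o(1)$ it was never intra-part and its $q_r$-coin is genuinely untouched; in the paper's edge-counting approach this only shrinks the expected edge count by a $1-o(1)$ factor, which is harmless.)
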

    \begin{proof}[Proof of \Cref{claim:pr-mm}]
        Recall that $G_{r+1}=G[V\setminus V(\calM^{r})]$, and that $p_{r}=1/\sqrt{d(G_{r})}$.
        The induction hypothesis states that for every $j\geq r$, the graph $G_r$ has at least
        $(4^j\cdot k\cdot \frac{2}{3})\cdot (1-o(1))$ active vertices.
        We look at the number of edges between level $r$ and level $\ell-1$.
        We first count the number of pairs of active vertices with one endpoint in level $r$ and one in level $\ell-1$.
        There are at least $(4^{r} 2k/3)\cdot (4^\ell 2k/3)\geq k^2\cdot 4^{r+4}$ such active pairs between the levels.
        Because we add an edge between each such pair independently with probability $q_r$, we get that the expected number of edges in the graph is at least $k^2\cdot 4^{r+4}\cdot q_r$, and that the graph has at least $k^2\cdot 4^{r+2}\cdot q_i$ edges, with probability at least $1-2^{-k^2\cdot q_i/64}$, with probability at least $1-2^{-\SL}$.
        This means that the average degree of $G_r$ is at least $k^2\cdot 4^{r+4} \cdot q_r/n \geq k\cdot q_r \cdot 4^{r}$, where the inequality follows because $n\leq k\cdot 4^{\ell+1}$.
        We conclude that the average degree is at least $k\cdot q_r/64$, with probability at least $1-2^{-\SL}$.
        Since the average degree is at least $k\cdot q_r/\log n$, we get that $p_{r+1}\leq 1/\sqrt{k\cdot q_r/\log n}\leq \log n/k^{1/(2\cdot 4^r)}$,
        with probability at least $1-2^{-\SL}$.
        Recall that the clean-up step ensures that
        $p_{r+2}\geq (p_{r+1})^{1/0.92}$ w.h.p, which we indeed satisfy.
    \end{proof}
    \newcommand{\fB}{\mathscr{B}}
    Define $\fB$ the low-probability event that the average degree is too large, so that \cref{eq:prob-matching} holds when $\overline{ \fB }$ occurs.
    We have that $\Pr{\fB}\leq 2^{-\SL}$.

    ~\\We now have all the ingredients to prove the \Cref{prop:mm lb main}.
    We prove the induction step for $r+1$ instead of $r$ for brevity.
    We prove that for every $j\geq r+1$ the graph $G_{r+1}$ has at least $(4^j\cdot k\cdot \frac{2}{3})\cdot (1-10\ell\cdot r/\log^2n)$ active vertices in $L_j$.
    To prove the lemma, we bound the number of edges in $\calM^r$, which is useful as $G_{r+1}=G[V\setminus V(\calM^r)]$.
    We use the following notation.
    We define the set $\Er$ as the set of edges with level at most $r$, where we define the level of an edge $e=(u,v)$ as the minimum between the level of $u$ and the level of $v$.
    We split the edges of $\calM^r$ into two sets: the edges in $\Er$ and other edges.

    \paragraph{Bounding $\abs{\Er\cap \calM^r}$. } We claim that $\abs{\Er\cap \calM^r}\leq \frac{4^{r+1}-1}{3}\cdot k$. This is because the number of vertices with level at most $r$ is at most
    \begin{align*}
        \sum_{i=0}^{r} 4^ik = \frac{4^{r+1}-1}{3}\cdot k\;.
    \end{align*}
    This means that the set of edges $\abs{\Er\cap \calM^r}$ match at most $2\cdot 4^{r+1}\cdot k/3$ vertices in total, where at least half of the vertices are of level at most $r$, otherwise the level of the edges is not at most $r$.
    Therefore, for every $j\geq r+1$, among the $4^j\cdot k$ vertices in level $j$, at most a third of them are matched in $\Er\cap \calM^r$, and the rest are active in $G_{r+1}$.

    ~\\In what follows, we bound the number of remaining edges in $\calM^r$, i.e., those in $\calM^r\setminus\Er$.
    We prove that this set of edges contains at most $\tilde{\mathcal{O}}(n/\SL) \ll k/\log n$ edges, with probability at least $1-2^{-\SL}$.
    This proves that there are at least $(4^j\cdot k\cdot \frac{2}{3})\cdot (1-o(1))$ active vertices in $L_j$ in $G_{r+1}$, for every $j\geq r+1$, as the set $\calM^r\setminus\Er$ can deactivate at most $2k/\log n=o(k)$ vertices in each level.
    This completes the proof, as we show that the induction hypothesis holds for $r+1$.

    ~\\For the rest of this proof, we focus on bounding the number of edges in $\calM^r\setminus\Er$.
    Recall that we say that an edge is active in iteration $i$, if both of its endpoints fall inside the same part $F_{h}^i$ for some $h$, and
    denote by $\Esamp^i$ the set of active edges in iteration $i$.
    Let $\bigcup_{i\leq r}\Esamp^i$ denote the set of edges that are active in all iterations $i\leq r$. Let $E'=\bigcup_{i\leq r}\Esamp^i\setminus\Er$ denote the set of edges of level at least $r+1$ that are active in some iteration $i\leq r$.
    We partition the set of edges $E'$ into two sets of edges $\Bj$ and $\Aj$.
    The set $\Bj$ contains all edges $e\in E'$ for which
    both endpoints belong to the same cluster.
    The set $\Aj$ contains the rest of the edges in $E'$.
    We claim that $\calM^r \setminus\Er\subseteq E'$, which follows as an edge can join the matching $\calM^r$ only if it is active in some iteration $i\leq r$, and therefore $\calM^r \setminus\Er\subseteq E'=\Bj\cup\Aj$.
    To bound the left-hand side, we bound the size of $\Bj$ and $\Aj$, separately.

    ~\\We show that $\Bj$ contains at most $2k/\log n$ edges with probability at least $1-2^{-\SL}$.

    \paragraph{Bounding $|{\Bj}|$, the number of in-cluster edges of level at least $r+1$.}
    For $j\geq r+1$,
    let $Z_j$ be a random variable equal to $|\Bj\cap \Esamp^r|$, which is the number of active edges in iteration $r$ which belong to a cluster of level $j$.
    The probability of an edge becoming active in iteration $r$ is significantly larger than in previous iterations.
    Therefore, it suffices to bound the size of the set $|\Bj\cap \Esamp^r|$, as
    $\sum_{i=0}^{r}|\Bj\cap \Esamp^i|$ is dominated by $r\cdot |\Bj\cap \Esamp^r|$.

    Henceforth, we shall assume $\overline{ \fB }$ holds, so we case use \cref{eq:prob-matching}. Note that the argument here does not rely on the randomness of edges in $G_{r+1}$.

    We compute $\Exp{Z_j~|~\overline{\fB}}$ as follows.
    For every edge $e$ inside a cluster of level $j$, let $X_e$ denote the indicator random variable for the event that $e$ is active in iteration $r$.
    Then, $Z_j$ is the sum of $m_j$ indicator random variables, where
    $m_j\leq 4^j\cdot k^{1-1/4^j}\cdot (k^{1/4^j})^2\leq  4^j\cdot k^{1+1/4^j}$. For each $X_e$ the probability that $X_e=1$ is $p_{r}$, where $p_{r}\leq \log n/k^{1/{2\cdot 4^r}}$.
    Therefore,
    \begin{align*}
        \Exp{Z_j~|~\overline{\fB}}
        =m_j\cdot p_r
        \leq 4^j\cdot k^{1+1/4^j-1/4^{r}} \cdot \log n
        \leq 4^j\log n \cdot k^{1-3/4^{r+1}}\;,
    \end{align*}
    where the last inequality follows because $r<j$.

    We can also present $Z_j$ as the sum of different \emph{independent} random variables, to prove a concentration bound on $Z_j$.
    For every cluster $C$ with $\level(C)=j$, let $X_C$ denote the number of active edges in $C$.
    Note that the random variables $\set{X_C}_{C:\;\level(C)=j}$ are independent (even subject to the conditioning on $\overline{\fB}$). That is because activation of an edge depends only on the part $F_x^r$ that its endpoint belong to and each vertex samples its part independently.
    A cluster in level $j$ contains at most $k^{1/4^j}$ vertices, therefore $X_C\leq (k^{1/4^j})^2$ (always).
    Let $Y_j=\sum_{C:\; \level(C)=j} X_C$.
    We have $Z_j=Y_j$ as they count the same quantity, and therefore $\Exp{Z_j~|~\overline{\fB}}=\Exp{Y_j~|~\overline{\fB}}$.
    We apply Hoeffding's inequality on $Y_j$, as it is the sum of $4^j\cdot k^{1-1/4^j}$ independent random variables in the interval $[0,(k^{1/4^j})^2]$.
    Therefore, we get
    \begin{align*}
        \Pr{Y\geq \Exp{Y} + t~|~\overline{\fB}}
        \leq\exp(-2t^2/ (4^j\cdot k^{1-1/4^j}\cdot (k^{1/4^j})^4))
        \leq\exp(-2t^2/ (4^j\cdot k^{1+3/4^j}))\;.
    \end{align*}
    We plug in $t=n/2^{2\sl}$, where $t\geq 2\Exp{Y}$ and get that
    \begin{align*}
        \Pr{Y\geq \Exp{Y} + t}\leq \Pr{\fB} + \exp(-n^{1/4}/2^{4\sl})\;.
    \end{align*}
    Therefore, \whp we get that $Y_j\leq 2n/2^{2\sl}$. This means that $|\Bj|\leq 2\ell\cdot n/2^{2\sl}$.
    This completes the bound on $|\Bj|$.

    ~\\ In what follows, we bound the number of edges in $\Aj$ that joins the matching $\calM^r$, we denote this quantity by $|\Aj\cap \calM^r|$.

    \paragraph{Bounding $|\Aj\cap \calM^r|$, the number of out-cluster edges of level at least $r+1$, that join the matching $\calM^r$.}
    Let $U$ denote the set of vertices of level at least $r+1$ that are active in iteration $r$, i.e. $U=\bigcup_{j\geq r+1}L_j \setminus V(\calM^r)$.
    For every vertex $v\in U$ we define a random variable $Z_v$.
    Informally, this is an indicator random variable for the event that $v$ ``can'' add an edge $e\in \Aj$ to $\calM^r$.
    Formally, $Z_v$ is defined as follows.
    Given a partition of the vertices into parts $(F_1^r,\ldots,F_{1/p_r}^r)$,
    where a vertex $v$ is in part $F_h^r$ for some $h\in[1/p_r]$, we define $Z_v$ as the indicator random variable for the event that $v$ has a neighbor $u\in U-C(v)$ which is also in $F_h^r$. Since $u$ and $v$ are neighbors, and they belong to different clusters, we have that $u$ and $v$ are not in the same level.

    Let $Z=\sum_{v\in U}Z_v$, i.e., $Z$ is the sum of all $Z_v$ over all active vertices of level at least $r+1$.
    We claim that $Z\geq |\Aj\cap \calM^r|$, which follows as each vertex $v\in U$ can add at most one edge to $\calM^r\cap \Aj$, and if $Z_v=0$, then $v$ does not add any edge to $\calM^r$.

    We are left with bounding $Z$.
    Henceforth, we shall assume $\overline{ \fB }$ holds, so we case use \cref{eq:prob-matching}.
    We sample a partition of the vertices into parts $(F_1^r,\ldots,F_{1/p_r}^r)$, and then we sample the edges between the vertices in $U$.
    This means that the random variables $\set{Z_v}_{v\in U}$ are independent.
    More precisely, these variables are not independent in general because any one of these variables reveals information about the size of a part in the partition, which then restricts the other variable. Yet, after conditioning on the partition, they become independent, which is sufficient for our needs.

    We fix a vertex $v\in U$ and show that $\Exp{Z_v~|~\overline{\fB}}\leq 1/k^{1/4^r}$.
    Assume without loss of generality that $v$ is in the part $F_1^r$.
    The expected number of vertices in $F_1^r$ is $\abs{U}\cdot p_r$.
    Let $N_v$ denote the number of neighbors of $v$ in $F_1^r$ that are not in the same cluster (or level) as $v$.
    Note that every vertex in $F_1^r$ is a neighbor of $v$ independently with probability $q_{\level(v)}\leq q_{r+1}$.
    Then,
    \begin{align*}
        \Exp{N_v~|~\overline{\fB}}\leq \Exp{q_{r+1}\cdot \abs{F_1^r}~|~\overline{\fB}}\leq np_r\cdot q_{r+1}\;.
    \end{align*}
    We plug in the value of $p_r\leq \log n/k^{1/(2\cdot 4^r)}$ from \cref{eq:prob-matching} and $q_{r+1}=k^{1/4^{r+1}-1}/\log^2 n $. Let $\beta=k^{1/4^r}$, then we have that $p_r\leq \log n / \sqrt{\beta}$, and $q_{r+1}=\beta^{1/4}/(k\cdot \log^2 n )$, and therefore,
    \begin{align*}
        n\cdot p_r\cdot q_{r+1}
        = \frac{k(4^\ell-1)}{3}\cdot p_r\cdot q_{r+1}
        = \frac{4^\ell-1}{3}\cdot \frac{k \cdot \log n \cdot \beta^{1/4}}{k\cdot \log^2 n\cdot \sqrt{\beta}}
        = \frac{4^\ell-1}{3\log n\cdot \beta^{1/4}}
        \leq 1/\beta^{1/4} \;.
    \end{align*}
    Therefore, we have that $\Exp{N_v~|~\overline{\fB}}\leq 1/\beta^{1/4}$, and since $Z_v$ is the indicator random variable for the event that $\set{N_v>0}$, we get that $\Exp{Z_v~|~\overline{\fB}}\leq 1/\beta^{1/4}$.
    By Markov's inequality, we get that $\Pr{Z_v>0~|~\overline{\fB}}\leq \Exp{Z_v~|~\overline{\fB}}\leq 1/\beta^{1/4}$.

    Next, we apply a Chernoff bound on $Z$, which is a Binomial random variable with $\abs{U}$ trials and success probability at most $1/\beta^{1/4}$.
    Therefore, the expectation of $Z$ is at most $n/k^{1/4^r}\ll n/\SL$, and therefore by a Chernoff bound, we get that $Z\leq 6n/\SL$ with probability at least $1-2^{-n/\SL}+ \Pr{\fB}$.
    This proves that
    \begin{align*}
        \Pr{|\Aj\cap \calM^r|> 6n/\SL}
        \leq \Pr{\fB} + \Pr{|\Aj\cap \calM^r|> 6n/\SL~|~\overline{\fB}}
        \leq 2\cdot 2^{-n/\SL}\;,
    \end{align*}
    which completes this part of the proof.

    \paragraph{Putting It All Together.}
    Fix some level $j\geq r+1$.
    We showed that $\Er\cap \calM^r$ covers at most a fraction of $2/3$ of the vertices at level $j$.
    For the remaining edges in $\calM^r$, i.e., $\calM^r\setminus\Er$, we defined the sets $\Bj$ and $\Aj$ and showed that each set contains at most $\BO{2n/2^{\SL}}$ and $6n/\SL$ edges, respectively.
    This means $\calM^r\setminus\Er$ contains at most $o(k)$ edges with high probability.
    This means that at least $2(1-o(1))/3$ fraction of the vertices in level at $j\geq r+1$ are active in $G_{r+1}$, which is what we wanted to prove.
    This completes the proof of the induction step, and therefore the proof of \Cref{prop:mm lb main}.
\end{proof}

\bibliographystyle{alpha}
\bibliography{refs}

\appendix
\section{Missing Proofs for Degree Reductions}
\label{appendix:deg-reduction}

\subsection{Maximal Independent Sets}

\begin{claim}\label{claim:w1}
    After calling $\OneShotReduceMIS$ with $p=1/\sqrt{\Delta(G)}$, w.h.p, the graph $F$ contains at most $36n$ edges.
\end{claim}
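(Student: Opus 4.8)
The plan is to bound $|E(F)|$, where $F=G[V_F]$ and each vertex joins $V_F$ independently with probability $p=1/\sqrt{\Delta(G)}$; once we know $|E(F)|=\BO n$, one designated node can gather all of these edges and compute the LFMIS locally in $\BO 1$ rounds via Lenzen's routing (\cref{lemma:routing}), which is what the claim asks for. Write $\Delta=\Delta(G)$. For each $e\in E(G)$ let $X_e$ indicate that both endpoints of $e$ land in $V_F$; then $|E(F)|=\sum_e X_e$, $\Pr{X_e=1}=p^2=1/\Delta$, and since $|E(G)|\le n\Delta/2$ we get $\Expp{|E(F)|}=|E(G)|/\Delta\le n/2$. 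The only issue is concentration, and --- in contrast to \cref{lem:w2} --- here $\Delta$ may be arbitrarily large, so the bounded-differences argument of \cref{lem:w2} (Lipschitz constant $\Delta$) does not suffice on its own. I would split into two regimes according to the size of $\Delta$.

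\emph{Small degree, $\Delta\le n^{1/4}$.} Here I would repeat the proof of \cref{lem:w2}: view $|E(F)|$ as a function of the $n$ independent indicators recording whether each $v\in V$ lies in $V_F$, and note that changing one of them moves $|E(F)|$ by at most $\deg_G(v)\le\Delta\le n^{1/4}$. The method of bounded differences (\cref{lemma:bounded difference}) then gives
\[
\Pr{\,|E(F)|\ge\Expp{|E(F)|}+n\,}\ \le\ \exp\!\brak{-\frac{2n^2}{n\Delta^2}}\ \le\ \exp\!\brak{-2\sqrt n}\ \ll\ 1/\poly(n)\ ,
\]
so $|E(F)|\le 3n/2\le 36n$ w.h.p.

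\emph{Large degree, $\Delta> n^{1/4}$.} Now $p=1/\sqrt\Delta< n^{-1/8}$ is small, and I would control the sample size and the degrees inside the sample separately. First, $|V_F|$ is a sum of independent indicators with mean $np=n/\sqrt\Delta\ge\sqrt n$ (using $\Delta\le n$), so a Chernoff bound yields $|V_F|\le 2n/\sqrt\Delta$ with failure probability $e^{-\Omega(\sqrt n)}$. Second, for a fixed $v\in V$ the size of $N_G(v)\cap V_F$ is a sum of $\deg_G(v)\le\Delta$ independent indicators (not involving the event $v\in V_F$) with mean $\deg_G(v)\,p\le\sqrt\Delta$; since $\sqrt\Delta> n^{1/8}\gg\log n$, a Chernoff bound together with a union bound over the at most $n$ vertices gives that w.h.p.\ every $v$ has $|N_G(v)\cap V_F|\le 8\sqrt\Delta$. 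On the intersection of these two events,
\[
|E(F)|\ =\ \tfrac12\sum_{v\in V_F}|N_G(v)\cap V_F|\ \le\ \tfrac12\cdot\frac{2n}{\sqrt\Delta}\cdot 8\sqrt\Delta\ =\ 8n\ \le\ 36n\ .
\]
In both regimes $|E(F)|\le 36n$ w.h.p., which is the claim; the $\BO 1$-round implementation of $\OneShotReduceMIS$ then follows since the $\BO n$ edges of $F$ can be routed to one collector node with no node sending or receiving more than $\BO n$ messages, so \cref{lemma:routing} applies. I expect the large-degree regime to be the only real obstacle: there the bounded-differences estimate is useless and one must instead route around it through $|V_F|$ and the sampled degrees, while the small-degree case and all the Chernoff bounds are routine.
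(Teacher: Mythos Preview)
Your proposal is correct. Your large-degree regime is exactly the paper's proof: the paper bounds $|V_F|\le 6np$ and every vertex's sampled degree by $6\Delta p$ via two Chernoff bounds, then multiplies to get $|E(F)|\le 36n\Delta p^2=36n$. The paper does \emph{not} case-split; it applies this argument uniformly and simply writes $2^{-\sqrt{\Delta}}\le n^{-c}$ for the per-vertex tail, tacitly assuming $\sqrt{\Delta}\gtrsim\log n$ (which is harmless in context, since for polylogarithmic $\Delta$ the whole graph already has $\TO{n}$ edges and one never invokes the claim). Your small-degree branch via bounded differences is therefore extra care the paper omits; it buys you a clean statement valid for all $\Delta$, at the cost of a two-case proof where the paper gets away with one line. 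Either way the content is the same Chernoff-times-Chernoff estimate, and your concern that ``the bounded-differences estimate is useless for large $\Delta$'' is precisely why the paper never uses McDiarmid here at all.
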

\begin{proof}
    We show that, w.h.p, at most $6np$ vertices were sampled into $V_F$, and that each vertex in the graph has at most $6\Delta(G) p$ neighbors in $F$. In particular, the number of edges in $F$ is at most $36n\Delta(G) p^2=36 n$.
    As each vertex is sampled in $F$ independently, we apply the classic Chernoff bound. Thus, we sample more than $6np$ vertices in $F$ with probability at most  $2^{np}\le \exp(-n/\sqrt{\Delta(G)}) \le \exp(-\sqrt{n})$.
    The probability that a single vertex has more than $6\Delta(G) p$ neighbors in $F$ is $2^{-\Delta(G) p}=2^{-\sqrt{\Delta(G)}}\leq n^{-c}$.
    The two events hold w.h.p.\ for all nodes by union bound.
\end{proof}

\subsection{Degree Reduction for Maximal Matching}

We use the following lemma, which can be seen as the analog of \cref{lemma:deg red restate} for matching.
\begin{lemma}[{\cite[Proposition 3.2]{behnezhad2019fully}}]\label{lemma:deg red edges restate}
    Let $c \ge 1$ be some constant.
    Let $G$ be a graph with $n$ vertices and $m$ edges,
    and $\pi=(e_1,\ldots,e_m)$ be some ordering of the edges.
    Sample each edge independently with probability $p$, and compute the lexicographically first maximal matching with respect to $\pi$ on the sampled edges. Denote the obtained matching by $M$ and let $H=G[V\setminus V(M)]$.
    Then, with probability at least $1-n^{-c+1}$, the residual graph $H$ has maximum degree at most $\frac{c\log (n)}{p}$.
\end{lemma}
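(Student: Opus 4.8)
The plan is to mimic the sequential/martingale argument used for the MIS sparsification lemma (\cref{lem:mis-sparsification-non-uniform}), adapting it from ``covered vertices'' to ``matched vertices''. Fix a vertex $v$; I will show that $\Pr{\deg_H(v) \ge t} \le (1-p)^t$ for every integer $t \ge 1$, and then take a union bound over the $n$ vertices with $t = c\log n/p$: this gives $\Pr{\Delta(H) \ge t} \le n (1-p)^{c\log n/p} \le n e^{-c\log n} = n^{-c+1}$, which is exactly the claimed bound.

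To bound $\deg_H(v)$, view the construction of the lexicographically-first maximal matching $M$ as a sequential process: go through $e_1,\dots,e_m$ in the order $\pi$, maintaining a partial matching, and insert $e_i$ into it iff $e_i$ was sampled \emph{and} both endpoints of $e_i$ are currently unmatched. The matched/unmatched status of a vertex is monotone (once matched, it stays matched). If $v$ is matched by $M$ then $\deg_H(v) = 0$ and there is nothing to prove; otherwise every $u \in N_H(v)$ is unmatched in $M$, hence unmatched throughout the process, so at the moment the edge $vu$ was processed \emph{both} endpoints were unmatched --- call this a \emph{test} on $vu$ --- and since $v$ ends up unmatched this test must have \emph{failed}, i.e.\ the edge $vu$ was not sampled. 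Consequently $\deg_H(v)$ is at most the number of edges incident to $v$ that are tested and fail, so $\set{\deg_H(v) \ge t} \subseteq A_t$, where $A_i$ denotes the event that at least $i$ edges incident to $v$ are tested and fail.

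The core step is to show $\Pr{A_t} \le (1-p)^t$. Writing $\Pr{A_t} = \Pr{A_1}\prod_{i=1}^{t-1}\Pr{A_{i+1}\mid A_i}$, the event $A_i$ is determined by the sampling coins of the edges revealed up to and including the $i$-th test; conditioned on $A_i$, the $(i+1)$-st test, if it happens, is performed on a later edge whose sampling coin is independent of everything revealed so far, and $A_{i+1}$ additionally requires that coin to fail, an independent event of probability $1-p$. Hence $\Pr{A_{i+1}\mid A_i} \le 1-p$ and the bound follows, and so does $\Pr{\deg_H(v) \ge t} \le (1-p)^t \le e^{-pt}$.

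The main obstacle is precisely making this martingale step rigorous despite the intricate dependency structure: which edges get ``tested'' depends on earlier coin flips, including flips on edges \emph{not} incident to $v$ (through the matching status of the neighbors $u$). The resolution is the same as in \cref{lem:mis-sparsification-non-uniform}: since all edge-samplings are mutually independent and we reveal them in the fixed order $\pi$, the coin of the $(i{+}1)$-st tested edge is independent of the $\sigma$-algebra generated by everything revealed before it, so the conditional failure probability is exactly $1-p$ no matter how complicated the history was. The remaining arithmetic (the union bound and the estimate $e^{-c\log n} = n^{-c}$) is routine.
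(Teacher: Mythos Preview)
The paper does not give its own proof of this lemma --- it is quoted from \cite{behnezhad2019fully} and used as a black box in the proof of \cref{lem:avg-deg-reduction} for matching. Your argument is correct and is precisely the edge-sampling analogue of the vertex-sampling martingale that the paper \emph{does} write out for MIS in \cref{lem:mis-sparsification-non-uniform}: replace ``covered vertex'' by ``matched vertex'' and ``vertex coin $p_u$'' by ``edge coin $p$'', and the two proofs line up step for step. One small point worth making explicit (you gesture at it but do not state it): a successful test matches $v$ and hence terminates all further tests, so your event $A_i$ coincides with ``at least $i$ tests occur and the first $i$ all fail'', which is what makes the conditioning clean.
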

Using this lemma, and \cref{lem:max-deg-reduction} we prove \cref{lem:avg-deg-reduction}.

\begin{proof}[Proof of \cref{lem:avg-deg-reduction} for matching using \cref{lemma:deg red edges restate,lem:max-deg-reduction}]
    Let $H$ be the residual graph after sampling each edge in $F$ independently with probability $p=1/d$ and computing the lexicographically first maximal matching on $F$.
    By \cref{lemma:deg red edges restate}, w.h.p, the residual graph $H$ has maximal degree $d(c\log n)$.
    To see that this algorithm can be implemented in \CC, we prove that, w.h.p, the number of sampled edges is $O(n)$.
    The number of edges included in the sampled subgraph is distributed as a binomial variable of parameters $m$ and $p$. It follows that $\Expp{X}=mp \le 2n$.
    Applying the classic Chernoff bound, we find $\Pr{X>6\Expp{X}}\leq 2^{-\Expp{X}}=2^{-n}$.

    To complete the proof, note that the maximum degree in the residual graph $H$ is at most $d(c\log n)$ w.h.p, and therefore by applying one iteration of \cref{lem:max-deg-reduction} we get a residual graph $H_2$ with maximum degree
    \begin{align*}
        \Delta(H_2)\leq \sqrt{\Delta(H)}\cdot \log n \leq \sqrt{d}\cdot \log^2 n\;.
    \end{align*}
    By applying the lemma again on the residual graph $H_2$, we can get rid of the extra $\log n$ factor.
    Note that applying \cref{lem:max-deg-reduction} takes $\BO{1}$ rounds, and therefore the proof follows.
\end{proof}

\section{Opportunistic Routing}
\label{appendix:op-routing}

\ThmLocalSim*

\begin{proof}

    Let $r=C\cdot \brak{\log\Delta + \log^a\log n}$. As $\Delta \le 2^{\sqrt{\log n}/4C}$, for $n$ large enough, we have $r\le 0.5\sqrt{\log n}$.
    In particular, we have $\Delta^{2(r+1)}\leq 2^{\log n/4C + \sqrt{\log n}/2C}\leq \sqrt{n}$. Thus, $\Delta$ and $r$ satisfy \cref{eq:assumption-op-routing} and, by \cref{lemma:op-route}, every vertex learns its $r$-hop neighborhood in $O(1)$ rounds of congested clique. This is sufficient for each vertex to simulate its output in any $r$-round \local algorithm.
\end{proof}

\pagebreak
\subsection{Proof of \cref{lemma:op-route}}
We restate the lemma for convenience.
\ThmOpR*
We start with a description of the algorithm that distributes edges randomly to vertices.
\begin{blackbox}{\OpRoute}
    \begin{description}
        \item[Input]: An integer $r \ge 1$ and a graph $G$ such that \cref{eq:assumption-op-routing} holds.
        \item[Output]: Each vertex $v$ knows all edges of $E^r(v)$, w.h.p. %
    \end{description}

    \begin{enumerate}
        \item Let $v_1, v_2, \ldots, v_n$ be the vertices of $G$. Each vertex $v_j$ samples $n$ incident edges $e_1,\dots,e_n$ uniformly at random with replacement.
              Then, vertex $v_j$ sends $e_i$ to $v_i$. Denote by $E_x$ the set of edges received by $x$ during this round.

        \item Partition vertices into $k=\Delta^{r+1}$ predetermined sets $S_1,\dots S_k$ where $S_i$ for each $i\in[k]$ is defined as
              $$S_i = \set{v_{b_i+1}, v_{b_i+2}, \ldots, v_{b_i + k}} \ , \text{for } b_i=\frac{n(i-1)}{k} \ . $$

        \item
              Consider $S_i$ and a vertex $x\in S_i$.
              Let $G_x=(V,E_x)$ denote the subgraph of $G$ formed by the set of edges received by $x$.
              For every vertex $y\in S_i$, denote the subset of edges in $E_x$ at distance at most $r$ from $y$ in the graph $G_x$ by \[ M_{x,y}\triangleq\set{e\in E_x\mid \dist_{G_x}(y,e)\leq r}. \]
              For each $y\in S_i$, vertex $x$ sends $M_{x,y}$ to $y$.
    \end{enumerate}

\end{blackbox}

\begin{proof}
    The algorithm can be implemented in $O(1)$ rounds of congested clique. During the first step, each vertex sends exactly one message to every other one.
    To bound congestion during the last step, observe that each sets $M_{x,y}$ contain at most $\Delta^{r+1}$ many edges. Hence, each vertex $v\in S_i$ sends/receives at most $|S_i|\Delta^{r+1} \le (n/k)\Delta^{r+1} = n$ messages.

    Fix some vertex $y$ and an edge $e\in E^r(y)$. We bound from below the probability that $y$ obtains the edge $e$.
    Without loss of generality, assume that $y\in S_1$ and fix some $x\in S_1$.
    As $e\in E^r(y)$, there must exists a path $P=(y_1,y_2,\dots, y_{i}, y_{i+1})$ where $y_1=y$ and $e=(y_{i}, y_{i+1})$ for some $i \le r$. %
    For each $j\in[r]$, the probability that $x$ receives the edge $(y_j,y_{j+1})$ from $y_j$ is at least $1/\Delta$.  As edges are sent independently during the first step of the algorithm, the probability that $x$ obtains all edges of $P$ is at least $1/\Delta^{r+1}$.
    Since edges are sent to different vertices independently (because we sample with replacement), the probability that $y$  receives $e$ from some $x\in S_1$ is at least
    \begin{align*}
        1-\brak{1-\frac{1}{\Delta^{r+1}}}^{\abs{S_i}}\geq 1-\exp\brak{-\frac{\abs{S_i}}{\Delta^{r+1}}} \ge 1-\exp\parens*{-\frac{n}{\Delta^{2(r+1)}}} \;.
    \end{align*}
    By a union bound, the probability that there exists some edge which $y$ does not learn about is at most
    \[
        \Delta^{r+1} \exp\parens*{-\frac{n}{\Delta^{2(r+1)}}} \le \Delta^{r+1}\exp\parens*{-(r+1)\log\Delta - c\log n} \le n^{-c} \ ,
    \]
    where the first inequality holds from assumption \cref{eq:assumption-op-routing}. The result follows by another union bound over all possible vertices $y$.
\end{proof}

\section{Concentration Inequalities}

\begin{lemma}[Chernoff Bound {\cite{dubhashi2009concentration}}]\label{thm:6chernoff}
    Let $X_1, \ldots, X_n$ be independent random variables with values in $[0,1]$ and $X:=\sum_i X_i$. If $t\geq 6\Exp{X}$, then
    $\Pr{X\geq t}\leq 2^{-t}$.
\end{lemma}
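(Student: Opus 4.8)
The plan is to run the textbook exponential-moment (Chernoff) argument, but to track the constants carefully so that the hypothesis $t \ge 6\Exp{X}$ produces exactly the clean tail bound $2^{-t}$. Write $\mu = \Exp{X} = \sum_i \Exp{X_i}$. We may assume $\mu > 0$ and $t > 0$: if $\mu = 0$ then every $X_i$, and hence $X$, equals $0$ almost surely, and if $t = 0$ the claim is $\Pr{X \ge 0} \le 1$, both trivial.

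First I would apply Markov's inequality to the nonnegative variable $e^{\lambda X}$, for a parameter $\lambda > 0$ to be fixed later: $\Pr{X \ge t} = \Pr{e^{\lambda X} \ge e^{\lambda t}} \le e^{-\lambda t}\,\Exp{e^{\lambda X}}$, and then factor $\Exp{e^{\lambda X}} = \prod_i \Exp{e^{\lambda X_i}}$ using independence. To bound a single factor, I would use that $x \mapsto e^{\lambda x}$ is convex on $[0,1]$ and $X_i$ is $[0,1]$-valued, so $e^{\lambda X_i} \le 1 + X_i(e^{\lambda} - 1)$ pointwise; taking expectations and then applying $1 + y \le e^{y}$ gives $\Exp{e^{\lambda X_i}} \le e^{(e^{\lambda}-1)\Exp{X_i}}$. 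Multiplying over $i$ yields $\Exp{e^{\lambda X}} \le e^{(e^{\lambda}-1)\mu}$, hence $\Pr{X \ge t} \le \exp\parens*{-\lambda t + (e^{\lambda}-1)\mu}$ for every $\lambda > 0$.

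It remains to choose $\lambda$ and discharge the arithmetic. The natural choice $\lambda = \ln(t/\mu)$ is admissible since $t/\mu \ge 6 > 1$, and plugging it in gives $\Pr{X \ge t} \le \exp\parens*{t - \mu - t\ln(t/\mu)}$. Using $t/\mu \ge 6$, I would drop the $-\mu \le 0$ term and bound $\ln(t/\mu) \ge \ln 6$, so the exponent is at most $t(1 - \ln 6) = t\ln(e/6)$; hence $\Pr{X \ge t} \le (e/6)^t \le 2^{-t}$, where the last step is the numerical fact $e/6 < 1/2$.

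There is no genuine obstacle here beyond this last bit of constant-chasing. The only thing worth verifying is that the threshold $6$ in the hypothesis is large enough for the optimized exponent to dominate $-t\ln 2$; writing $a = t/\mu$, this amounts to $\ln a + 1/a \ge 1 + \ln 2$ for all $a \ge 6$, and since $\ln a + 1/a$ is increasing for $a \ge 1$ it suffices to check $a = 6$, where $\ln 6 + 1/6 \approx 1.96 > 1.70 \approx 1 + \ln 2$, with room to spare.
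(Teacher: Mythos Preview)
Your argument is correct: the exponential-moment bound $\Pr{X\ge t}\le \exp\parens*{-\lambda t + (e^\lambda-1)\mu}$ with the choice $\lambda=\ln(t/\mu)$ and the numerical check $e/6<1/2$ delivers exactly $2^{-t}$ under $t\ge 6\mu$, and your edge cases are handled properly. The paper does not prove this lemma at all---it is quoted as a standard concentration inequality with a citation to the Dubhashi--Panconesi textbook---so there is no in-paper argument to compare against; your proof is the textbook one.
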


\begin{definition}[Bounded Differences Property]
    A function $f: \mathcal{X} \rightarrow \mathbb{R}$ for $\mathcal{X}=\mathcal{X}_1 \times \mathcal{X}_2 \cdots \times \mathcal{X}_n$ is said to satisfy the bounded differences property with bounds $c_1, c_2, \ldots, c_n \in \mathbb{R}^{+}$if for all $\bar{x}=$ $\left(x_1, x_2, \ldots, x_n\right) \in \mathcal{X}$ and all integers $k \in[1, n]$ we have
    \begin{align*}
        \sup _{x_k^{\prime} \in X_k}\left|f(\bar{x})-f\left(x_1, x_2, \ldots, x_{i-1}, x_k^{\prime}, \ldots, x_n\right)\right| \leq c_k
    \end{align*}
\end{definition}

\begin{lemma}[{Bounded Differences Inequality \cite[Corollary 5.2]{dubhashi2009concentration}}]
    \label{lemma:bounded difference}
    Let $f: \mathcal{X} \rightarrow \mathbb{R}$ satisfy the bounded differences property with bounds $c_1, c_2, \ldots, c_n$. Consider independent random variables $X_1, X_2, \ldots, X_n$ where $X_k \in \mathcal{X}_k$ for all integers $k \in[1, n]$. Let $\bar{X}=\left(X_1, X_2, \ldots, X_n\right)$ and $\mu=\Expp{f(\bar{X})}$. Then for any $t>0$ we have:
    \begin{align*}
        \Pr{|f(\bar{X})-\mu| \geq t} \leq 2 \exp \left(\frac{-t^2}{\sum_{k=1}^n c_k^2}\right) \;.
    \end{align*}
\end{lemma}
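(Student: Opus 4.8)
The plan is to derive the inequality from the Azuma--Hoeffding martingale inequality applied to the Doob (exposure) martingale of $f(\bar X)$. Let $\mathcal{F}_k = \sigma(X_1,\ldots,X_k)$ and define $Z_k = \mathbb{E}[f(\bar X)\mid \mathcal{F}_k]$, so that $Z_0 = \mu$ and $Z_n = f(\bar X)$, the latter because $f(\bar X)$ is $\mathcal{F}_n$-measurable. Since $X_1,\dots,X_n$ are independent, $Z_k$ is in fact a deterministic function of $(X_1,\dots,X_k)$ obtained by integrating out $X_{k+1},\dots,X_n$. Write $D_k = Z_k - Z_{k-1}$ for the martingale increments, so that $f(\bar X) - \mu = \sum_{k=1}^n D_k$.

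First I would control the conditional range of each $D_k$. Fix a value $(x_1,\dots,x_{k-1})$ of $(X_1,\dots,X_{k-1})$ and set $g(y) = \mathbb{E}\big[f(x_1,\dots,x_{k-1},y,X_{k+1},\dots,X_n)\big]$. Conditioned on $\mathcal{F}_{k-1}$ one has $Z_k = g(X_k)$ and $Z_{k-1} = \mathbb{E}[g(X_k)]$. The bounded differences hypothesis gives $|f(\dots,y,\dots) - f(\dots,y',\dots)| \le c_k$ pointwise in the remaining coordinates; taking expectations over $X_{k+1},\dots,X_n$ preserves this, so $|g(y)-g(y')| \le c_k$ and $g$ ranges over an interval of length at most $c_k$. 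Hence, conditionally on $\mathcal{F}_{k-1}$, the variable $D_k = g(X_k)-\mathbb{E}[g(X_k)]$ has mean zero and is supported on an interval of length $\le c_k$.

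Next I would invoke Hoeffding's lemma: a zero-mean random variable $W$ supported on an interval of length $\le c_k$ satisfies $\mathbb{E}[e^{\lambda W}]\le e^{\lambda^2 c_k^2/8}$ for all $\lambda\in\mathbb{R}$, proved via convexity of $w\mapsto e^{\lambda w}$ and a Taylor bound on the log-MGF. Applying it conditionally gives $\mathbb{E}[e^{\lambda D_k}\mid \mathcal{F}_{k-1}] \le e^{\lambda^2 c_k^2/8}$, and peeling the increments off one at a time with the tower rule yields $\mathbb{E}\big[e^{\lambda(f(\bar X)-\mu)}\big] \le \exp\big(\tfrac{\lambda^2}{8}\sum_{k=1}^n c_k^2\big)$. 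Markov's inequality together with the optimal choice $\lambda = 4t/\sum_k c_k^2$ gives $\Pr{f(\bar X)-\mu \ge t}\le \exp\!\big(-2t^2/\sum_k c_k^2\big)$; since $-f$ satisfies the bounded differences property with the same constants $c_1,\dots,c_n$, the same bound holds for the lower tail, and a union bound over the two tails produces the stated factor $2$.

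I expect the main obstacle to be the second step: cleanly passing from the combinatorial bounded-differences assumption to a bound on the conditional \emph{range} (not merely the conditional variance) of the increments $D_k$, being careful that integrating out the later coordinates preserves the pointwise Lipschitz-type bound. Once that and Hoeffding's lemma are in place, the exponential-moment tensorization and the Chernoff optimization are entirely routine.
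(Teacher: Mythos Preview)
Your proof is correct and is the standard textbook derivation of McDiarmid's inequality via the Doob exposure martingale and Azuma--Hoeffding. The paper does not prove this lemma at all: it is simply quoted from \cite{dubhashi2009concentration} as a black-box concentration tool, so there is no ``paper's own proof'' to compare against. As a side remark, your argument actually yields the sharper exponent $-2t^2/\sum_k c_k^2$ rather than the $-t^2/\sum_k c_k^2$ stated in the paper; the weaker form follows immediately.
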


\end{document}